\documentclass[letterpaper,twocolumn,10pt]{article}
\usepackage{usenix2020}

\interfootnotelinepenalty=10000

\usepackage[english]{babel}
\usepackage{amsthm}
\usepackage{balance}
\usepackage{graphicx}
\usepackage{multirow}
\usepackage{paralist}
\usepackage{caption}
\usepackage{subcaption}
\usepackage{xspace}
\usepackage{booktabs}
\usepackage{listings}
\usepackage{hyperref}
\usepackage{listings}
\usepackage{enumitem}
\usepackage{siunitx}
\usepackage{breqn}
\usepackage[breakable,theorems,skins]{tcolorbox}

\usepackage{algorithmic}
\usepackage{clipboard}

\xdefinecolor{attackscolor}{RGB}{139,0,0}
\newcommand*\attacklabel[1]{{\small \color{attackscolor} #1}}

\newcommand{\myitem}[1]{\vspace*{0.07in}\noindent\textbf{#1}}
\newcommand{\myitemit}[1]{\vspace*{0.07in}\noindent\textit{#1}}

\newcommand{\system}{\textsc{StarGlider}\xspace}
\newcommand{\approach}{\textsc{StarGlider}\xspace}

\newtheorem{property}{Property}
\newtheorem{corollary}{Corollary}
\newtheorem{mylemma}{Lemma}
\newtheorem{mytheorem}{Theorem}
\newtheorem{assumption}{Assumption}

\DeclareRobustCommand{\mygraybox}[2][gray!10]{%
	\begin{tcolorbox}[   
		breakable,
		left=0pt,
		right=0pt,
		top=0pt,
		bottom=0pt,
		before skip=5pt,
		after skip=5pt,
		colback=#1,
		colframe=#1,
		enlarge left by=0mm,
		boxsep=2pt,
		arc=0pt,outer arc=0pt,
		]
		#2
	\end{tcolorbox}
}

\setlist[itemize]{align=parleft,left=0pt..1em,topsep=0.05in,noitemsep}

\begin{document}

\date{}

\title{Reliable Low-Delay Routing In Space with Routing-Oblivious LEO Satellites}

\author{
Stefano Vissicchio, Mark Handley\\
University College London (UCL)
}

\maketitle

\begin{abstract}
Large networks of Low Earth Orbit (LEO) satellites are being built using
inter-satellite lasers.  These networks promise to offer low-latency
wide-area connectivity, but reliably routing such traffic 
is difficult, as satellites are very resource-constrained and paths
change constantly.

We present \system, a new routing system where path computation is delegated
to ground stations, while satellites are routing-oblivious and exchange no
information at runtime.
Yet, \system satellites effectively support reliability primitives:
they fast reroute packets over near-optimal paths when links fail,
and validate that packets sent by potentially malicious
ground stations follow reasonable paths.
\end{abstract}

\maketitle

\section{Introduction}

LEO constellations promise to support communication between ground stations (GSes) located potentially anywhere on Earth.
Several companies are planning to build such constellations, each with thousands of LEO satellites~\cite{fcc-spacex,fcc-oneweb,fcc-telesat,fcc-norway}.
Satellite launches have started~\cite{spacex-launches}.
SpaceX Starlink already offers Internet connectivity through its partially deployed constellation, although only for short-distance communication and with unreliable performance~\cite{obo-imc22,mellia-imc22}.

Inspired by prior research~\cite{eth-hotnets18,mark-hotnets18,mark-hotnets19}, we consider wide-area low-latency connectivity that fully deployed constellations can offer as a premium, high-revenue service.
Satellites currently being launched support laser inter-satellite links (ISLs).
This opens the possibility to route traffic over satellite paths with delay lower than can be achieved with optical fiber.

\begin{figure}[t]
	\centering
	\includegraphics[width=0.95\columnwidth]{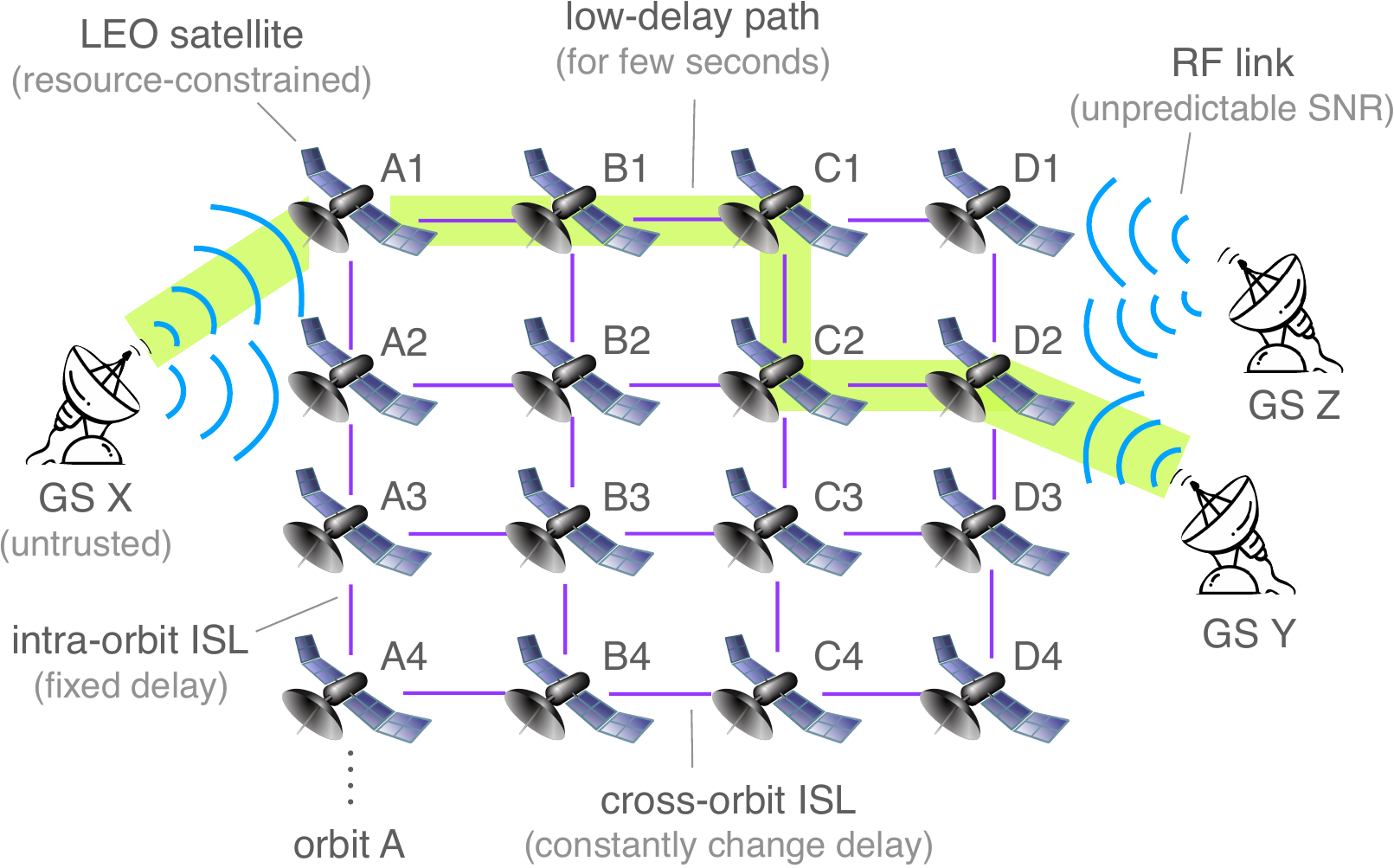}
	\caption{Schematization of a LEO satellite network supporting a wide-area premium service: path $(A1, B1, C1, C2, D2)$ provides lower delay than can be achieved with optical fibers.}
	\label{fig:basic-setup}
\end{figure}

Consider Figure~\ref{fig:basic-setup}.
GSes are connected to one or few satellites via RF (radio) links.
Assuming that GS Z has a wired connection to the Internet, current Starlink's service may allow GS Y to communicate with GS Z via satellite D2, and use GS Z as an Internet proxy.
We instead focus on long-range connectivity (e.g., between GS X and GS Y) over low-latency multi-satellite paths, such as the one highlighted in the figure.

Supporting wide-area connectivity in LEO constellations is hard.
In these networks, the lowest-latency paths change every few seconds.
As satellites orbit around the Earth, the satellites reachable from any given GS change continuously, and so do length and delay of ISLs as well as availability and signal quality of RF links.
Constellations contain many more satellites than existing routing protocols (e.g., OSPF) are designed for.
Finally, satellites are power-constrained: they are solar powered, but need to maintain many simultaneous RF spot beams to GSes, drive four or more ISLs over thousands of kilometers, and possibly perform in-network computations~\cite{oec-asplos20}, all while forwarding packets.
With power at a premium, satellites may simply not be able to perform heavy routing computations. 

In fact, SDN architectures have been deployed~\cite{loon-sigcomm22} for non-terrestrial networks with requirements similar to LEO constellations.
This design delegates path computation to controllers located in on-ground data centers.
Fundamentally, however, controllers cannot guarantee fast response to \emph{unexpected} failures and environmental changes, a concern exacerbated by the increasing popularity of mobile GSes.
For a premium low-latency service, source GSes cannot interrupt connectivity and wait for instructions from controllers whenever their paths are disrupted by an ISL failure or new appearing obstacles.

Delegating routing tasks to premium GSes seems the best choice.
GSes are in an optimal position to compute the shortest paths, given that each GS knows its own \emph{current} RF situation.
Given that satellite orbits are predictable, GSes can also be equipped with enough resources to frequently recompute best paths, using algorithms such as~\cite{mark-hotnets19}.
If each GS source-routes premium traffic it originates, the approach would also naturally scale with the number of customers.

Can we however implement a \emph{reliable} low-latency routing service if GSes source-route premium traffic?
There are two main concerns.
First, we still need to quickly react to unexpected failures that source GSes cannot locally detect.
Our simulations show that it takes up to $\approx$ 100 ms for GSes just to be notified about a failed ISL or satellite, because of GS-to-satellite path propagation delays: we simply cannot drop the premium packets source-routed over disrupted paths for that long.
Second, contrary to SDN controllers, GSes cannot be fully trusted:
source-routing GSes can be compromised and used for attacks such as topology discovery, privilege escalation at specific satellites, or link congestion.

In this paper, we consider an extreme source-routing architecture where constellations hold a \textbf{routing-free core}, meaning that satellites exchange \emph{no} routing information, keep \emph{no} routing table, and perform \emph{no} path computation.
While convenient to minimize satellite resources, this design rules out any possibility to re-use existing techniques (e.g.,~\cite{lfa-rfc5286,mplsfrr-rfc4090})
in order to mitigate the above reliability concerns, as elaborated in \S\ref{sec:problem}.

Yet, we show that constellations with a routing-free core can actually offer the envisioned premium service, reliably.
We present \approach, a system where routing-oblivious satellites both fast reroute failure-affected traffic over low-delay paths, and validate source-routed packets (\S\ref{sec:overview}).
\approach leverages a novel routing theory that characterizes the shape of low-delay paths in LEO constellations irrespectively of exact ISL delays (\S\ref{sec:theory}).
Based on our theory, we devise a compact but expressive path encoding used by GSes to specify source-routed paths, and by satellites to check, forward and fast reroute packets with minimal computation (\S\ref{sec:internals}).

Our simulations confirm \approach effectiveness and resource efficiency (\S\ref{sec:eval}).
Despite requiring virtually no resources on satellites, \system fast reroutes packets over near-optimal paths, typically shorter than state-of-the-art alternatives.
Also, \system's packet validation is both more accurate and 1000x faster than checking path delays.
Our experiments show how such validation prevents satellite-specific and link flooding attacks, while also increasing costs and detectability of DDoS ones.

\section{Background: Satellite Networks}
\label{sec:background}

\begin{figure}[t]
	\centering
	\includegraphics[width=0.7\columnwidth]{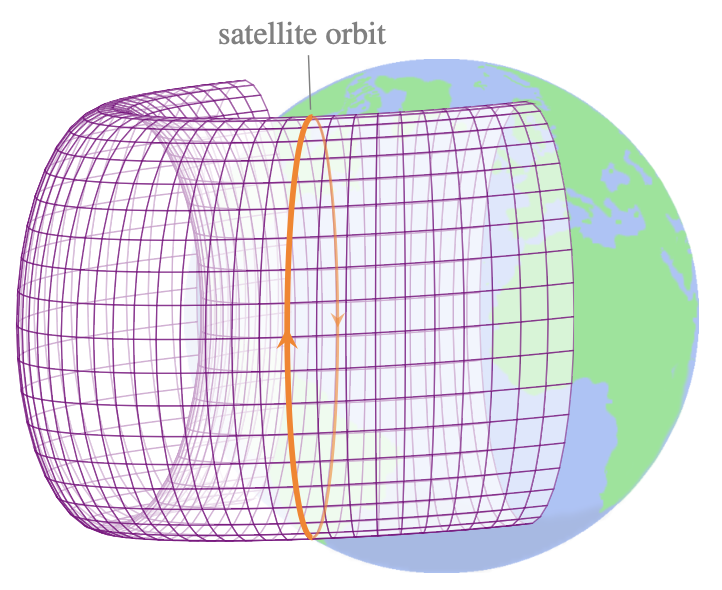}
	\caption{Illustration of a section of the torus formed by inter-satellite links (purple). Note that satellites' orbits are not perpendicular to the equator. Also, in reality, cross-orbit ISLs become longer when they get closer to the equator.}
	\label{fig:isl-torus}
\end{figure}

A LEO constellation includes thousands of satellites that constantly move in space along predefined \emph{orbits}.
Satellite orbits can be represented as paths on the surface of a torus -- see Figure~\ref{fig:isl-torus}.
Each satellite moves towards the north pole (i.e., north-east in Starlink) half of the time, and towards the south pole (i.e., south-east in Starlink) the other half.
In this paper, we use constellation and satellite network as synonyms.

All LEO satellites are at the same altitude and have the same speed
\footnote{Technically, constellations may include several shells, each with thousands of satellites.
Satellites in different shells are \emph{not} at the same altitude, do \emph{not} have the same speed, and maintain \emph{no} fixed relationship over time.
Likely, links between satellites in different shells are \emph{not} technically feasible.
We thus consider different shells as different satellite networks.}.
If two satellites are in the same orbital plane, the distance between them remains mostly constant, with only small variations deriving from operational constraints~\cite{leo-operation-mobicom23}.
If two satellites are in neighboring orbital planes, the distance between them predictably varies over time, being as much as \emph{twice longer} at the equator than when the satellites are at the most northerly or southerly part of their orbits.

Satellites maintain laser ISLs with each other.
We focus on constellations where each satellite is connected to
\begin{inparaenum}[(i)]
\item two neighboring satellites in the same orbital plane with two {\em intra-orbit ISLs}, and
\item one satellite in a neighboring orbital plane to the East and one to the West with two {\em cross-orbit ISLs}.
\end{inparaenum}
Because of the above orbital dynamics, the length of intra-orbit ISLs can be approximated as fixed,
while the length and direction of cross-orbit ISLs changes (predictably) over time.

We denote the arrangement of ISLs in a constellation as network \emph{topology}.
The topologies we focus on are consistent with the ones assumed by most prior work~\cite{eth-hotnets18,mark-hotnets18}.
We discuss extensions of \system to other topologies and relaxations of our assumptions in~\S\ref{sec:discussion}.

Ground stations, such as user terminals and SpaceX Internet proxies, are located on Earth.
They are the sources and destinations of users' packets, and are connected to satellites via \emph{RF links}.
At any time, a GS can maintain RF links, and hence exchange packets, with satellites in its communication range at that time.
GSes typically have a few antennas, so they can maintain a limited number of RF links simultaneously. 

\clearpage

\section{Problem Statement}
\label{sec:problem}

We aim at supporting wide-area low-delay connectivity through a satellite network as a premium service for selected customers.
The service is akin to pseudo-wires: \textbf{specific pairs of GSes} (e.g., GS X and Y in Figure~\ref{fig:basic-setup}) are allowed to exchange premium traffic over low-delay paths, at a contractualized, low rate -- e.g., in the order of 100Mbps.
We assume that GSes efficiently compute low-delay paths, and source-route premium packets.
Also, satellites prioritize the relatively few, low-rate premium flows over all the others, so \textbf{premium packets from legitimate GSes never cause congestion}\footnote{Our reliability problem covers \emph{malicious} attempts to cause congestion.}.

A premium service needs to be offered reliably, ideally ensuring that failures and malicious users do \emph{not} interrupt connectivity, cause packet loss, significantly inflate delay, etc.

Resources on satellites are extremely scarce, especially power.
Hence, \textbf{satellites cannot run existing routing protocols} commonly used in Internet Service Providers, such as IGPs (e.g., OSPF~\cite{ospf-rfc}, IS-IS~\cite{isis-rfc}) or MPLS.
Even on networks much less dynamic than LEO constellations, running such protocols requires significant memory and processing power at each node -- e.g., for storing the network graph, updating it, sharing updates with other nodes, computing and potentially signaling new paths.
In fact, current best practices (e.g., see~\cite{cisco-arch11}) suggest to run those protocols within no more than 50 routers, breaking down larger networks into areas of that size when needed.
LEO constellations include thousands of satellites, whose continuous movement would require constant updates of the network graph, and also prevents any meaningful sub-division of the network into areas.

Our goal is instead to ensure service reliability within a design where satellites do not compute paths, do not hold routing tables, and do not exchange dynamic information (e.g., on health and delay of ISLs).
Satellites can only keep static data (e.g., on the network topology and satellites' orbits).

\subsection{Reactivity to failures}

As any network, LEO constellations are subject to unpredictable failures.
For example, hardware malfunctioning may affect forwarding at satellites; ISLs may be temporarily disrupted when they closely align with the sun or if satellites slightly deviate from their orbits; RF links may be subject to signal interference caused by other transmitting stations, physical obstacles or suddenly changed weather conditions.
We have no data on how frequent these failures are in LEO constellations, partially because these networks are still being built.
As a reference, though, fast failure recovery is a problem still attracting vast research and industrial efforts in all Internet networks, from data centers to ISPs -- see, e.g.,~\cite{chiesa-surveyfrr-cmst21}.

Simply relying on path re-computation at GSes is not sufficient within a source-routed constellation.
Satellites next to a failure can quickly detect the failure, with techniques similar to~\cite{bfd-rfc5880}.
After detection, though, GSes need to be informed about failures and recompute paths.
In the meanwhile, all the packets sent over disrupted paths would be lost.
For how long?
Communicating ISL failures to GSes takes up to $\approx$ 100ms, just because of path propagation delays.
Computing new paths at GSes is not instantaneous either: state-of-the-art algorithms~\cite{mark-hotnets19} take a few milliseconds per destination in our simulations.
This is a long time for a premium service!
We therefore need satellites to locally reroute premium traffic.

\myitem{Prior work cannot be re-used.}
Resource constraints prevent existing fast-rerouting techniques from being used in LEO satellites.
Clearly, pre-installing backup paths at each satellite for any possible destination GS does not scale.
Also, fast rerouting (FRR) approaches widely used in the Internet, such as IGP Loop Free Alternate (LFA)~\cite{lfa-rfc5286} and MPLS FRR~\cite{mplsfrr-rfc4090}, rely on intra-domain routing protocols that cannot be run in satellite networks (see above).
The network size and satellites' resource constraints prevent static failover techniques (e.g.,~\cite{chiesa-staticfrr-ton17}) from being adopted in LEO constellations.
These techniques are based on decomposing the network graph into $k$ spanning trees per destination, where $k$ is the number of edge-disjoint paths between any pair of nodes in the graph.
LEO constellations include hundreds of edge-disjoint paths between any pair of nodes: satellites cannot compute, store and reroute packets over so many spanning trees.
Finally, approaches based on exploring new paths upon failures (e.g., by adding information to packet headers~\cite{shenker-failurecarryingpackets-sigcomm07}) provide no guarantees on the path stretch, which is a critical limitation for our low-latency connectivity service.

\myitem{Problem.}
We therefore focus on the following question:
\mygraybox{
  Can we enable satellites to quickly reroute packets on low-latency paths, without needing any routing information and significant computation?
}

\subsection{Security}

A second key facet of reliability is security.
Even if every GS runs a trusted computing environment, it is likely that GSes can be compromised.
In fact, vulnerabilities of current Starlink terminals have already been exposed~\cite{starlink-attack}.
Attackers can also source packets from their own devices, given that GSes communicate with satellites over radio links.
For example, in Figure~\ref{fig:basic-setup}, an attacker can compromise GS X or deploy her own device next to it: in both cases, she can send packets to satellites A1 and A2 impersonating the legitimate GS X.

We focus on threats peculiar to our routing-free core architecture.
Prominently, they do not include physical-layer attacks, such as jamming RF links, that are fully orthogonal to our design.
We also do not consider attempts to exceed the allowed premium traffic rate.
We indeed assume a centralized admission control system that easily detects and mitigates rate-based attacks.
For example, such a system can enforce that for each GS pair, only one ingress satellite accepts premium packets, up to the contractualized rate; source GSes can then schedule changes of ingress satellites for their premium traffic by interacting with the admission system.

\myitem{Attacker model.}
We consider an attacker who controls a botnet of GSes.
Each compromised GS can forge arbitrary packets, and send them to any reachable satellite.

The general goal of the attacker is to source-route packets over paths of her choice, in order to enable a variety of attacks including: 
\begin{inparaenum}
\item[\attacklabel{$\mathcal{A}$1.}]\label{item:attack1} gathering information about the deployed satellites (e.g., hardware, software and possible vulnerabilities), ISLs (e.g., network topology, and per-link health, bandwidth and delay), and RF links (e.g., their activation/deactivation over time to reverse engineer GSes' satellite selection algorithms); 
\item[\attacklabel{$\mathcal{A}$2.}]\label{item:attack2} vulnerability exploitation at specific satellites, for example, for privilege escalation;
\item[\attacklabel{$\mathcal{A}$3.}]\label{item:attack3} network service degradation, for instance by concentrating high volumes of traffic on target links in specific geographical regions;
\item[\attacklabel{$\mathcal{A}$4.}]\label{item:attack4} indirect attacks to legitimate users of the compromised GSes, to increase their costs, cause reputation damages or induce the satellite network to block legitimate traffic.
\end{inparaenum}

\myitem{Simple defenses are not enough.}
GS authentication alone is not sufficient, as it would not help at all against attackers who compromise GSes without imparing their authentication -- or those who break the authentication mechanism itself.
A more promising approach consists in complementing authentication with the validation of source-routed packets at satellites.

Simple packet validation strategies, however, do not work either.
Using a centralized validator (e.g., a network controller or a scrubber) would delay the forwarding of legitimate packets to be validated by hundreds of milliseconds, which is not viable for a low-delay connectivity service.
Computing valid paths outside satellites and caching them in satellites is impractical: for example, it is unclear
\begin{inparaenum}[(i)]
\item how many paths would any satellite have to cache per premium GS pair, e.g., towards how many possible destination satellites;
\item how quickly can the set of valid paths per GS pair be updated, for example, after failures;
\item how to protect against attackers requiring paths for more and more GS pairs to be cached on a single satellite.
\end{inparaenum}
Precomputing paths or ISL delays is hard to scale to huge, highly dynamic networks, and simply does not work in the presence of failures.
Finally, satellites cannot rely on routing protocols or local path computations to check the delay of source-routed paths (see discussion above).

\myitem{Problem.}
We therefore focus on the following question:
\mygraybox{
  Can we enable satellites to cheaply validate source-routed packets, on the fly, without needing any routing information and significant computation?
}

Note that packet validation should not be too strict, but permit the use of paths slightly longer than the theoretical shortest ones -- e.g., to let GSes deal with failures, optimize across multiple destinations, or perform multi-path routing.

\newpage

\section{\approach Overview}
\label{sec:overview}

\begin{figure}[t]
	\centering
	\includegraphics[width=0.95\columnwidth]{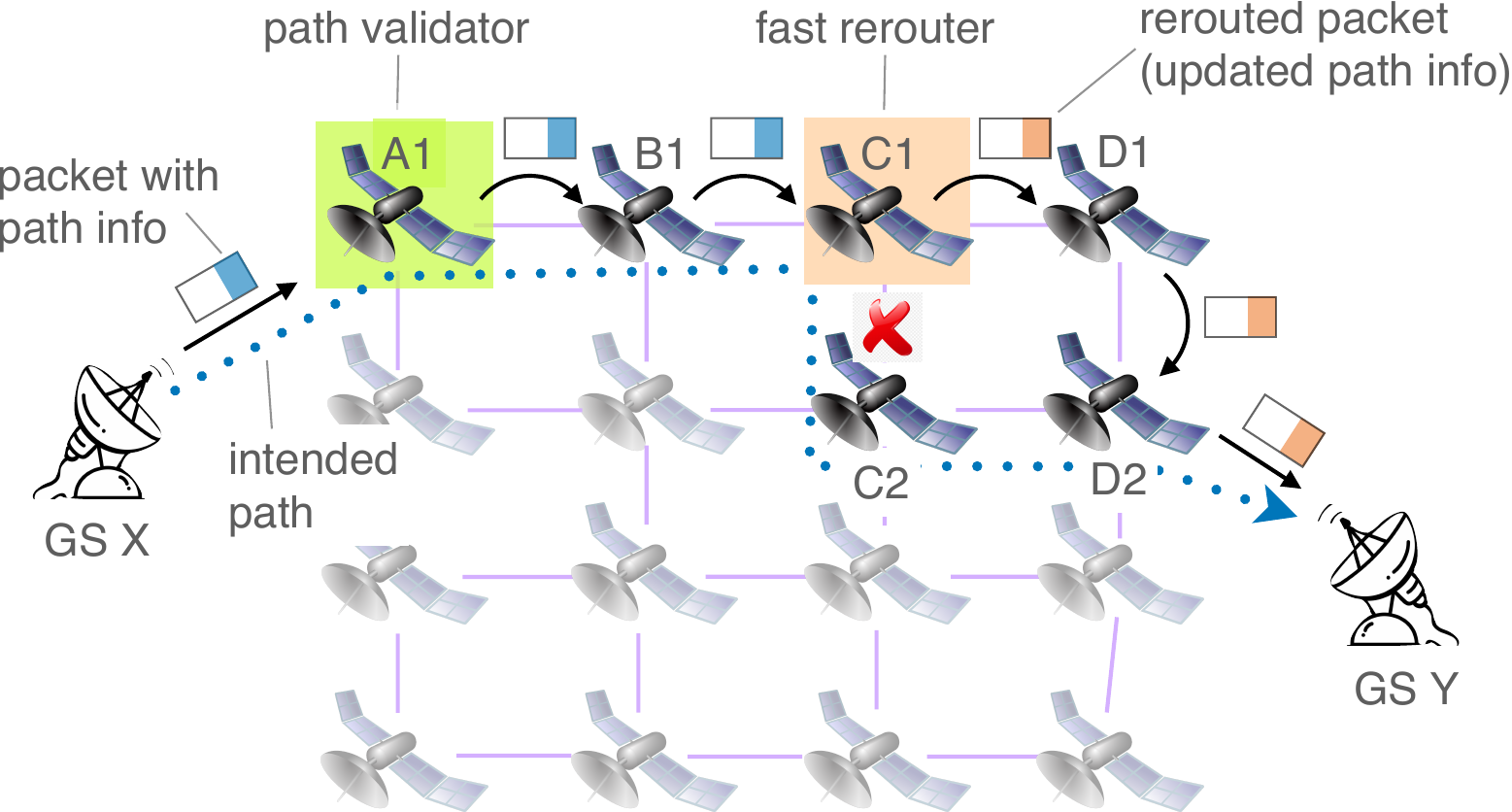}
	\caption{\approach overview. The ingress satellite A1 validates the packet from GS X. Being next to the failed link (C1, C2), C1 reroutes the packet using path information in it.}
	\label{fig:overview-example}
\end{figure}

\approach enables routing-oblivious satellites to validate and fast reroute source-routed packets, possibly at line rate.

Figure~\ref{fig:overview-example} showcases \approach in action.
GS X has been authorized to send premium traffic to GS Y.
Consistently with Figure~\ref{fig:basic-setup}, X source-routes packets for Y over the low-delay path (A1,B1,C1,C2,D2).
To do so, X encodes this path in the header of the packets it sends to satellite A1.
Being the first hop, A1 validates X's packets: each packet is forwarded if its validation succeeds (as in Figure~\ref{fig:basic-setup}), and discarded otherwise. 
Packets are validated only at the ingress to minimize work across satellites as well as ISL bandwidth consumption.

Validated packets eventually reach C1.
Suppose that the link (C1,C2) has just failed.
Upon detection, C1 notifies all GSes about the failure.
While X recomputes its best paths, C1 reroutes packets around the failure by reordering
the intra- and inter-orbit links in the source-routed path.
So, instead of crossing the intra-orbit link (C1,C2) and then the inter-orbit one (C2,D2), C1 updates the encoded path to cross an inter-orbit link and then an intra-orbit one.
Fast-rerouted packets are therefore forwarded over the sub-path (C1,D1,D2).

For this approach to work, the path encoding must be:
\begin{itemize}
\item expressive enough for GSes to encode all low-delay paths in the satellite network;
\item informative enough for routing-oblivious satellites to fast reroute and validate packets with minimal processing;
\item cheap and easy to process by satellites, ideally in hardware;
\item compact, in order to minimize packet overhead.
\end{itemize}

To satisfy all these requirements, \approach path encoding relies on a new routing theory.
Our theory captures constellations' geometrical properties relevant to routing, so that the shape of the shortest path between any pair of satellites at any time can be determined by only knowing the orbits and topological positions of source and destination satellites.

We present our routing theory in \S\ref{sec:theory}.
We then detail \approach path encoding as well as fast-rerouting and validation algorithms built upon such encoding in \S\ref{sec:internals}.

\section{\approach Theoretical Foundations}
\label{sec:theory}

Consistently with \S\ref{sec:background}, our theory applies to constellations where each satellite has two links with neighbors in its orbit and two cross-links with neighbors in adjacent orbits.
We only make two \textit{assumptions}:
\begin{inparaenum} [(i)]
	\item \label{ass:length-cross-links} the length of any cross-orbit link decreases with the distance of its midpoint from the equator; and
	\item \label{ass:no-mixed-links} paths with either only intra-orbit links or only cross-orbit links are always shorter than paths mixing intra- and cross-orbit links.
\end{inparaenum}
These assumptions hold in real constellations, as we confirm empirically (\S\ref{sec:eval}).   
We make no assumption about many other parameters such as orbital inclinations, the orientation of ISLs with respect to the equator, or the function by which cross-orbit links change length over time. 

Formal definitions and proofs are in Appendix~\ref{app:theory}.

\myitem{Grids.}
Let $s$ and $d$ be source and destination satellite respectively.
The two satellites divide the network graph into four sub-graphs that we call \emph{grids}.
Intuitively, each grid resembles a parallelogram moving on the surface of the torus formed by ISLs (see Figure~\ref{fig:isl-torus}), with $s$ and $d$ occupying opposite corners of the parallelogram.
Figure~\ref{fig:spacegrid-base} provides an illustration.

\begin{figure}[t]
	\centering
	\includegraphics[width=0.95\columnwidth]{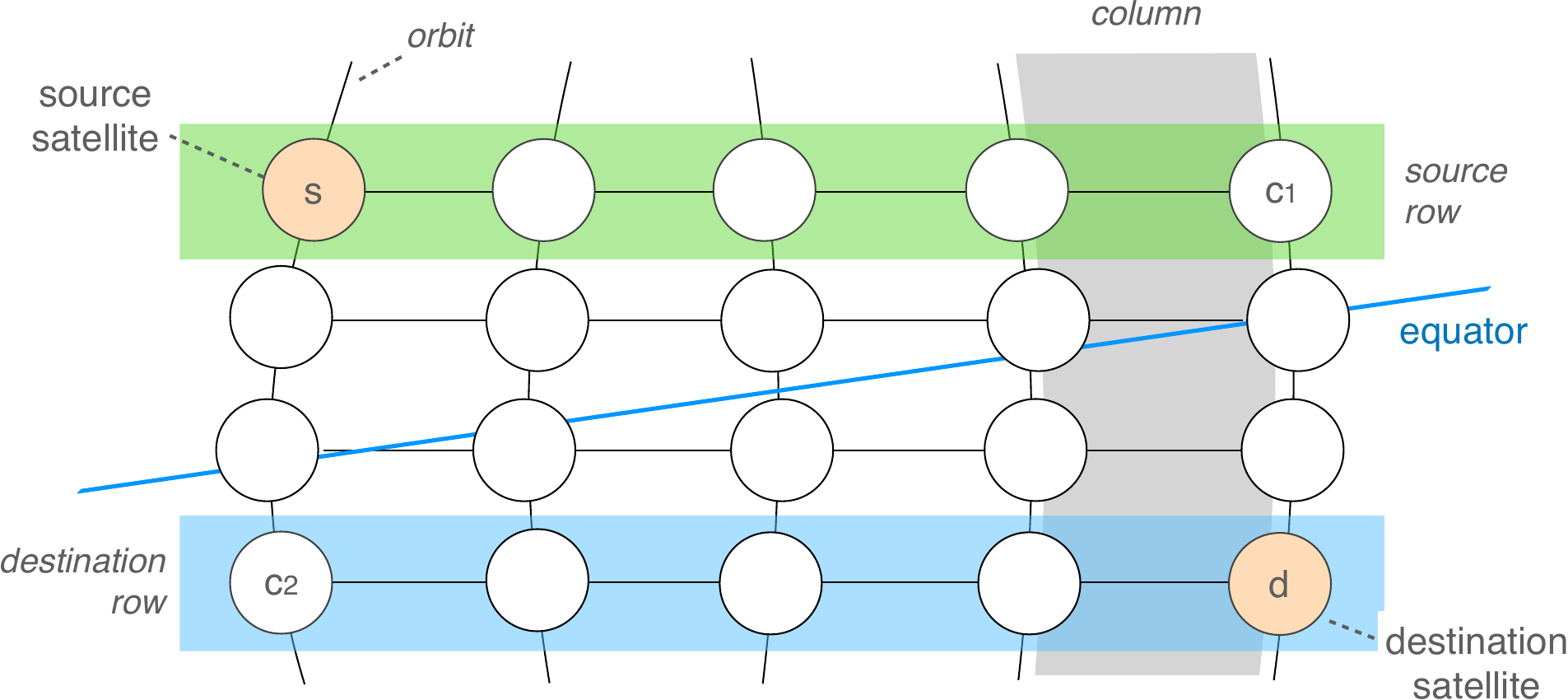}
	\caption{Schematization of a type-A $s$-$d$ grid. Note that cross-orbit links become longer as they get closer to the equator. For simplicity, the following figures visualize cross-orbit links as if they had the same length.
          }
	\label{fig:spacegrid-base}
\end{figure}

Given a grid $G$, we define a \textit{row} of $G$ as a sequence of cross-orbit links that constitute a path between two satellites on opposite sides of $G$.
A \textit{column} of $G$ is also a set of cross-orbit links, but they do not connect to each other.
Consider two neighboring sequences $S_1$ and $S_2$ of intra-orbit links that connect the source row to the destination row; a column is the set of cross-orbit links between the satellites in $S_1$ and $S_2$ (see Figure~\ref{fig:spacegrid-base}).
Finally, the \textit{border} of $G$ is the set of satellites in $G$ connected to either the grid's source or its destination via only intra-orbit
links, or via only cross-orbit links.

The following theorem proves a basic property of grids.

\Copy{theo:ingrid}{
\begin{mytheorem}\label{theo:ingrid-vs-outgrid}
	The shortest path from a source to a destination satellite stays within one of their grids whenever there are no link failures between satellites in the grid borders.
\end{mytheorem}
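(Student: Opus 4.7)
My plan is to prove the theorem by contradiction, using an exchange argument to push any path that leaves a grid back inside it. Suppose $P$ is a shortest $s$-$d$ path whose vertex set is not contained in any single grid. Since the four grids of $(s,d)$ tile the torus and share only their borders, $P$ must leave and re-enter at least one grid by crossing its border. The goal is to exhibit an alternative $s$-$d$ path of strictly smaller total delay that is entirely contained in some grid whose borders are intact, contradicting the optimality of $P$.

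The crux is a monotonicity lemma for shortest paths. I would first argue that a shortest path visits each orbit in a single contiguous sub-segment: if $P$ leaves orbit $O$ at a satellite $u$ and later returns to $O$ at a satellite $v$, then the sub-path between $u$ and $v$ is a mixed intra- and cross-orbit route between two same-orbit satellites. Assumption (\ref{ass:no-mixed-links}) supplies a strictly shorter pure intra-orbit replacement from $u$ to $v$, and the no-border-failures hypothesis guarantees that this replacement actually exists in the network. A symmetric argument, now using assumption (\ref{ass:length-cross-links}) to control the latitude at which the affected cross-orbit hops are taken, forces the phase trajectory within each orbit (and across orbits) to be monotone as well.

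Once both the orbit trajectory and the phase trajectory of $P$ are monotone, the consistent pair of directions (E or W combined with N or S) picks out exactly one of the four grids of $(s,d)$, and $P$ is contained in that grid's parallelogram by the definition of a grid. The step I expect to be the main obstacle is the phase-monotonicity half of the lemma: shifting intra-orbit hops along the path changes the latitudes at which subsequent cross-orbit hops are taken, so assumption (\ref{ass:length-cross-links}) must be invoked carefully to ensure the net length strictly decreases rather than being offset by newly lengthened cross-orbit links. I would formalize this by induction on the number of direction reversals in $P$, using assumption (\ref{ass:no-mixed-links}) at each step to rule out mixed shortcuts and thereby guarantee that the innermost reversal can be eliminated with a strict improvement.
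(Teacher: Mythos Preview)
Your overall strategy---prove that any shortest $s$--$d$ path is monotone in both the orbit direction and the phase (row) direction, then observe that the resulting pair of directions singles out exactly one of the four grids---is sound and does yield the theorem. But it contains a concrete confusion, and it is more elaborate than what the paper actually does.

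The confusion is in the phase-monotonicity half. You do not need assumption~(i) there; the argument is perfectly symmetric to the orbit case and uses only assumption~(ii). If a shortest path visits a row $R$ at $u$, leaves it, and later returns to $R$ at some $v\neq u$, then $u$ and $v$ are the endpoints of a pure cross-orbit sequence (namely row $R$ itself), while the sub-path of $P$ between them necessarily mixes intra- and cross-orbit links; assumption~(ii) gives a strictly shorter replacement outright. There is no need to track latitudes or to worry about how shifting intra-orbit hops perturbs the lengths of later cross-orbit links. The ``main obstacle'' you anticipate does not arise, because you never shift individual hops: you replace an entire mixed $u$--$v$ sub-path by a pure one, and assumption~(ii) hands you the strict inequality directly. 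Indeed, the paper's proof of this theorem never invokes assumption~(i).

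The paper's argument is also more local. Rather than establishing global monotonicity, it considers a border node $b$ where the alleged shortest path last crosses from one grid into another. Because $b$ lies on a shared border, there is a pure intra-orbit or pure cross-orbit sequence along that border from $b$ to either $s$ or $d$, whereas the corresponding sub-path of $P$ is mixed (it passes through an interior node of the neighbouring grid). One application of assumption~(ii) then gives the contradiction. This also clarifies how the ``no failures on grid borders'' hypothesis enters: the pure replacement runs along a border, which is assumed intact. In your version the $u$--$v$ replacement need not lie on any border, so your appeal to that hypothesis to guarantee its existence is not justified; as written, your argument establishes only the failure-free case.
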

}

\begin{figure*}[th]
	\centering
	\begin{subfigure}[b]{0.9\columnwidth}%
		\centering
		\includegraphics[width=\columnwidth]{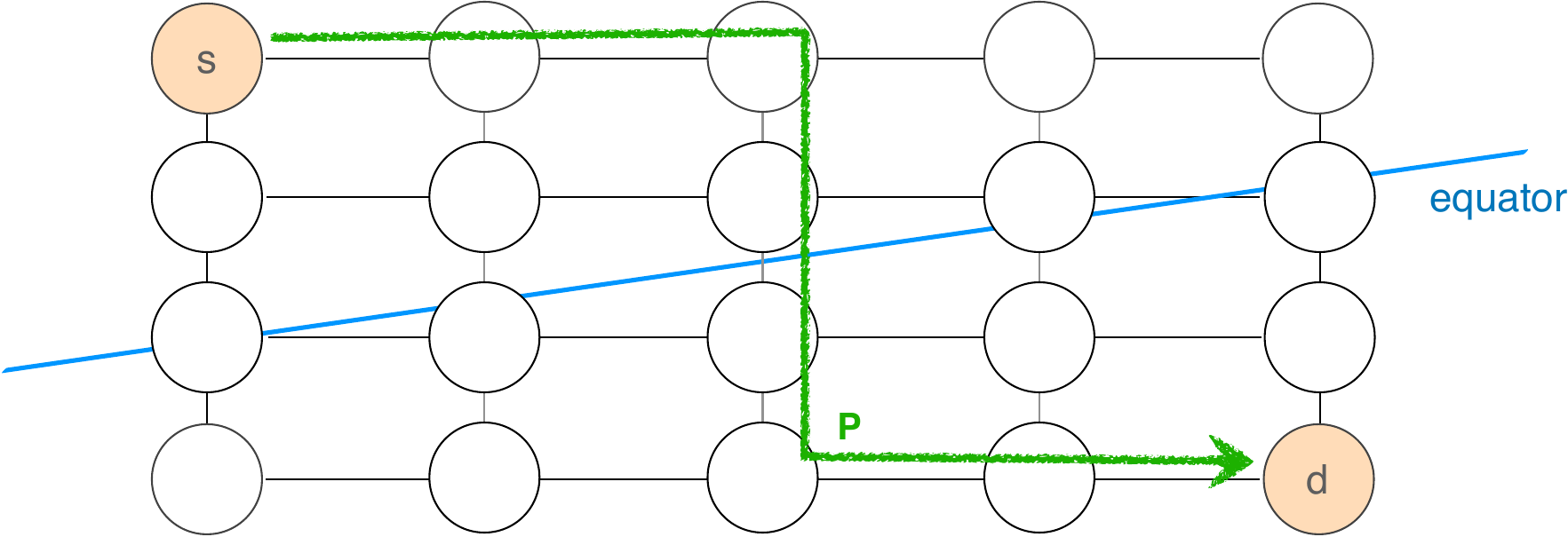}
		\caption{Type-A grids.}
		\label{fig:spacegrid-equator-typeA-sp}
	\end{subfigure}
	\qquad
	\begin{subfigure}[b]{0.9\columnwidth}%
		\centering
		\includegraphics[width=\columnwidth]{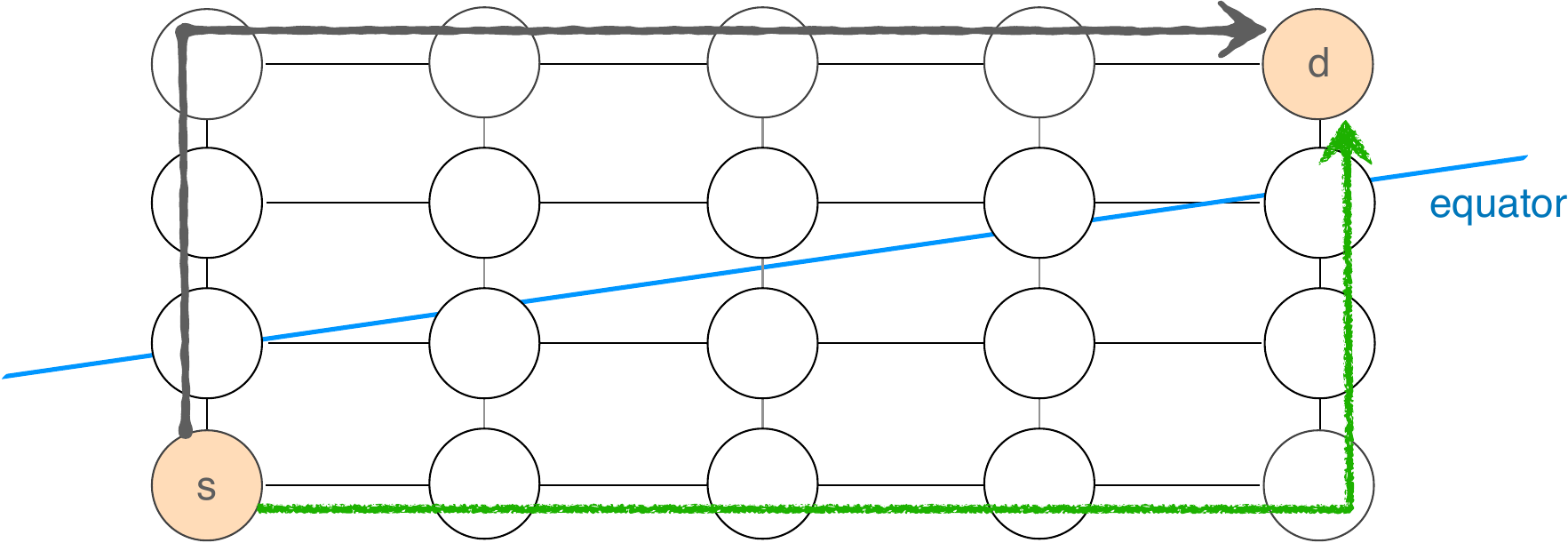}
		\caption{Type-B grids.}
		\label{fig:spacegrid-equator-typeB-sp}
	\end{subfigure}
	\caption{Candidate shortest paths in grids moving in the same direction.}
\end{figure*}

\begin{figure*}[t]
	\centering
	\begin{subfigure}[b]{0.9\columnwidth}%
		\centering
		\includegraphics[width=\columnwidth]{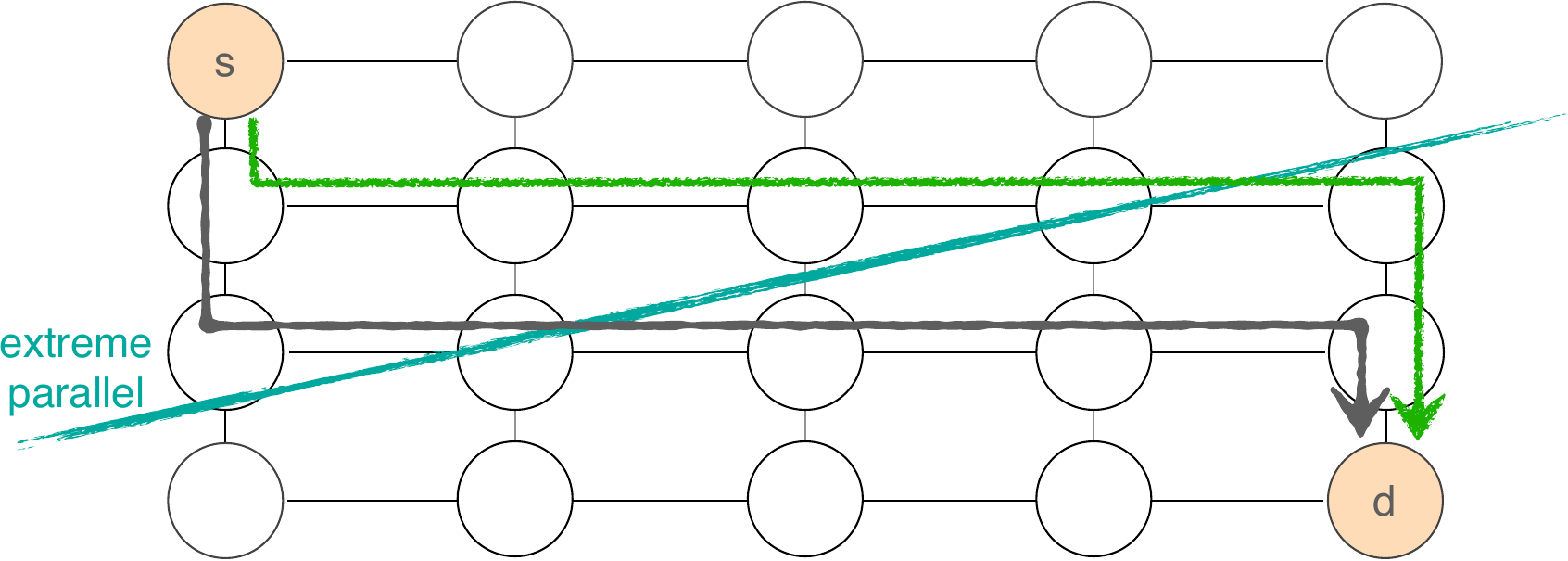}
		\caption{Type-A grids.}
		\label{fig:spacegrid-pole-typeA-sp}
	\end{subfigure}
	\qquad
	\begin{subfigure}[b]{0.9\columnwidth}%
		\centering
		\includegraphics[width=\columnwidth]{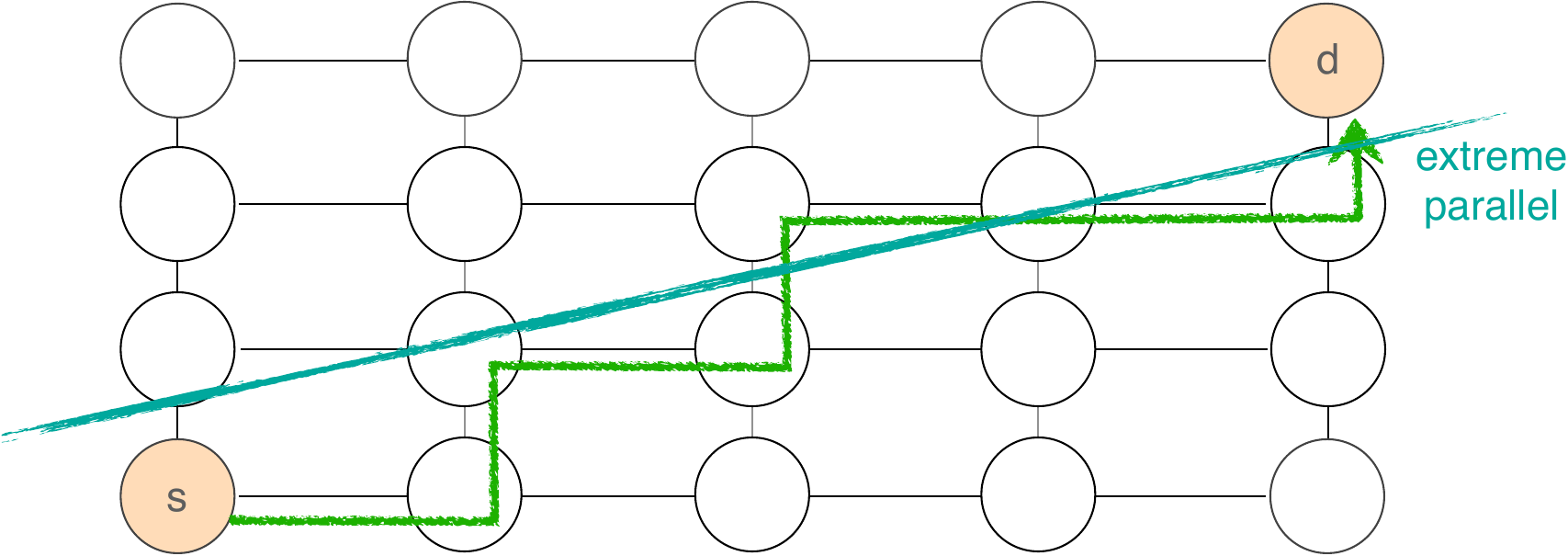}
		\caption{Type-B grids.}
		\label{fig:spacegrid-pole-typeB-sp}
	\end{subfigure}
	\caption{Candidate shortest paths in grids moving in different directions and close to a single extreme parallel.}
	\label{fig:spacegrid-pole}
\end{figure*}

\myitem{Grids' characteristics relevant to routing.}
Within any $s$-$d$ grid, the shape of the shortest path from $s$ to $d$ is determined by the grid's type and position.

We define {\em type-A} and {\em type-B} grids as follows.
Let $c_1$ be the satellite at the other end of the source row from $s$.
As the grid orbits, it will reach a point where some of its links cross the equator and either $s$ or $c_1$ is on the equator.
If $c_1$ is on the equator at that time, the grid is a type-A one, otherwise it is a type-B grid.
Figure~\ref{fig:spacegrid-base} shows a type-A grid: the equator traverses several links in the grid when $c_1$ is on it; conversely, the equator crosses no links in the grid when $s$ is on it.
If $c_1$ were the source and $c_2$ the destination (or vice versa), the grid would have been a type-B one.

The type of a grid does not change as it orbits, but the grid {\em position} does.
We say that the grid \textit{moves in the same direction} if all satellites move in one direction, and that it \textit{moves in different directions} otherwise.  
For example, Starlink satellites move north-east, reach their northmost parallel and then head south-east.
If some satellites in a grid have passed the northmost parallel and some are still approaching it, then the grid moves in different directions.

\myitem{Shortest paths in grids moving in the same direction.}
One among the following Theorems~\ref{theo:sp-descending-crossing} and~\ref{theo:sp-ascending-crossing} applies.

\Copy{theo:samedir-descending-sp}{
\begin{mytheorem}\label{theo:sp-descending-crossing}
	For any type-A grid moving in the same direction, the shortest path from $s$ to $d$ comprises the cross-orbit link farthest from the equator in each column plus the minimal set of intra-orbit links to connect the cross-orbit links.
\end{mytheorem}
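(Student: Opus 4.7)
The plan is to combine Theorem~\ref{theo:ingrid-vs-outgrid} with a careful accounting of cross-orbit versus intra-orbit contributions to the total length. I would first invoke Theorem~\ref{theo:ingrid-vs-outgrid} to confine the shortest path to a single $s$-$d$ grid $G$; let $n_c$ be the number of columns of $G$ and let $r_s, r_d$ be the rows of $s$ and $d$ inside $G$. Next I would show that a shortest path uses exactly one cross-orbit link per column. Any column must be crossed an odd number of times; if some column were crossed $2k+1$ times with $k \ge 1$, picking two consecutive crossings in opposite directions exposes a mixed sub-path whose endpoints lie in the same orbital plane, so by assumption~(ii) the pure intra-orbit path between those endpoints is strictly shorter, contradicting optimality.

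Having reduced to paths with exactly $n_c$ cross-orbit crossings at some rows $r_1,\dots,r_{n_c}$, I would decompose the path length as
\[
\sum_{i=1}^{n_c} \ell_i(r_i) \;+\; L\,\Bigl(|r_1 - r_s| + \sum_{i=1}^{n_c-1}|r_{i+1}-r_i| + |r_d - r_{n_c}|\Bigr),
\]
where $\ell_i(r)$ is the length of the column-$i$ cross-orbit link at row $r$ and $L$ is the (fixed) intra-orbit link length. By assumption~(i), $\ell_i(\cdot)$ is individually minimized at the row $r_i^\star$ whose cross-orbit link is farthest from the equator. By the triangle inequality, the intra-orbit term is at least $L\,|r_d - r_s|$, with equality iff $(r_s, r_1, \dots, r_{n_c}, r_d)$ is monotone. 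Hence, if the sequence $(r_s, r_1^\star, \dots, r_{n_c}^\star, r_d)$ is itself monotone, both summands are simultaneously minimized and the theorem follows.

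The monotonicity of $(r_i^\star)_i$ is the step I expect to be the main obstacle, and it is where the ``type-A'' and ``same-direction'' hypotheses are actually used. I would argue it by cases on how the equator sits relative to $G$. If the equator lies entirely outside $G$, then every column's farthest-from-equator link is the border link on the side of $G$ opposite the equator, and monotonicity follows from the parallelogram structure of $G$. If the equator cuts through $G$, the ``same-direction'' hypothesis ensures that the equator's intersection with $G$ is a single monotone arc across the columns, rather than bending back within the grid. The type-A hypothesis then pins this arc so that in every column the farthest-from-equator link lies on the grid border on the specific side containing the corner farthest from the equator, giving monotonicity of $(r_i^\star)_i$ between $r_s$ and $r_d$. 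Combining this geometric lemma with the decomposition above yields the claim.
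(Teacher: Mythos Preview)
Your lower-bound strategy---minimize the cross-orbit contribution column by column, bound the intra-orbit contribution below by $L\,|r_d-r_s|$ via the triangle inequality, then show both bounds are attained by a single path---is the same skeleton the paper uses, and you are right that the monotonicity of $(r_s,r_1^\star,\dots,r_{n_c}^\star,r_d)$ is the load-bearing step. The gap is in how you argue that step. When the equator cuts through a type-A grid moving in the same direction, it is \emph{not} the case that every column's farthest-from-equator link sits on one fixed border side ``containing the corner farthest from the equator''; read literally, that would force $(r_i^\star)_i$ to be constant, which Figure~\ref{fig:spacegrid-equator-typeA-sp} already contradicts. What actually happens is a two-block structure: first, $r_i^\star\in\{r_s,r_d\}$ for every column (the paper's Lemma~\ref{lem:optimal-in-column}, using only the same-direction hypothesis: any middle link is sandwiched between the two border rows and hence beaten by at least one of them); second, the sequence is a prefix of $r_s$'s followed by a suffix of $r_d$'s. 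The type-A hypothesis enters only in this second part, via Properties~\ref{prop:descending-grids-hlink-sourcecol}--\ref{prop:descending-grids-hlink-destcol}: along the source row, links get farther from the equator as the column moves toward $col(s)$, and along the destination row they get farther as the column moves toward $col(d)$. Lemmas~\ref{lem:optimal-preds-succs-descending}--\ref{lem:optimal-succs-descending} turn this into ``$r_j^\star=r_s$ implies $r_i^\star=r_s$ for every column $i$ between $col(s)$ and $j$,'' and the symmetric statement for $r_d$, which is exactly the monotonicity you need. Your sketch neither isolates the $r_i^\star\in\{r_s,r_d\}$ reduction nor explains how the type-A orientation forces the single switch to go in the right order.

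Two smaller remarks. Invoking Theorem~\ref{theo:ingrid-vs-outgrid} is unnecessary: the statement is about the shortest path within a given type-A grid, and the paper's proof never leaves that grid. Also, your displayed decomposition is really a lower bound on the path length (intra-orbit segments could in principle zig-zag within an orbital plane); that is all you need for the argument, but it should be stated as an inequality rather than an equality.
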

}

Figure~\ref{fig:spacegrid-equator-typeA-sp} shows the shortest path P in a type-A grid moving the same direction.
As the shortest link in each column is in either the source or destination row, P follows the source row for some distance, then traverses intra-orbit links to the destination row, and ends with destination-row links.

\Copy{theo:samedir-ascending-sp}{
	\begin{mytheorem}\label{theo:sp-ascending-crossing}
	For any type-B grid moving in the same direction, the shortest path from $s$ to $d$:
	\begin{inparaenum}[(i)]
		\item includes exactly one sequence of cross-orbit links,
		\item cannot have the first cross-orbit link closer to the equator than the cross-orbit link adjacent to $s$, and
		\item cannot have the last cross-orbit link closer to the equator than the cross-orbit link adjacent to $d$.
	\end{inparaenum}
\end{mytheorem}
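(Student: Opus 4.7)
The plan is to reduce path length minimization within the grid to choosing the rows at which cross-orbit moves occur, then to pin those rows at one of the grid's boundaries using a structural property of type-B grids moving in the same direction. Concretely, by Theorem~\ref{theo:ingrid-vs-outgrid} I restrict to paths inside a single $s$-$d$ grid; let $R$ and $C$ be the row and column distances from $s$ to $d$. An optimal path is monotonic, since any backtrack pair (intra down then up, or cross forward then back) strictly lengthens the path and can be excised; so I parameterize it by a non-decreasing sequence $0 \le r_0 \le \cdots \le r_{C-1} \le R$ giving the row at which each cross-orbit move occurs. Because intra-orbit link lengths are fixed, the problem reduces to choosing this sequence to minimize $\sum_{j=0}^{C-1} \ell(r_j, j)$, where $\ell(r, j)$ is the length of the cross-orbit link in row $r$ of column $j$.

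The key structural claim is then that in a type-B grid moving in the same direction, $\ell(\cdot, j)$ is monotonic in $r$, with the same direction of monotonicity across all columns $j$. Intuitively, because $s$ (and not $c_1$) sat on the equator at the defining moment and no satellite in the grid has yet reversed its north--south motion, traversing a column from $s$'s row to $d$'s row changes the midpoint's distance from the equator monotonically; Assumption~(i) then promotes this to monotonicity of $\ell(\cdot, j)$ in $r$. Granting this, each $\ell(\cdot, j)$ is minimized at the same boundary row $r^{*} \in \{0, R\}$, namely whichever boundary row is farther from the equator, and the non-decreasing constraint on the $r_j$ permits setting them all equal to $r^{*}$, which is therefore optimal.

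From here the three claims follow. All cross-orbit moves occurring at the same row means they form one contiguous block in the path, giving claim~(i). Since $r^{*}$ is at least as far from the equator as both row $0$ and row $R$, the first cross-orbit link (at row $r^{*}$) is at least as far from the equator as the cross-orbit link adjacent to $s$ (at row $0$), yielding claim~(ii); claim~(iii) is the symmetric statement for $d$ and row $R$.

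The main obstacle is the geometric monotonicity step: proving that in a type-B grid moving in the same direction, the midpoints of the cross-orbit links within a single column vary monotonically in distance from the equator rather than folding back. This requires combining the type-B condition -- which places $s$, not $c_1$, on the equator in the defining configuration -- with ``same direction'' -- which forbids any satellite in the grid from having already crossed an extreme parallel -- to conclude that the column's row-midpoints do not straddle the equator. Once this geometric fact is established, the reduction and minimization above deliver the three claims.
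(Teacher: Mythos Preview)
Your reduction to a monotone staircase $0 \le r_0 \le \cdots \le r_{C-1} \le R$ is sound, but the ``key structural claim'' is false. In a type-B grid moving in the same direction the equator can and typically does cut diagonally through the interior of the grid; the defining configuration (with $s$ on the equator) is just one instant, and ``same direction'' only forbids crossing an extreme parallel, not the equator. When a column straddles the equator, the midpoint-to-equator distance along it first decreases and then increases, so $\ell(r,j)$ is unimodal with a \emph{maximum} in the interior rather than monotone. Lemma~\ref{lem:optimal-in-column} gives only that each per-column minimum is at a boundary row, which is strictly weaker. Worse for your argument, those minima sit at \emph{opposite} boundaries: in a type-B grid the shortest link in $col(s)$ lies in $row(d)$ while the shortest link in $col(d)$ lies in $row(s)$ (this is exactly the geometry behind Properties~\ref{prop:ascending-grids-hlink-destcol}--\ref{prop:ascending-grids-hlink-sourcecol}, and is stated explicitly in the discussion following the theorem). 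Hence no single $r^{*}\in\{0,R\}$ realizes all per-column minima, and your conclusion that the optimal staircase collapses to a boundary row does not follow. The theorem itself does not pin the cross-orbit block to a boundary row -- Figure~\ref{fig:spacegrid-equator-typeB-sp} depicts several interior-row candidates -- which should have signaled that you were proving something strictly stronger than the statement.

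The paper's route avoids column-wise monotonicity entirely. It first shows, by a chain of local swap inequalities exploiting Properties~\ref{prop:ascending-grids-hlink-destcol}--\ref{prop:ascending-grids-hlink-sourcecol}, that the specific pattern ``source-row block, intra-orbit jump, destination-row block'' is never optimal (Lemma~\ref{lem:ascending-grid-crossing-no-step-up}); it then bootstraps this to rule out any path with two cross-orbit blocks by passing to the type-B sub-grid spanned by the first links of the two blocks (Lemma~\ref{lem:ascending-grid-crossing-no-middle-vlink}), yielding claim~(i). Claims~(ii) and~(iii) come from a separate one-link swap at each endpoint (Lemma~\ref{lem:ascending-grid-crossing-restrict-paths}) and do not rely on~(i) or on any global structure of $\ell(\cdot,j)$.
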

}

Figure~\ref{fig:spacegrid-equator-typeB-sp} shows the candidate shortest paths resulting from Theorem~\ref{theo:sp-ascending-crossing}.
The shortest cross-orbit links within the source column are in the destination row, and those within the destination column are in the source row.
Thus, it is shorter to cross a single row than to mix links from different rows.

\myitem{Shortest paths in grids moving in different directions.}
We denote the northmost and southmost parallels ever reached by satellites in a constellation as \textit{extreme parallels}.

We say that a grid moving in different directions is \textit{close to a single extreme parallel} if it crosses only one extreme parallel and both its source and destination rows are closer to the equator than to the other extreme parallel.
For any such grid, one of the following two theorems holds.

\Copy{theo:diffdir-descending}{
\begin{mytheorem}\label{theo:shortest-path-descending-grid-wrapping}
	For any type-A grid moving in different directions and close to a single extreme parallel, the shortest path from $s$ to $d$:
	\begin{inparaenum}[(i)]
		\item includes exactly one sequence of cross-orbit links, and
		\item traverses the row of the cross-orbit link closest to the extreme parallel in the source column, the row of the cross-orbit link closest to the extreme parallel in the destination column, or a row located between those two.
	\end{inparaenum}
\end{mytheorem}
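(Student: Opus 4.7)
The plan is to establish both parts by first confining the shortest path to the grid via Theorem~\ref{theo:ingrid-vs-outgrid}, and then using an exchange argument that trades ``split'' cross-orbit runs and off-range rows for a single, well-placed cross-orbit sequence.  The key geometric facts I will use are Assumption~(\ref{ass:length-cross-links}), which gives the monotonicity of cross-orbit link lengths in the distance from the equator (and hence from the extreme parallel once we are close to it), and Assumption~(\ref{ass:no-mixed-links}), which lets me collapse alternating intra/cross subpaths into ``pure'' segments whenever the two endpoints share a row or column.  Because the grid wraps a single extreme parallel, that parallel lies between the source and destination rows within the grid, and this fact is what makes the re-routing argument lossless in the intra-orbit budget.

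For part (i), I would take any shortest path $P$ from $s$ to $d$ in the grid and argue it cannot contain two disjoint cross-orbit runs.  Suppose by contradiction it contains runs $C_1$ and $C_2$ separated by a non-empty intra-orbit segment, and let $r^\star$ be a row at (or adjacent to) the extreme parallel in the grid.  I build a competitor $P'$ that: goes intra-orbit in the source column from $s$ down/up to row $r^\star$, crosses all columns purely cross-orbit in row $r^\star$, then goes intra-orbit in the destination column from $r^\star$ to $d$.  Because the extreme parallel lies between $s$-row and $d$-row in a type-A wrapping grid, the intra-orbit budget of $P'$ equals $|r_s - r_d|$, which is a lower bound for the intra-orbit budget of any path from $s$ to $d$; hence $P'$ uses no more intra-orbit links than $P$.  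By Assumption~(\ref{ass:length-cross-links}), each cross-orbit link of $P'$ (sitting in $r^\star$) is no longer than the cross-orbit link of $P$ in the same column.  A strict gain in at least one column (unless $P$ already used $r^\star$ throughout) contradicts optimality of $P$, forcing all cross-orbit links of $P$ to sit in one row and hence in one sequence.

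For part (ii), I now know $P$ uses a single cross-orbit sequence in some row $r$.  Let $r_s^\star$ and $r_d^\star$ denote the rows of the cross-orbit links closest to the extreme parallel in the source and destination columns, respectively, and assume without loss of generality $r_s^\star \le r_d^\star$ in grid order.  I argue by a one-step swap that $r$ must lie in $[r_s^\star, r_d^\star]$.  If $r$ were strictly on the far-from-extreme side of $r_s^\star$, then shifting the entire cross-orbit run by one row toward $r_s^\star$ keeps the intra-orbit budget unchanged (because the extreme parallel is still between $s$ and the new row, so $|r_s - r_{\text{new}}| + |r_{\text{new}} - r_d| = |r_s - r_d|$) and by Assumption~(\ref{ass:length-cross-links}) strictly shortens the source-column cross-orbit link while not lengthening any other; the same argument symmetrically rules out $r$ strictly beyond $r_d^\star$.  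Rows in the interval $[r_s^\star, r_d^\star]$ cannot be improved by such a one-step shift, so the optimum lies in this range.

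The main obstacle I expect is making the ``intra-orbit budget is preserved'' step fully rigorous: the equality $|r_s - r_{\text{new}}| + |r_{\text{new}} - r_d| = |r_s - r_d|$ depends on the wrapping geometry placing the extreme parallel strictly between the source and destination rows, and this in turn relies on the definition of ``close to a single extreme parallel'' plus the type-A classification (which controls which corner of the grid sits on the equator when the grid reaches it).  Handling the edge cases where $r_s^\star$ or $r_d^\star$ coincides with a border row, and confirming the grid-row ordering I use matches the geometric distance-to-parallel ordering even though cross-orbit links in different columns correspond to different longitudes, is the delicate bookkeeping I would need to work out carefully in the appendix; the rest of the argument is an exchange argument of the same flavor as in Theorems~\ref{theo:sp-descending-crossing} and~\ref{theo:sp-ascending-crossing}.
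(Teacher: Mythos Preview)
Your exchange argument has a genuine gap in part~(i), and it is precisely the step you gloss over rather than the one you flag as delicate.  The claim ``each cross-orbit link of $P'$ (sitting in $r^\star$) is no longer than the cross-orbit link of $P$ in the same column'' is not justified by Assumption~(\ref{ass:length-cross-links}) alone: that assumption only gives monotonicity in distance from the equator, and you have not shown that your fixed row $r^\star$ is closer to the extreme parallel than $P$'s row \emph{in every column}.  In fact no single row is closest to the extreme parallel in all columns --- this is exactly why $r_s^\star \neq r_d^\star$ in your own part~(ii).  More tellingly, your part~(i) argument never uses the type-A classification, yet for type-B grids in the same regime (Theorem~\ref{theo:shortest-path-ascending-grid-wrapping}) the shortest path does \emph{not} lie in a single row; it picks the closest-to-extreme link in each column, which changes from column to column.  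An argument that does not distinguish the two grid types cannot establish a conclusion that holds for one and fails for the other.

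The paper proceeds differently.  For part~(i) it takes two consecutive cross-orbit links $h_1,h_2$ of $P$ in different rows and derives a local contradiction using Properties~\ref{prop:descending-grids-pole-hlink-sourcecol} and~\ref{prop:descending-grids-pole-hlink-destcol}, which describe how distance to the extreme parallel varies \emph{along a fixed row} in a type-A grid; this is where the type-A hypothesis enters essentially (Lemma~\ref{lem:no-zigzag-descending-grid-wrapping}).  For part~(ii) it compares an off-range row $o$ directly to a specific candidate row and again uses those two properties to obtain a column-by-column domination (Lemma~\ref{lem:no-outside-rows-descending-grid-wrapping}); your ``one-step shift'' would need the same properties to justify ``not lengthening any other,'' since that claim amounts to showing every link in row $r$ lies on the same side of the extreme parallel as the source-column link.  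Finally, the obstacle you worry about --- intra-orbit budget preservation --- is actually trivial: $s$ and $d$ are opposite corners, so every row of the grid lies between $row(s)$ and $row(d)$, and any single-row path automatically uses the minimal number of intra-orbit links.  The real work is entirely in the cross-orbit comparison.
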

}

\Copy{theo:diffdir-ascending}{
\begin{mytheorem}\label{theo:shortest-path-ascending-grid-wrapping}
	For any type-B grid moving in different directions and close to a single extreme parallel, the shortest path from $s$ to $d$ at any time $t$ includes all and only the cross-orbit links that are the closest to the extreme parallel in their respective columns plus a minimal number of intra-orbit links connecting these cross-orbit links.
\end{mytheorem}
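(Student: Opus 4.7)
The plan is to fix an $s$-$d$ grid using Theorem~\ref{theo:ingrid-vs-outgrid} and then decompose any shortest path inside it into its cross-orbit and intra-orbit components. The theorem then follows from two claims: (a) the path uses exactly one cross-orbit link per column, namely the one closest to the extreme parallel; and (b) the intra-orbit links connecting these cross-orbit links are minimal. Claim (b) is immediate once (a) is established, by definition of shortest path.

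I would prove (a) in two substeps. First, a detour-elimination argument reduces to exactly one cross-orbit link per column. Any $s$-to-$d$ path in the grid traverses each column an odd number of times. If some column is crossed more than once, the path contains two cross-orbit links in opposite directions bracketing an intra-orbit segment on the adjacent orbit. Replacing this ``bump'' by the direct intra-orbit segment on the original orbit, between the same endpoints, yields a strictly shorter pure-intra path by Assumption~(\ref{ass:no-mixed-links}). Iterating leaves exactly one cross-orbit link per column. Second, a column-wise exchange argument identifies which link is used: let $\ell^\star$ denote the closest-to-extreme cross-orbit link in some column, which by Assumption~(\ref{ass:length-cross-links}) is the \emph{strictly} shortest cross-orbit link in that column. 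If a candidate shortest path uses some other link $\ell$ in that column, swapping $\ell$ for $\ell^\star$ and re-minimizing the intra-orbit connectors on either side produces a shorter path. The key structural input is that, for a type-B grid moving in different directions and close to a single extreme parallel, the grid wraps symmetrically around the extreme parallel, so the closest-to-extreme links in adjacent columns lie in topologically nearby rows; this bounds the intra-orbit adjustment cost of the swap, while the cross-orbit savings from Assumption~(\ref{ass:length-cross-links}) dominate.

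The hardest part will be making the exchange step quantitatively tight. The intra-orbit detour introduced by replacing $\ell$ with $\ell^\star$ depends on how the rows of closest-to-extreme links in adjacent columns align, and this added intra-orbit length must be bounded against the guaranteed cross-orbit savings. The ``close to a single extreme parallel'' hypothesis is essential here: otherwise the closest-to-extreme links in different columns could be scattered across distant rows, breaking the cost comparison. The type-B definition (as opposed to type-A) also plays a crucial role, since it is what guarantees that these closest-to-extreme links are generally interior to the grid rather than being forced onto the source or destination rows as happens in Theorem~\ref{theo:sp-descending-crossing}; the argument must therefore explicitly invoke the type-B geometry to locate the closest-to-extreme links correctly before the swap can go through.
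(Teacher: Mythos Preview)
Your substep~1 (one cross-orbit link per column) is fine and is essentially the content of the paper's Lemmas~\ref{lem:sp-no-step-down} and~\ref{lem:sp-no-step-aside}. The gap is in substep~2, the column-wise exchange. You propose to swap a non-optimal cross-orbit link $\ell$ for the closest-to-extreme link $\ell^\star$ and argue that the cross-orbit saving $len_t(\ell)-len_t(\ell^\star)$ outweighs any extra intra-orbit links introduced by the swap. But the model contains \emph{no} assumption relating the magnitude of cross-orbit length differences to the (fixed) intra-orbit link length. Property~\ref{prop:hlink-length} only gives a strict inequality, not a quantitative gap, so ``cross-orbit savings dominate'' cannot be justified. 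Your own hedge that ``this added intra-orbit length must be bounded against the guaranteed cross-orbit savings'' is exactly the step that fails.

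The paper avoids this comparison entirely by arguing in the opposite direction: it \emph{constructs} the path $P$ through the links in $\mathcal{F}_t$ and shows that $P$ simultaneously attains two independent lower bounds---minimum total cross-orbit length (shortest link in every column) \emph{and} minimum intra-orbit count (exactly $\#\text{rows}-1$). The second lower bound is the crux, and it relies on a monotonicity lemma you do not have: for any $h_1,h_2\in\mathcal{F}_t$, if $col(h_1)$ is closer to $col(s)$ than $col(h_2)$, then $row(h_1)$ is closer to $row(s)$ than $row(h_2)$. This is where the type-B hypothesis and Properties~\ref{prop:ascending-grids-pole-hlink-sourcecol}--\ref{prop:ascending-grids-pole-hlink-destcol} are actually used. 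Once monotonicity is established, the path through $\mathcal{F}_t$ never backtracks in the row direction, so it uses exactly $\#\text{rows}-1$ intra-orbit links; any competing path has at least that many intra-orbit links and strictly longer cross-orbit links somewhere, hence is longer---no exchange or cost comparison needed. Your informal claim that closest-to-extreme links in adjacent columns are ``topologically nearby'' is a weak shadow of this monotonicity; you need the full monotone statement, and once you have it, the exchange argument becomes superfluous.
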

}

Figure~\ref{fig:spacegrid-pole} shows the candidate shortest paths identified by Theorems~\ref{theo:shortest-path-descending-grid-wrapping} and~\ref{theo:shortest-path-ascending-grid-wrapping}.
The links closest to the extreme parallel are the shortest in their columns; geometrical properties depending on the grids' type then determine whether it is better to include all per-column shortest links or to cross entire rows.

Not all the grids moving in different directions are close to a single extreme parallel. 
We denote the remaining grids moving in different directions as \textit{close to both extreme parallels}.
The following theorem applies.

\Copy{theo:diffdir-both-poles}{
\begin{mytheorem}\label{theo:diffdir-both-poles}
	Consider a grid close to both extreme parallels and any partitioning of it into two sub-grids each close to a single extreme parallel.
	The shortest path from $s$ to $d$ includes only cross-orbit links traversed by the candidate shortest paths in the two sub-grids.
\end{mytheorem}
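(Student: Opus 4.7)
The plan is to reduce the two-extreme-parallel case to the single-extreme-parallel case already handled by Theorems~\ref{theo:shortest-path-descending-grid-wrapping} and~\ref{theo:shortest-path-ascending-grid-wrapping}. Fix any admissible partition of $G$ into sub-grids $G_N$ and $G_S$, each close to a single extreme parallel. By construction the two sub-grids share what I will call an \emph{interface row}, and every column of $G$ is split between $G_N$ and $G_S$ along this row.

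First I would show that any shortest $s$-$d$ path $P$ decomposes cleanly at the interface. Since $s$ lies in one sub-grid (say $G_N$) and $d$ in the other ($G_S$), $P$ must visit at least one satellite on the interface row; let $u$ be the last such satellite and write $P = P_N \cdot P_S$, where $P_N$ goes from $s$ to $u$ and $P_S$ from $u$ to $d$. Optimality of $P$ forces $P_N$ and $P_S$ to themselves be shortest paths between their endpoints, and Theorem~\ref{theo:ingrid-vs-outgrid} guarantees that $P_N$ stays in $G_N$ and $P_S$ stays in $G_S$.

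Next I would characterize the cross-orbit links of $P_N$ and $P_S$ using Theorems~\ref{theo:shortest-path-descending-grid-wrapping} and~\ref{theo:shortest-path-ascending-grid-wrapping} applied inside each sub-grid. The geometric engine of those theorems is that, within a region close to a single extreme parallel, the cross-orbit link closest to that parallel is the shortest in its column (assumption~(\ref{ass:length-cross-links})); this fact is column-local and therefore survives when applied to a sub-path spanning only a contiguous subset of columns. Combined with assumption~(\ref{ass:no-mixed-links}), which forbids cheap shortcuts that mix link types, this yields for $P_N$ and $P_S$ the same candidate cross-orbit links as those identified for the full sub-grid shortest paths, restricted to the columns the sub-path actually traverses. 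Taking the union of these two candidate sets covers every cross-orbit link on $P$, which is the claim of Theorem~\ref{theo:diffdir-both-poles}.

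The main obstacle I anticipate is precisely this endpoint-flexibility step: verifying that Theorems~\ref{theo:shortest-path-descending-grid-wrapping} and~\ref{theo:shortest-path-ascending-grid-wrapping}, stated for the $s$-$d$ corners of a sub-grid, continue to hold when one endpoint slides onto the interface row. Edge cases to rule out include $P_N$ or $P_S$ using intra-orbit links along the interface row to bypass the expected cross-orbit choices, and the fact that Theorem~\ref{theo:shortest-path-descending-grid-wrapping} admits a \emph{range} of viable rows rather than a single one, so the candidate set in a type-A sub-grid must still be intersected correctly once an endpoint lies on the interface. Once these are settled, concatenating the two sub-path characterizations yields the theorem, and the argument holds uniformly for every admissible partition.
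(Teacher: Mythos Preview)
Your decomposition matches the paper's: split $G$ along the shared border into two sub-grids, pick a border node $x$ that $P$ visits, and write $P=P_1\cdot P_2$ with each $P_i$ a shortest sub-path lying inside its sub-grid. The divergence is at the step you flag as the main obstacle.

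You plan to invoke Theorems~\ref{theo:shortest-path-descending-grid-wrapping} and~\ref{theo:shortest-path-ascending-grid-wrapping} on $P_1$ and $P_2$ and then patch up the fact that one endpoint sits on the interface rather than at a corner. The paper never attempts that generalization. Instead it argues by a direct link-swap: assume $P_1$ uses some cross-orbit link $(u,v)$ not in the candidate set $\mathcal{C}_t$ for $G_1$; pick the candidate link $(w,z)\in\mathcal{C}_t$ in the same column whose row is nearest to $(u,v)$; and replace the short segment of $P_1$ around $(u,v)$ by a detour through $(w,z)$, keeping the intra-orbit hop count unchanged. Because $(w,z)$ is by construction closer to the relevant extreme parallel than $(u,v)$, the detour is strictly shorter, contradicting optimality of $P_1$. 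This is precisely the ``column-local'' fact you already isolate, used as a one-link surgery on $P_1$ rather than as a structural statement about $s$--$x$ shortest paths. Doing it this way sidesteps the endpoint-flexibility question entirely, so there is no need to re-prove Theorems~\ref{theo:shortest-path-descending-grid-wrapping}--\ref{theo:shortest-path-ascending-grid-wrapping} for shifted corners.

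A smaller correction: your appeal to Theorem~\ref{theo:ingrid-vs-outgrid} for $P_N\subseteq G_N$ does not work as stated, since that theorem concerns the four $s$--$u$ grids and $G_N$ is generally not one of them. The paper instead relies on the prior containment $P\subseteq G$ together with the choice of $x$ on the common border of $G_1$ and $G_2$.
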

}

Theorem~\ref{theo:diffdir-both-poles} constrains shortest paths less than previous theorems. 
This is not a big limitations in practice because the shortest path almost always lies in one of the other three grids, as we confirmed in our simulations (\S\ref{sec:eval}).

\newpage

\section{\approach Internals}
\label{sec:internals}

\approach path encoding must minimize satellites' work to forward, fast reroute and validate packets, in huge networks.
These requirements rule out existing path encodings, such as per-path~\cite{mpls-label-rfc3032}, IGP-based~\cite{SegmentRoutingArchitecture-rfc8402}, and per-router-interface~\cite{wilfong-pathswitching-conext15} ones.
Based on our theory, we co-design a new path encoding scheme (\S\ref{ssec:internals-forwarding}), a forwarding procedure (\S\ref{ssec:internals-forwarding}), and heuristics for fast rerouting (\S\ref{ssec:internals-frr}) and path validation (\S\ref{ssec:internals-validation}).

\subsection{Path Encoding and Packet Forwarding}
\label{ssec:internals-forwarding}

Source GSes enrich the header of every source-routed packet with the ID of the destination GS, and a list of \textit{path tags}.
Each tag represents a sequence of links that are either all intra-orbit or all cross-orbit, specifying the number of steps and direction of travel.
\system satellites forward packets according to the carried tags, using the procedure shown in Figure~\ref{alg:sat-logic}.

\begin{figure}
	\small
	\hrule
	\medskip
	\begin{algorithmic}[1]
		\STATE \textbf{process\_tags} (tag\_list, curr\_index):
		\STATE curr\_index $\leftarrow$ update\_tags(tag\_list)
		\IF{curr\_index == len(tag\_list)}
		\RETURN None
		\ENDIF
		\STATE intended\_fwd\_dir $\leftarrow$ get\_dir(tag\_list[curr\_index])
		\STATE actual\_fwd\_dir $\leftarrow$ intended\_fwd\_dir
		\IF {is\_fast\_rerouting\_needed(intended\_fwd\_dir)}
		\IF {curr\_index < len(tag\_list) - 1}
		\STATE actual\_fwd\_dir $\leftarrow$ get\_dir(tag\_list[curr\_index + 1])
		\STATE decrement\_steps(tag\_list[curr\_index + 1])
		\ELSE
		\STATE actual\_fwd\_dir $\leftarrow$ get\_prev\_dir(tag\_list)
		\STATE tag\_list.append\_tag(get\_opposite\_dir(actual\_fwd\_dir))
		\ENDIF
		\ENDIF
		\RETURN (actual\_fwd\_dir, tag\_list,curr\_index)
	\end{algorithmic}
	\medskip
	\hrule
	\medskip
	\caption{Pseudo-code of satellites' forwarding logic.}
	\label{alg:sat-logic}
\end{figure}

Figure~\ref{fig:pathtags-encoding} displays an example. The
intended path consists of three sequences of same-type links,
and it is thus encoded by three tags.  The top tag E8 (rightmost
in the packet's representation) indicates that the packet has first to
be forwarded east for 8 steps. The top part of the figure
illustrates the implementation of this tag.  After traversing
8 cross-orbit links, the packet is forwarded south\footnote{We use
``south'' for ease of explanation.
In practice, however, links along an orbital plane travel north
for half the orbit and south for the other.  Tags therefore encode
if a sequence of intra-orbit links is prograde or retrograde.} for 7 steps (tag S7),
and so on.

Our tag-based path representation is typically concise.
Theorems~\ref{theo:sp-descending-crossing}-\ref{theo:shortest-path-ascending-grid-wrapping} indeed indicate that shortest paths often concatenate a few sequences of links of the same type.
Experimental results discussed in \S\ref{sec:eval} provide a confirmation.

Tag-based forwarding at satellites is straightforward in the absence
of failures. An index in the packet indicates
which tag is the current one.  A forwarding satellite decrements the current
tag, and checks if it contains zero steps: if so, it consumes the current
tag by incrementing the packet's tag index (line 2
in Figure~\ref{alg:sat-logic}).
Keeping consumed tags in the packet simplifies fast rerouting, as
explained below.
The satellite then forwards the packet:
if all tags are consumed, it delivers the packet to the destination GS (lines 3-5), otherwise it forwards the packet as specified in the current tag (lines 6-7).

Let's consider again Figure~\ref{fig:pathtags-encoding}.
Every satellite in the path from $s$ and $x$ simply checks the top tag, decrements the number of steps in it, and forwards the packet east. When $x$ receives the packet with top tag E1, it decrements the number of steps, and consumes the tag by incrementing the tag index. Then, $x$ sends the packet south, following the new current tag S7. The packet is forwarded south for 7 steps, until $y$ is reached. The last tag E9 ensures that the packet is delivered to $d$.

\subsection{Fast rerouting}
\label{ssec:internals-frr}

Fast rerouting is triggered at satellites whenever the direction in the current tag cannot be followed.
There are two cases to consider, as shown in Figure~\ref{fig:frr}.

\begin{figure}[t]
	\centering
	\includegraphics[width=.9\columnwidth]{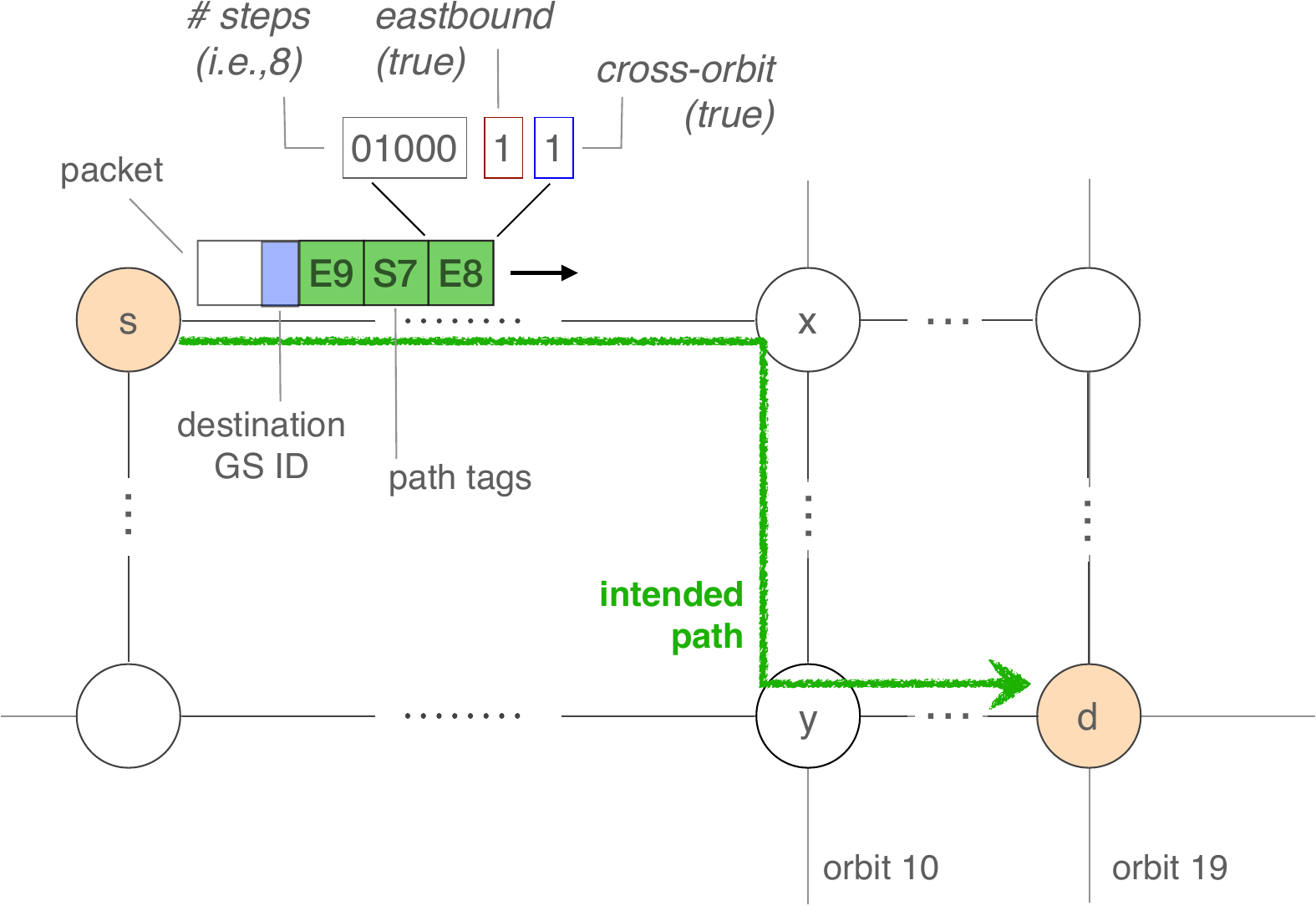}
	\caption{Example of \system path encoding.}
	\label{fig:pathtags-encoding}
\end{figure}

\begin{figure*}[t]
	\centering
	\begin{subfigure}[b]{0.9\columnwidth}%
		\includegraphics[width=.9\columnwidth]{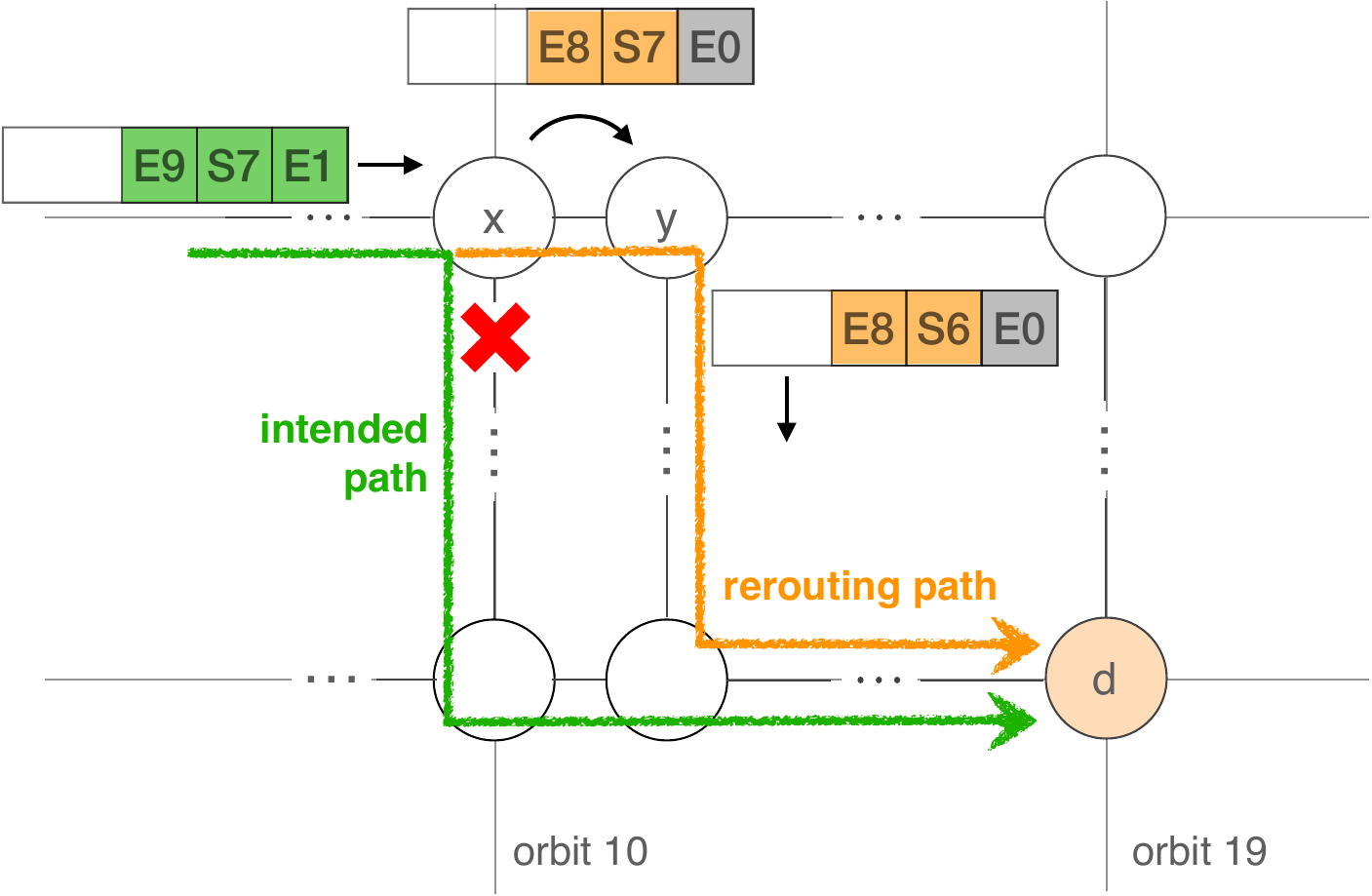}
		\caption{Case 1.}
		\label{fig:frr-base}
	\end{subfigure}
	\qquad
	\begin{subfigure}[b]{0.9\columnwidth}%
		\centering
		\includegraphics[width=.8\columnwidth]{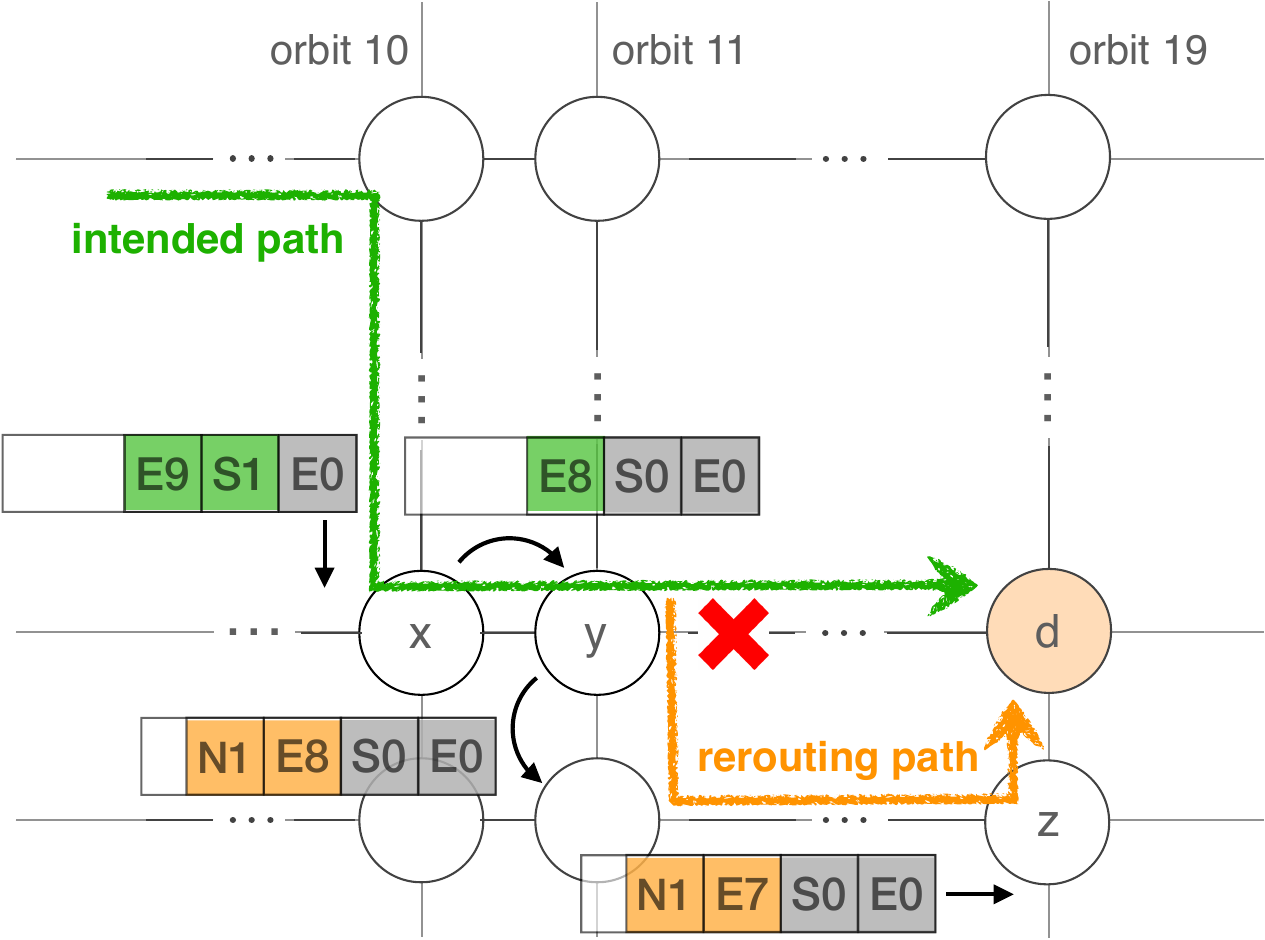}
		\caption{Case 2.}
		\label{fig:frr-special}
	\end{subfigure}
	\caption{Illustration of \system fast rerouting scheme.}
	\label{fig:frr}
\end{figure*}

The first case is the focus of lines 9-11 in Figure~\ref{alg:sat-logic}
and exemplified in Figure~\ref{fig:frr-base}:
packets include at least two unconsumed tags when they reach
the failed link (i.e., S7 and E9 in Figure~\ref{fig:frr-base}).
Any such packet is forwarded according to the second-top unconsumed tag
(i.e., E9) for one hop; that tag is also decremented (i.e., becoming
E8 after satellite $x$).
The rerouted packet then follows a sequence of links parallel to the one encoded in the current tag. 

In the second case, visualized in Figure~\ref{fig:frr-special}, packets to be rerouted include only one unconsumed tag when they reach the failure.
These packets are still rerouted over a sequence of links parallel to the current tag.
Since we do not have other unconsumed tags, however, the rerouting satellite chooses the next hop according to the direction of the last consumed tag (line 13 in Figure~\ref{alg:sat-logic}); it then appends a new tag pointing in the opposite direction, to guarantee that the packets eventually reach their destination (line 14 in Figure~\ref{alg:sat-logic}).
Packets without any consumed tag (i.e., all links in the original path are intra-orbit or all are cross-links) are rerouted on an arbitrary direction orthogonal to the one of the current tag.

In both cases, if the rerouting link is also failed, \system satellites simply drop packets.
This is intended to limit delay inflation for rerouted packets while also minimizing computations at rerouting satellites.
Packets are also dropped if the RF link between the destination satellite and the destination GS is failed.
Following our routing-free core philosophy, source-destination GSes are indeed responsible to agree on the RF links to use at any time.

\myitem{Single-link and single-node failures.}
\system fast-rerouting scheme is mainly designed for single-ISL and single-satellite failures.
We indeed expect that multiple link failures requiring fast rerouting be infrequent, as GSes quickly recompute paths after being notified of each failure.

For single-link or single-node failures, \approach provides the following guarantees, as proved in Appendix~\ref{app:frr}.

\Copy{theo:frr}{
	\begin{mytheorem}\label{theo:frr}
		If any $s$-$d$ grid includes one failed link or node $f \neq d$, our fast rerouting scheme guarantees that every packet from $s$ reaches $d$ with a maximum hop stretch of two links, and a delay increase equal to either the length of two links or the difference between two adjacent rows in the grid.
	\end{mytheorem}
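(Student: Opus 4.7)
The plan is to argue by a case split that mirrors the two branches of the fast-rerouting logic in Figure~\ref{alg:sat-logic}. Let $r$ be the rerouting satellite (the last hop before $f$), and recall from Theorems~\ref{theo:sp-descending-crossing}--\ref{theo:shortest-path-ascending-grid-wrapping} that the remainder of the source-routed trip at $r$ is a concatenation of same-type runs, each carried by a single tag. The main workhorse will be a grid-commutativity observation: within any $s$-$d$ grid, one intra-orbit step followed by one cross-orbit step reaches the same satellite as the swap, because intra-orbit runs stay within a column and cross-orbit runs stay within a row. Delivery to $d$ and the hop-count bound will follow from iterating this observation; the delay bound will follow from tracking which cross-orbit links get relocated and to which rows.

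\textbf{Case 1 (two or more unconsumed tags at $r$).} Let the remaining tags be $[X_k, Y_m, \ldots]$, with $X$ the direction blocked by $f$. The algorithm prepends one $Y$-step, decrements $Y_m$ to $Y_{m-1}$, and leaves $X_k$ intact. Iterating commutativity, the sequence $Y_1, X_k, Y_{m-1}$ lands on the same satellite as $X_k, Y_m$, so the subsequent tags still steer the packet to $d$; one $Y$-hop merely replaces one $Y$-hop at a shifted grid position, and the hop count is exactly preserved. The only geometric change is either that a single cross-orbit link is relocated to a different row or that the entire cross-orbit run shifts by one row; in both sub-cases the delay change is bounded by the length difference between two rows of the grid, which yields the ``difference between two adjacent rows'' term. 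Using the single-failure hypothesis, I then verify that neither the prepended $Y$-step nor the parallel $X$-run can meet $f$, since both live in a row or column disjoint from the failed element.

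\textbf{Case 2 (one unconsumed tag $X_k$).} The algorithm takes one step in the direction $Z$ of the last consumed tag (or orthogonal to $X$ if no tag has been consumed), runs $X_k$ from the new satellite, and appends a single-step tag in the opposite direction $-Z$. Commutativity places the final position at $d$. Two extra hops are introduced, both intra-orbit when $X$ is cross-orbit (and vice versa); these two hops account simultaneously for the hop stretch of two and for the ``length of two links'' contribution to the delay bound. The $X_k$ run now sits one row (or column) off the original, which accounts for the ``difference between two adjacent rows'' contribution. Feasibility of the detour again reduces to the single-failure assumption.

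The step I expect to be the main obstacle is handling node failures and partially consumed current tags uniformly: when $f$ is a node, two outgoing links are simultaneously unavailable, and when $r$ sits deep inside an intra-orbit run the relocated cross-orbit link may lie several rows away from its original position. I would resolve both by showing that the very first rerouting hop moves orthogonally to the direction of $f$ and into a column or row that $f$ does not inhabit, and by invoking Assumption~(\ref{ass:length-cross-links}) (monotone cross-orbit link length in distance from the equator) together with the fact that the shortest-path shapes of Theorems~\ref{theo:sp-descending-crossing}--\ref{theo:shortest-path-ascending-grid-wrapping} align the reroute with full grid rows, so that the worst-case length difference can be cleanly bounded by the claimed row-to-row gap.
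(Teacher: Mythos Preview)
Your proof takes essentially the same route as the paper's: split on whether one or more unconsumed tags remain at the rerouting satellite, use grid commutativity to show the detour still terminates at $d$, and argue that the orthogonal first hop together with the single-failure hypothesis keeps the detour off $f$ (the paper phrases this last step as ``the rerouted path $Q$ never includes $y$,'' which is exactly what lets the single-link argument cover node failures). The obstacle you anticipate---a cross-orbit link relocated across several rows when the failure sits deep inside an intra-orbit run---is genuine, but the paper's own proof does not close it either (it asserts ``no delay increase'' for the symmetric intra-orbit sub-case of Case~1 and records only the two-link term in Case~2), so this is a shared looseness in the theorem's delay accounting rather than a defect specific to your proposal.
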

}

\myitem{Multi-link failures.}
Although expected to be rare, we still want to provide basic guarantees for multiple link or node failures, especially regarding forwarding loops.
Loops are indeed worse than dropping packets because they also waste network resources (i.e., link bandwidth, satellites' power, etc.).

As is, the above algorithm may cause loops if multiple satellites reroute packets on paths with failed links.
For example, if the last link of the rerouting path in Figure~\ref{fig:frr-special} is failed, $z$ will reroute the packet on a new sub-path (E1,N1,W1).
If the last link of this latter sub-path is also failed, the packet will be rerouted again, taking one more step north before being sent west.
With enough carefully placed failures, the packet can eventually reach $y$ again, and hence be trapped in a loop.

To avoid loops, we allow at most two reroutings per packet\footnote{Two is the maximum number of reroutings always guaranteeing the absence of loops in our topologies.}.
Each packet has a 2-bit loop flag, set to zero by the source GS.
When rerouting, any satellite checks if the loop flag has a value lower than two: if so, it increments the flag and reroutes the packet, otherwise it drops the packet\footnote{Source GSes are \emph{not} notified by possible packet drops, exactly as it happens when routers drop packets in the current Internet}.

We note that \system often delivers packets to their destinations, over low-delay paths, even in the presence of several on-path failures, as we experimentally show in \S\ref{sec:eval}.

\subsection{Packet Validation}
\label{ssec:internals-validation}

\system satellites validate premium packets received over RF links by running a few checks on them: they drop packets failing any check.
GSes are informed of these checks, so they can always generate traffic passing the validation.

Building upon our theory, \system checks are designed to be simple, cheap and quick to run, minimizing both resource consumption and processing delay.
In fact, they are amenable to be implemented in hardware.
Our checks are also not overly constraining, therefore allowing legitimate discrepancies from theoretical shortest paths (e.g., upon failures) and customized routing algorithms at source GSes.

\begin{table}[]
	\centering
	\small
	\def\arraystretch{1.2}
	\begin{tabular}{@{}ll@{}} 
		\toprule
		test & description \\
		\midrule
		check1 & $P$'s last satellite is reachable by the destination GS. \\
		check2 & $P$ inverts direction no more than once; the inversion \\
           & (if any) has $\leq$ 3 links, unless $G_{min}$ is a single-path grid. \\
		check3 & $P$ has $\leq$ 2 links among intra-orbit links exceeding $r_{min}$ \\
           & and cross-orbit links exceeding $c_{min}$.\\
		\bottomrule
		\multicolumn{2}{l}{$P$ is the input path; $G_{min}$ is the minimal-size grid; $r_{min}$ and $c_{min}$} \\
		\multicolumn{2}{l}{respectively are the number of rows and columns in $G_{min}$.} \\
	\end{tabular}
	\caption{\system packet validation checks.}
	\label{tab:pathval-checks}
\end{table}

\myitem{Validation checks.}
Consistently with the service definition in \S\ref{sec:problem}, a preliminary check assesses if the source GS is allowed to send premium traffic to the destination GS specified in the packet, and if the traffic rate is not excessive.

The other checks, summarized in Table~\ref{tab:pathval-checks}, evaluate properties of encoded paths: if source-routed paths end at a reasonable satellite (check 1), zig-zag or loop (check 2), and cross a reasonable number of intra- and inter-orbit links (check 3).

Check 1 determines if the specified destination GS is actually reachable from the destination satellite.
To do so, it translates packet tags into the topological position (relative to the satellite running the check) of the last satellite in the path.
The geographical position of such destination satellite is then derived from the pre-loaded satellite orbits.

Check 2 computes how many times and for how many links the input path inverts direction -- for example, starting east and going west at a later point. 
Our theorems prove that shortest paths in a working constellation never invert direction.
Paths around failures are expected to invert directions rarely.
Since they require similar information, \system runs check 1 and 2 with a single pass on the tag list. 

Check 3 builds upon the intuition that shortest paths are contained in a single grid (see Theorem~\ref{theo:ingrid-vs-outgrid}), often the smallest in size.
This tends to remain true even if a few ISLs fail.
Hence, check 3 compares the number of cross-orbit links in the input path with the number of columns of the minimal grid, and the number intra-orbit links in the path with the number of rows in that grid.

\begin{figure*}[t]
	\centering
	\begin{subfigure}[b]{0.9\columnwidth}%
		\includegraphics[width=\columnwidth]{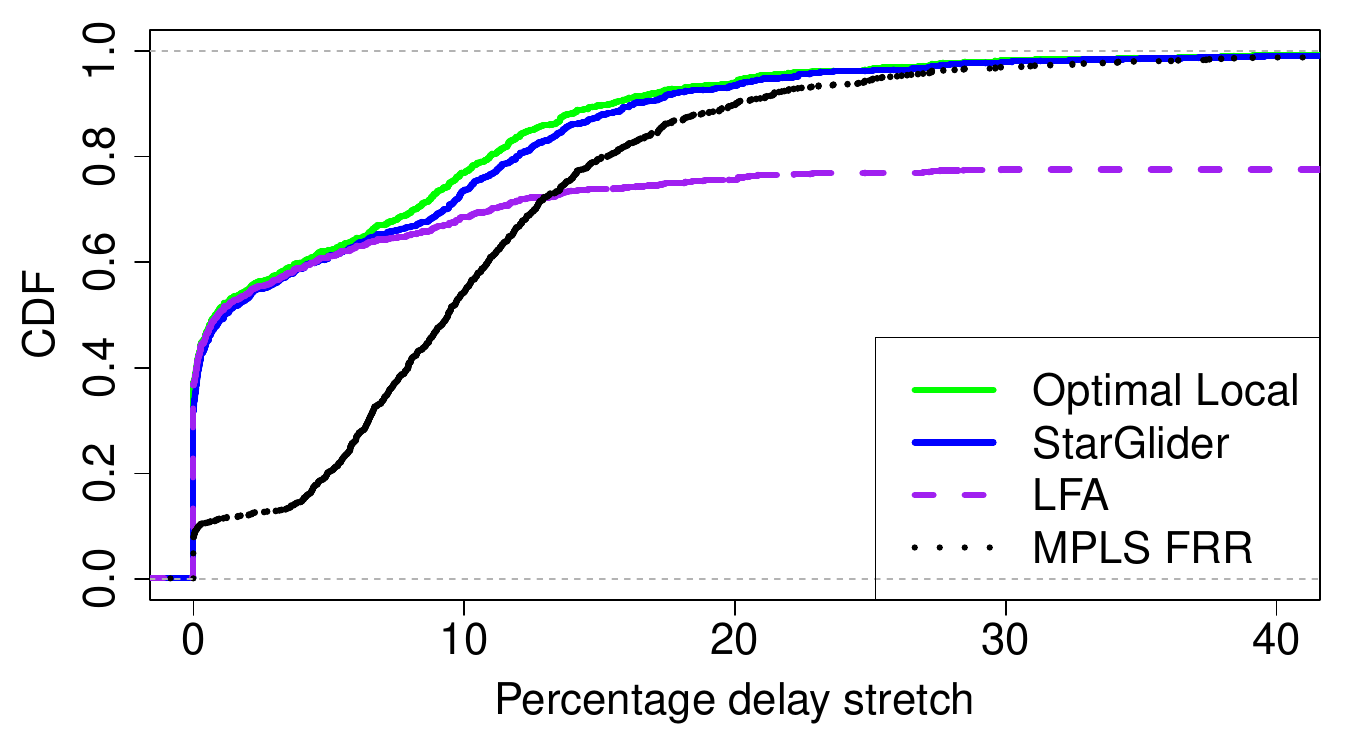}
		\caption{Single-link failures}
		\label{fig:eval-frr-1link-10frames-stretch}
	\end{subfigure}
	\qquad
	\begin{subfigure}[b]{0.9\columnwidth}%
		\centering
		\includegraphics[width=\columnwidth]{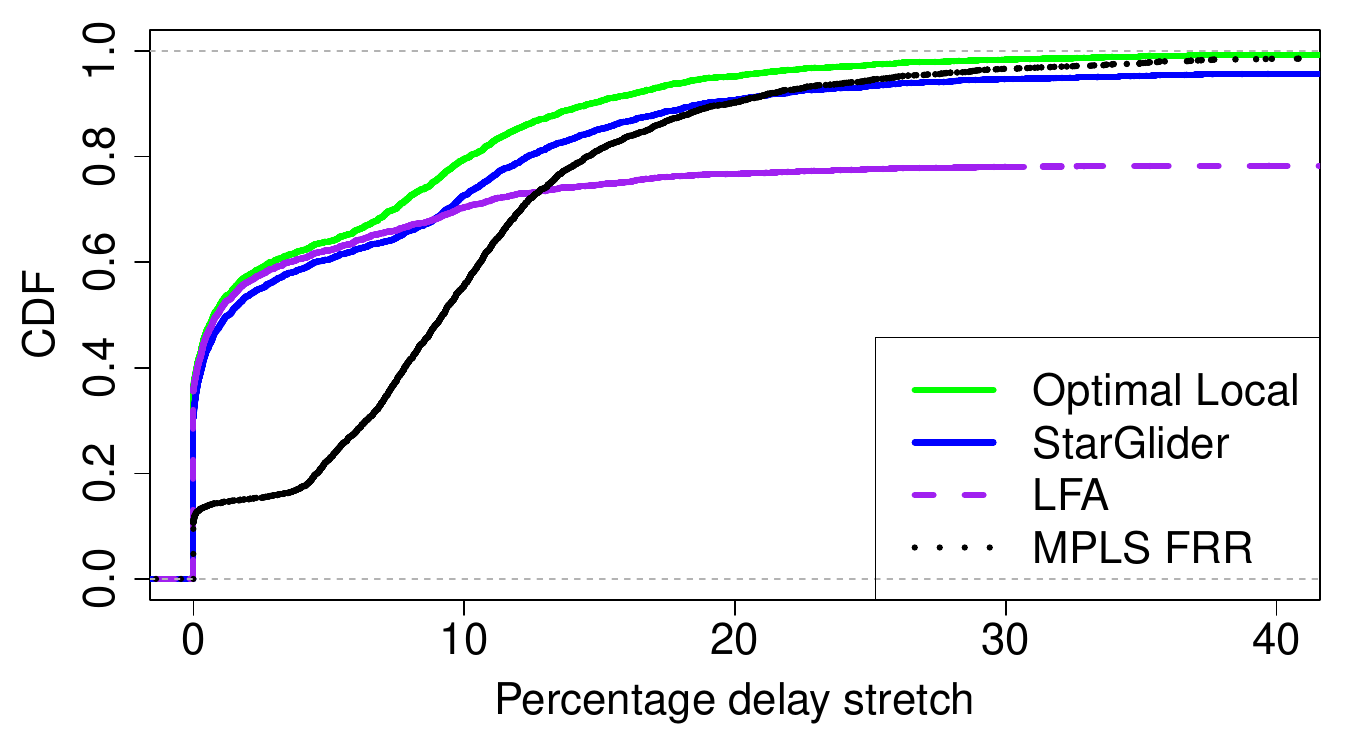}
		\caption{Three-link failures}
		\label{fig:eval-frr-3link-10frames-stretch}
	\end{subfigure}
	\caption{Fast rerouting delay stretch. In addition to requiring no computation on satellites, \system fast-reroutes packets on paths shorter than state-of-the-art alternatives. We remark that in each of our experiments, \system fast-rerouting paths always include $<$ 3.6\% ISLs, and results would remain the same if \emph{all} the other $\geq 96.4\%$ links were failed. This suggests that \system is robust to \emph{many} more failure scenarios than the simulated ones.}
	\label{fig:eval-frr}
\end{figure*}

\section{Evaluation}
\label{sec:eval}

We use Unity~\cite{unity} to simulate the fully deployed Starlink constellation as
described in~\cite{mark-hotnets18}.
We run all our experiments on a commodity laptop (2.9 GHz CPU, 16 GB RAM).

In our simulations, source GSes compute shortest paths using the Dijkstra's algorithm,
which takes a few milliseconds.
In the following, we evaluate \system fast rerouting and validation schemes built
on top of these path computations.

\vspace{-0.15cm}
\subsection{Forwarding and Fast Rerouting}

We compare \system against forwarding and FRR solutions commonly adopted in the Internet.
State-of-the-art FRR schemes can be classified into two families.
Approaches in the first family~\cite{lfa-rfc5286,uturn-draft06,chiesa-purr19} enable devices next to a failure to redirect \textit{unmodified} packets to unaffected neighbors.
Approaches in the second family~\cite{mplsfrr-rfc4090,rlfa-rfc7490,d2r-sosr21}
\textit{tunnel} traffic to remote nodes.
These two families achieve inherently different tradeoffs in terms of effectiveness, overhead and scalability, whereas techniques in each family mainly differ in how they select backup paths.
We thus experiment with the most popular representative for each family: LFA~\cite{lfa-rfc5286} and MPLS FRR~\cite{mplsfrr-rfc4090}, respectively.

Consistently with our low-delay routing goal, we implement LFA and MPLS FRR so that satellites select minimal-delay paths from among the backup paths allowed by each scheme.
Since ISL lengths change continuously, we adapt LFA and MPLS FRR to recompute backup paths periodically.

We run 1,000 experiments with randomly selected source and destination satellites.
In each experiment, we first compute LFA and MPLS FRR backup paths for the current shortest path between the selected satellites.
Within the first second of simulation, we then fail one link in the shortest path.
Hence, LFA and MPLS FRR reroute on paths less than a second old, which is a favourable case for these schemes:
real implementations would likely compute backup paths less often than once per second.
\system entails no pre-computation.

\newpage

\noindent
\textbf{Resource consumption.}
We separately evaluate resources consumed by \approach path encoding and forwarding.

\myitemit{Packet overhead.}
\system packet overhead depends on the tags needed to encode paths and possibly to fast reroute.
In our experiments, the median (resp., max) number of tags is 2 (resp., 9) before failures, and 3 (resp., 10) after fast rerouting.
Each tag is 9 bit long: 2 bits encode the tag direction, and 7 bits the number of steps per direction (bounded by the number of satellites per orbit which is 66).
Hence, in the absence of failures, the packet overhead is 18 bits in median, and 81 bits at most.
Triggering rerouting brings the overhead to 27 bits in median, and 90 bits in the worst case.

As a comparison, LFA has no packet overhead, but MPLS FRR requires one 32-bit-long MPLS label per packet~\cite{mpls-label-rfc3032}, an overhead comparable to \system.

\myitemit{Satellite resources.}
By design, \system requires virtually no resources other than packet forwarding hardware:
we envision that \system satellites forward and fast-reroute traffic by processing the packet tags in the data plane.

In contrast, for each hop in every path to protect, LFA requires one shortest path computation per neighbor, and MPLS FRR one path computation.
For a satellite, this equates to $\approx$ 5 ms for LFA and 1-2 ms for MPLS FRR per path to protect in our experiments.
These computation times are not negligible, especially when many backup paths have to be re-computed.
Even worse, such computations consume the scarcest satellite resource: power.
LFA and MPLS FRR also need resources to run an IGP; MPLS FRR also needs to maintain tunnels for backup paths.
Finally, both LFA and MPLS FRR pre-install one backup forwarding entry per path to protect on every satellite, except  
when LFA cannot find a safe neighbor~\cite{lfa-applicability-rfc6571}.

\myitem{Effectiveness.}
Figure~\ref{fig:eval-frr-1link-10frames-stretch} shows a CDF of the delay stretch of backup paths enforced by \system, LFA and MPLS FRR.
The reported delay stretch is the percentage increase of rerouted path latency over that of the optimal post-failure path.
When a scheme is unable to fast reroute traffic, we assign an infinite value to the delay stretch.
The plot also includes the delay of the best possible backup paths from the satellite next to the failure: such delay is denoted as Optimal Local.

\system always computes a working backup path with limited delay stretch, close to the Optimal Local one.
It reroutes on the overall optimal path for $\approx$ 30\% of the simulated failures, and on a path $\leq$ 5\% longer than the optimal one in $\approx$ 60\% of the experiments.
Its delay stretch is >20\% for only 7\% of the failures -- and so is the Optimal Local one.

In contrast, LFA delay stretch is as good as \system for only $\approx$ 60\% of the failures.
In the remaining cases, LFA either reroutes on paths with higher delay than \system, or, for more than 20\% of failures, cannot reroute at all.
MPLS FRR redirects packets on paths with delays that are always higher than \system, often significantly so.

\myitem{Additional experiments.}
We also run experiments with two and three link failures, either consecutive or simultaneous.
In these experiments, we always fail links in the shortest path before each failure.
Figure~\ref{fig:eval-frr-3link-10frames-stretch} shows the result of 1,000 simulations with three link failures.
Although designed for single-link failures, \system still provides low-delay backup paths, typically outperforming LFA and MPLS FRR.
It drops packets only after less than 4\% of the failures.

\subsection{Packet Validation}
\label{ssec:eval-validation}

We now evaluate \system validation against a Dijkstra-based approach that compares the length of the path to be validated against the shortest one.

In each experiment, we randomly select a pair of traffic source and destination.
Shortly after starting a simulation, we generate packets to be validated.
Some of these packets are legit, and are source-routed over the shortest paths (i) when all ISLs work, (ii) when one ISL fails in the current shortest path, and (iii) when one ISL fails in the shortest path and another ISL fails in the optimal backup path.
Other packets are malicious.
Consistently with the attacker model in \S\ref{sec:problem}, malicious sources try to steer traffic through target links or satellites outside the source-destination shortest paths.

For malicious packets, we implement an \emph{attacker strategy tailored against \system validation}.
Malicious sources generate packets for destination satellites reachable by allowed destination GSes, so that they always pass \system pre-check and check 1. 
To increase chances to pass \system checks 2 and 3, malicious paths concatenate the shortest path from the source satellite to the attack target with the shortest path from the target to the destination satellite.
Our theory (\S\ref{sec:theory}) indeed shows that shortest paths minimize the metrics tracked by checks 2 and 3 -- i.e., number and size of inversions of directions and traversed links.

\begin{table}[]
	\centering
	\small
	\def\arraystretch{1.2}
	\begin{tabular}{@{}l|c|c|c|c@{}} 
		\toprule
		& \multicolumn{2}{c|}{process tags} & \multicolumn{2}{c}{extract metrics} \\
		check ID & legit & non-legit & legit & non-legit \\
		\midrule
		Dijkstra-based & 27-144 $\mu$s  & 34-157 $\mu$s  & 2-7 ms & 2-7 ms  \\
 	    check1 & \multirow{2}{*}{4-8 $\mu$s}  & \multirow{2}{*}{4-8 $\mu$s}  & 2-6 $\mu$s & 3-7 $\mu$s  \\
		check2 &  &  & 0  & 0 \\
		check3 & 2-4 $\mu$s & 2-4 $\mu$s & 1-3 $\mu$s  & 1-3 $\mu$s  \\
		\bottomrule
	\end{tabular}
	\caption{Time efficiency of \system validation against the Dijkstra-based baseline. Cells show 5th-95th percentiles.}
	\label{tab:pathval-time}
\end{table}

\myitem{Resource consumption.}
Table~\ref{tab:pathval-time} shows the efficiency of \system checks when performed one after the other, across 1,000 experiments.
Checks 1 and 2 have a common tag process time because a single pass on the tags is sufficient for both checks.
The extract metrics time for check 2 is zero because processing tags directly provide check 2 metrics.

All \system checks take only a few microseconds, with no statistical difference between legit and malicious paths.
This is because by design (see \S\ref{sec:internals}), they only process information contained in packets' tags, and access static, pre-stored data.
In comparison, the Dijkstra-based approach is more than 1000x slower.
Computing the shortest path on an updated network graph already takes a few milliseconds.
Calculating the current delay of all links in such a path and summing them together requires a few more milliseconds.

\system checks also have lower memory requirements than the Dijkstra-based approach, as they don't require to track the delays of all ISLs in the network.

\begin{table}[]
	\centering
	\small
	\def\arraystretch{1.2}
	\begin{tabular}{@{}l|c|c|c@{}}
		\toprule
		& & \multicolumn{2}{c}{Dijkstra-based} \\
		scenario & \system & 10\% stretch & 20\% stretch \\
		\midrule
		attack targets & & & \\
   		- 1 hop away & 80\% & 59.5\% & 25.7\% \\
		- 2 hops away & 83.5\% & 77.9\% & 53.1\% \\
		optimal backup & & & \\
		- 1 fail & 98.4\% & 95.7\% & 97.6\% \\
		- 2 fails & 91.3\% & 89.4\% & 95.1\% \\
		\bottomrule
	\end{tabular}
	\caption{Success rate (i.e., correct validation) of \system against the Dijkstra-based baseline. Results are very similar for Dijkstra-based with absolute delay thresholds; trends are consistent for experiments with 3-hop away targets and 3 fails.}
	\label{tab:pathval-accuracy}
\end{table}

\begin{figure*}[t]
	\centering
	\begin{subfigure}[b]{0.99\columnwidth}%
		\includegraphics[width=.95\columnwidth]{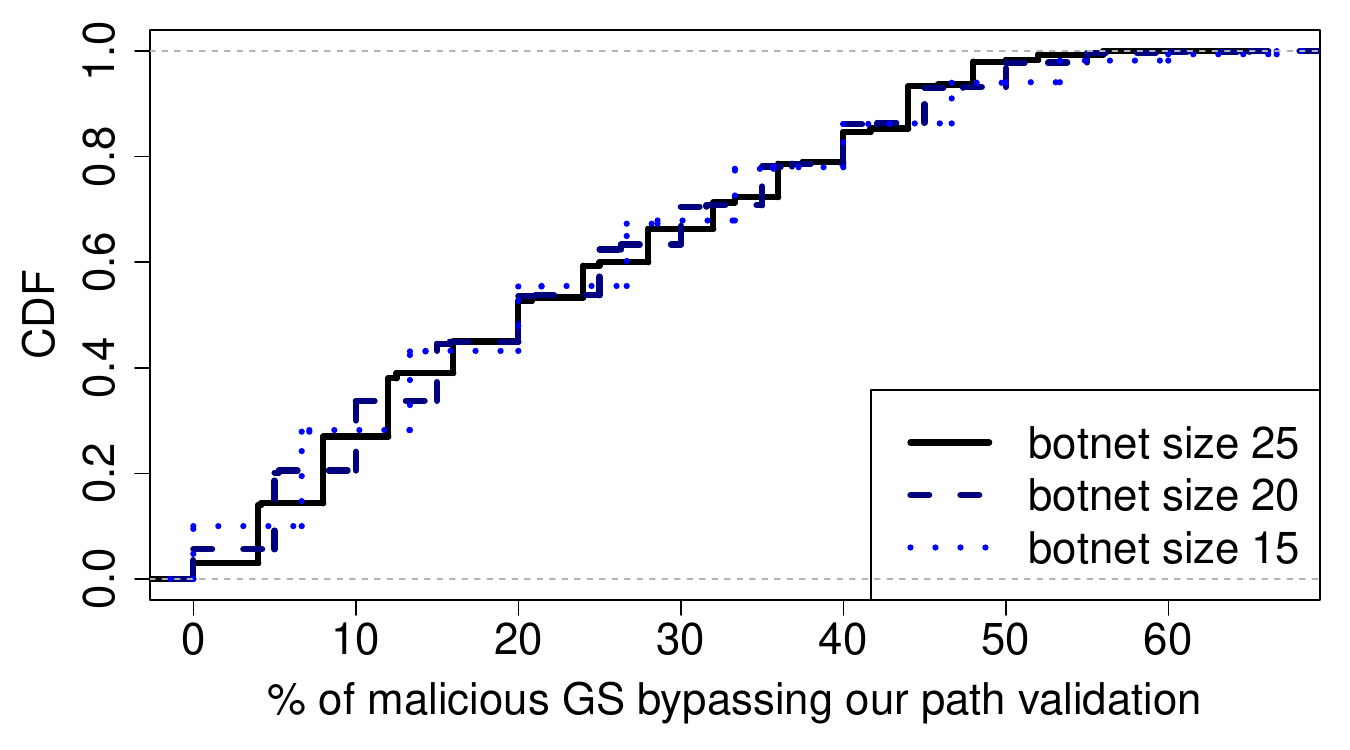}
		\caption{}
		\label{fig:linkflood-successful-gspairs}
	\end{subfigure}
	\hfill
	\begin{subfigure}[b]{0.99\columnwidth}%
		\centering
		\includegraphics[width=.95\columnwidth]{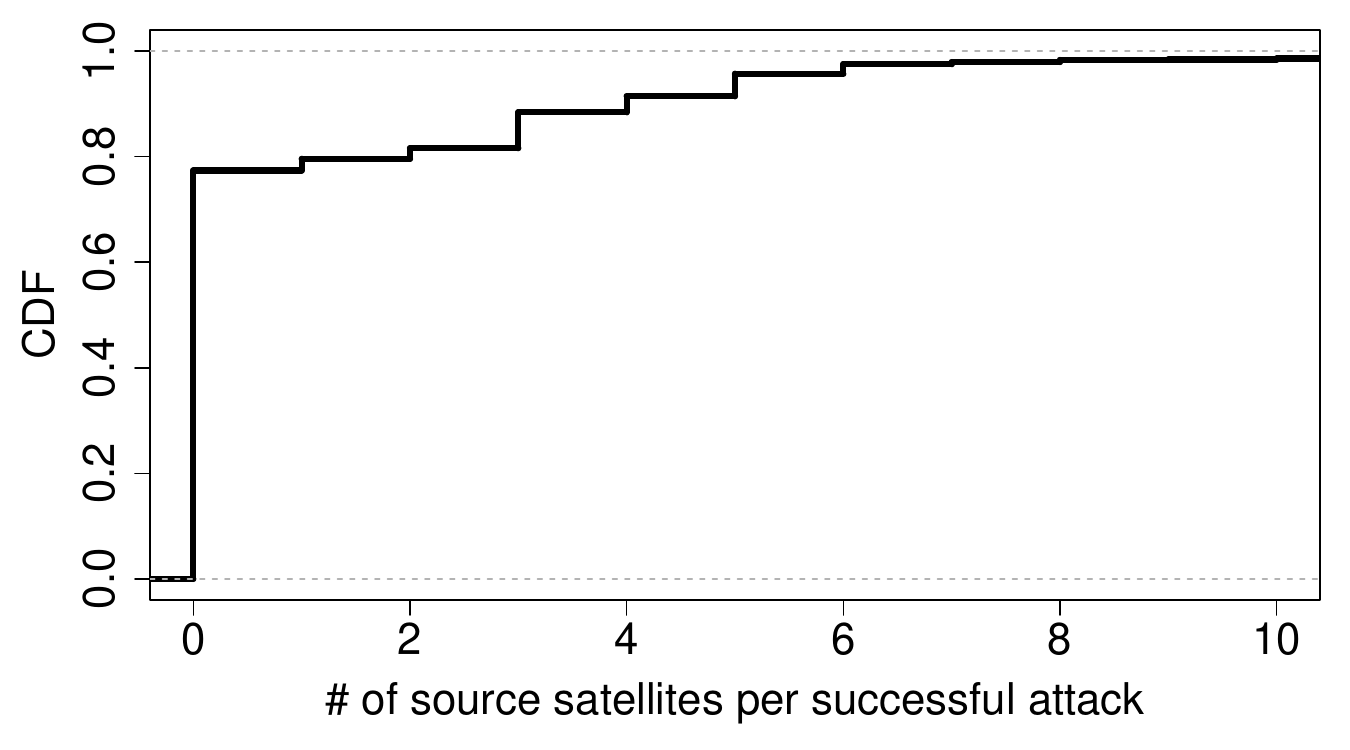}
		\caption{}
		\label{fig:linkflood-critical-sats}
	\end{subfigure}
	\caption{Effectiveness of \system validation for multi-GS attacks (e.g., DDoS).}
	\label{fig:linkflood}
\end{figure*}

\myitem{Effectiveness.}
We simulate attack cases consistent with \S\ref{sec:problem}. 

\myitemit{Per-satellite-pair attacks (e.g.,~\attacklabel{$\mathcal{A}$1}-\attacklabel{$\mathcal{A}$2}).}
Suppose that an attacker can source-route packets on any path from satellite $s$ to satellite $d$, and has a target outside the shortest $s$-$d$ path.

We analyze the success rate of \system validation with respect to the distance of the target to the shortest path from $s$ to $d$.
We do so because malicious paths targeting farther-away links generally have to be longer and more circuitous, so easier to identify. 
We additionally check the impact on legit GSes, by computing the fraction of optimal post-failure backup paths forbidden by path validation.

Table~\ref{tab:pathval-accuracy} summarizes the result of 1,000 experiments with random sources and destinations.
Our main findings follow.
\begin{itemize}
\item \system validation is effective:
the attacker can steer traffic through a target which is one hop away from $s$-$d$ shortest paths in roughly 19\% of our experiments, and through more distant targets in even fewer cases.

\item \system validation rarely forbids optimal backup paths.
It does so in $\approx$ 2\% of the single-failure cases, and $\approx$ 7\% of the two-failure ones.
In such cases, \system anyway allows slightly longer backup paths.
For example, for every single-link failure, simple heuristics enable us to build \system-validated backup paths having delay stretch within 1\% of the optimum in most cases.

\item \system validation works better than relying on path lengths.
Suppose for example that the Dijkstra-based approach admits only paths with delay $\leq$10\% higher than the shortest one.
This approach fails for $\approx$ 40\% of malicious paths with a one-hop away target, and 20\% with two-hop away targets.
Even worse, it forbids optimal backup paths, and hence \emph{any possible backup path}, in $\approx$ 5\% (resp., $\approx$ 10\%) of single-link (resp., two-link) failure cases.
Picking a different delay threshold does not improve performance.
Indeed, allowing higher-delay paths reduces incorrectly classified legit paths, but also allows even more malicious paths -- as exemplified in the last column of Table~\ref{tab:pathval-accuracy}.
\end{itemize}

\myitemit{Multi-GS attacks (e.g.,~\attacklabel{$\mathcal{A}$1}-\attacklabel{$\mathcal{A}$3}).}
Consider now attackers who control botnets of GSes and try to use any reachable satellite.

Figure~\ref{fig:linkflood} displays the results of $\approx$2,500 settings, with randomly selected botnets and targets.
Any botnet of size X consists of X pairs of GSes extracted among 55 of the largest cities in the world.
We make the following observations.
\begin{itemize}
\item \system validation renders botnets largely unusable.
As shown in Figure~\ref{fig:linkflood-successful-gspairs}, packets bypassing \system validation can be originated from only 20\% of the botnets in median, and $\leq$ 40-50\% of them in almost all experiments.
Those numbers do not change with the botnet size.

\item \system validation makes successful attacks fragile and easier to detect.
Figure~\ref{fig:linkflood-critical-sats} shows that even successful attacks rely on a handful of source satellites, typically less than four.
The few satellites critical to successful attacks may however be unavailable at times (e.g., when RF links are disrupted by physical obstacles).
Having to rely on few satellites also makes attacks easier to detect.
For example, if attackers need to change ingress satellites for specific premium demands, both legitimate users and network controllers can notice such unexpected changes.

\end{itemize}

\noindent
Overall, \system validation significantly increases the complexity and cost of multi-GS attacks.
Suppose that an attacker aims at sending 10Gbps through a target ISL.
If compromised GSes are allowed to source-route an average of 100 Mbps per premium destination, the attacker must generate traffic from 10 Gbps / 100 Mbps = 100 premium GS pairs.
According to Figure~\ref{fig:linkflood-successful-gspairs}, this typically requires a botnet of size at least 500.
Likely, however, many more GSes plus the ability to rapidly switch source GSes and ingress satellites are needed
in order to compensate for the few ingress satellites usable by each source GS for the attack (see Figure~\ref{fig:linkflood-critical-sats}).

\myitemit{Other attacks (e.g.,~\attacklabel{$\mathcal{A}$4}).}
Packets are checked independently, so attackers cannot affect validation results of legit packets.

\system satellites are also robust to malicious increases of their validation workload.
Premium traffic exceeding the per-GS allowed rate is immediately discarded (see \S\ref{sec:problem}).
The remaining packets are validated through checks that are cheap and fast: Table~\ref{tab:pathval-time} indicates that our under-optimized software prototype can validate millions of packets per second; hardware implementations would validate packets at line rate.
Hence, attackers need to control many close GSes to stress-test an ingress satellite.
For example, our software prototype can validate minimal-size packets from tens of GSes sending $\approx$ 100 Mbps of premium traffic.

Attackers can deplete or disrupt physical resources -- e.g., congesting or jamming RF links.
These attacks are orthogonal to \system, and outside its scope (see \S\ref{sec:problem}).
Note however that they are easy to detect from both victim GSes and ingress satellites, and can be mitigated with semi-automated procedures -- e.g., changing GS authentication methods.

\myitem{Additional experiments.}
We also experiment with alternative malicious paths, built by following the shortest path up to a random node, and then adding a detour through a target located one, two or three hops away.
Results and trends are generally consistent with the ones discussed in this section.

\section{Limitations and Extensions}
\label{sec:discussion}

We now discuss \system limitations, and sketch possible extensions whose full investigation is left for future work.

\myitem{Violating assumption of our theory.}
\system's theory based on two assumptions (see \S\ref{sec:theory}) which stem from the network model detailed in Appendix~\ref{app:model}.

We empirically confirm that these assumptions hold even if constellations do not fully comply with our model.
For example, intra-orbit ISLs do not have \emph{exactly} the same length throughout our simulations, because of geometrical properties of actual satellites' orbits.
This suggests that our theory is robust to minor deviation from our model, possibly including those due to collision avoidance maneuvers, automorphic decays, and orbit reconfigurations reported in~\cite{leo-operation-mobicom23}.

What happens however if the assumptions of our theory do not hold?
The paths characterized by our theory would still be low-delay ones, although they may be not optimal.
This implies that:
\begin{inparaenum}[(i)]
  \item \system fast rerouting scheme will keep preserving connectivity, as it relies on our theory only for its performance guarantees (see \S\ref{ssec:internals-frr}); and
  \item \system validation checks will still be useful, given that they do not only allow the paths characterized by our theory (see \S\ref{ssec:eval-validation}).
\end{inparaenum}

\myitem{Supporting other topologies.}
\system works on regular mesh topologies which are often assumed by previous work~\cite{eth-hotnets18,mark-hotnets18}.
Other topologies are however possible~\cite{ankit-conext19}.

We believe that this work can be a starting point to handle a variety of topologies.
First, our theory can be extended for specific variants of our assumed topologies, or used as-is to design heuristics.
Figure~\ref{fig:extension-example} shows an example.
Second, \system effectiveness may motivate others to explore theory-based approaches tailored to different topologies.

\begin{figure}[t]
	\centering
	\includegraphics[width=\columnwidth]{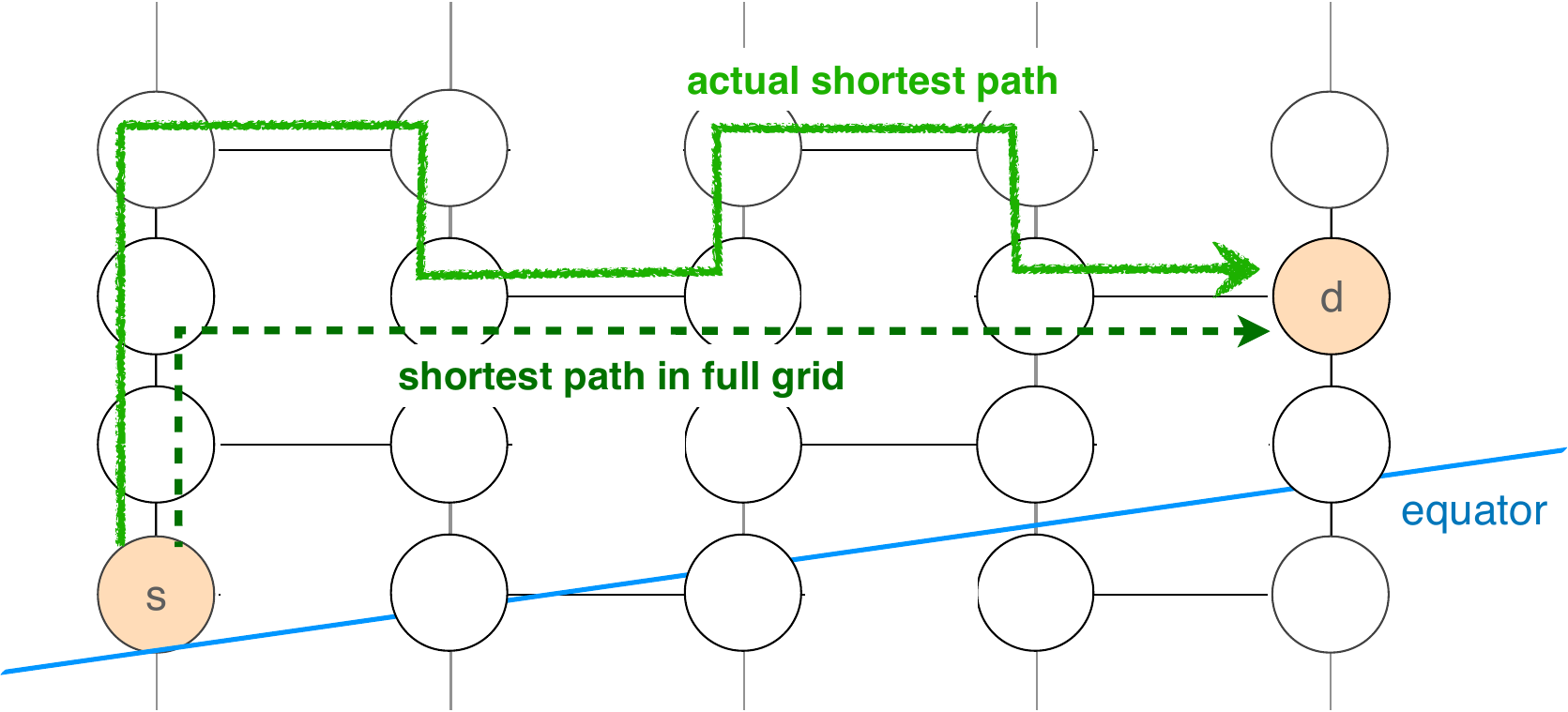}
	\caption{Example of extension of our theory to topologies where satellites have three ISLs.}
        \vspace{-0.1in}
	\label{fig:extension-example}
        \vspace{-0.1in}
\end{figure}

\system does not deal with congestion because constellations are expected to be well over-provisioned with respect to premium traffic.
This may not be true in very sparse topologies.
Even for these topologies, though, our work provides an interesting starting design.
For example, our fast rerouting technique may be effective against short-lived ISL congestion: when meant to cross a congested link, packets can indeed be fast rerouted with \system fast rerouting scheme rather than being dropped.

\myitem{Exploring intermediate designs.}
\system achieves reliability within a routing-free core.
Yet, better designs may be achieved if satellites perform some (limited) computations.

\system is amenable to extensions allowed by such computations.
For example, if they share information on active RF links, \system satellites may (i) deal with some RF link failures, and (ii) check if paths to be validated end at satellites having active RF links with destination GSes.

\myitem{Using \system techniques on the ground.}
\system is specifically designed for LEO constellations.

We however believe that \system's principles and techniques can be successfully ported to other networks too.
For example, resource-constrained devices in current IoT networks often run a minimalistic routing protocol (e.g.,~\cite{rpl-rfc6550}), which is both relatively inflexible (e.g., unable to dynamically reroute around low-battery devices) and energy-hungry~\cite{sdwsn-survey-access17}.
Adaptations of \system would offer alternative approaches to reduce energy consumption of IoT devices while also improving network reliability and routing flexibility.
Generally, \system can inspire approaches to reduce in-network power consumption, e.g., for sustainable networking.

\section{Related Work}
\label{sec:relwork}

Pioneering work~\cite{ankit-hotnets18,mark-hotnets18,eth-hotnets18} shows the potential of LEO constellations to offer very low-latency paths.
In this work, we describe a system and technical solutions to implement a low-delay connectivity service, reliably.

A follow-up paper~\cite{mark-hotnets19} focuses on forwarding packets between satellites using ground stations as relays.
The paper also describes a derivative Dijkstra's algorithm to smooth the transition from one shortest path to the following.
In \system, we assume that GSes can use this algorithm, or variants of it, to source-route premium packets.
Our experiments (\S\ref{sec:eval}) confirm the feasibility of this approach for GSes.

A few works (e.g.,~\cite{tang-tmc19,hu-tnsm22}) propose routing algorithms that GSes or centralized controllers can use for bandwidth optimization.
We focus on satellite primitives aimed to ensure reliability.
As such, \system is complementary to the above algorithms, and can be combined with them.

Most of the literature on routing in satellite networks instead focuses on designs of distributed protocols, including adaptations of intra-domain protocols~\cite{leosurvey-arxiv23} and of geographical routing techniques~\cite{satnet-georouting-tnsm22}.
Motivated by the limited capabilities of solar-powered satellites, we explore a different design point where satellites are completely routing-oblivious.

A couple of previous contributions~\cite{destructresistantLEOrouting-fskd15,survivableLEOrouting-access20,starcure-infocom23} consider reliability in LEO mega-constellations.
Once again, they focus on the design of optimized mechanisms (e.g., adapted path discovery, local repair strategies, limited scope flooding) to deal with failures during reconvergence of satellite-run distributed routing protocols.
In \system, satellites do \emph{not} run any routing protocol in order to preserve the very scarce power they have access to.
Yet, \system satellites provide a fast rerouting scheme which effectively preserves low-delay connectivity from the very moment a failure is detected.

A different body of work focuses on topology and parameter optimization for satellite networks.
Different topologies are explored in~\cite{ankit-conext19}.
Optimal placement of ground stations is studied in~\cite{delportillo-actaastro19}.
Older contributions (e.g.,~\cite{topodesign-leoico95}) consider similar aspects in networks~\cite{iridium-IMSC93,globalstar-IMSC93,odyssey-IMSC93} smaller and inherently different from the constellations being deployed nowadays.
We focus on routing and reliability.
\approach assumes regular mesh topologies, consistently with most related work, but may be extended to others, as discussed in \S\ref{sec:discussion}.

More generally, routing has been studied in many contexts including mobile~\cite{fischer-icnp08} and large networks, even with some focus on low-delay paths recently~\cite{B4,ldr-sigcomm18}.
These contexts however do not entail networks with thousands of constantly moving, solar-powered satellites connected in mesh topologies.
LEO constellations are so unique that prior solutions are not a good fit: just running a routing protocol is expected to be impractical~\cite{cisco-arch11}, let alone supporting premium services.

\section{Conclusions}

Source routing is generally considered a
no-go in the Internet, mainly for robustness and security reasons.
LEO constellations are bigger, much more dynamic, and generally less
reliable than Internet networks; satellites are also
much more resource-constrained than Internet routers.
Yet, our work shows that source routing is a promising option
to support premium services in LEO networks, \emph{robustly}
and \emph{securely}.

We indeed designed and evaluated \approach, a system addressing
reliability concerns in LEO mega-constellations with routing-oblivious satellites.
In \approach, our novel routing theory provides a theoretical basis to
\begin{inparaenum}[(i)]
\item define an expressive and compact path encoding; and
\item devise effective fast rerouting and path validation schemes,
runnable by routing-oblivious satellites.
\end{inparaenum}
Our simulations confirm that \approach uses very few satellite resources,
and is more effective than much more
resource-hungry techniques commonly used in the Internet.

\bibliographystyle{plain}
\bibliography{main}

\clearpage
\appendix
\section{Routing Theory}
\label{app:theory}
\setcounter{mytheorem}{0}

This appendix includes formal proofs of the theorems presented in \S\ref{sec:theory}.

Unless explicitly specified, we consider networks with no failure, and say that a link is working if it is not failed.
Since we focus on the lowest delay paths between satellites, we interchangeably refer to links' delay and length, as well as to lowest delay path and shortest path.
When multiple shortest paths exist between a pair of satellites, we denote any of them as the shortest path.
We also use the terms node and satellite as synonyms.

\subsection{Model}
\label{app:model}

We start by introducing some notation and describing our network model.
This model captures the characteristics of satellite networks that are relevant from a path computation viewpoint, and
includes the definition of grid and grid's type that are central to our routing theory.

\myitem{Links' length.}
The delay over any cross-orbit link depends on the distance between the link's endpoints, which in turn changes as the satellites move along their orbits.
The delay is instead fixed for intra-orbit links as the distance between consecutive satellites in the same orbit is constant over time.

More precisely, the following two properties hold for the satellite networks we consider.

\begin{property}\label{prop:vlink-length}
	All intra-orbit links have the same fixed length.
\end{property}

\begin{property}\label{prop:hlink-length}
	If the midpoint of a cross-orbit link $h_1$ is farther away from the equator than the midpoint of another cross-orbit link $h_2$, then $h_1$ is shorter than $h_2$.
\end{property}

For any link $l$, we denote the distance of its midpoint to the equator at time $t$ as $\delta_t(l)$, and its length at $t$ as $len_t(l)$.

According to this notation, we can formalize the above properties.
Property~\ref{prop:vlink-length} states that for any pair of intra-orbit links $h_1$ and $h_2$, we have $len_t(h_1) = len_t(h_2)$ at any time $t$.
Property~\ref{prop:hlink-length} instead states that for any time $t$ and any two cross-orbit links $h_1$ and $h_2$, the following logical implication holds: $\delta_t(h_1) > \delta_t(h_2) \Rightarrow len_t(h_1) < len_t(h_2)$.

We also denote the length of any path $Q$ at time $t$ as $len_t(Q)$. 
We assume the following property about the lengths of paths including only intra-orbit or only cross-orbit links.

\begin{assumption}\label{ass:ingrid-vs-outgrid-simple}
	If satellites $s$ and $d$ are the endpoints of a sequence of intra-orbit links or of a sequence of cross-orbit links, no path including a combination of intra- and cross-orbit links can be the shortest path from $s$ to $d$.
\end{assumption}

For example, if we consider the portion of satellite network displayed in Figure~\ref{fig:spacegrid-base}, Assumption~\ref{ass:ingrid-vs-outgrid-simple} implies that the shortest path from $s$ to $c_1$ includes only cross-orbit links, while the shortest path from $s$ to $c_2$ contains only intra-orbit links.

Assumption~\ref{ass:ingrid-vs-outgrid-simple} would follow from geometric properties of polygons if links in the satellite networks were segments.
Since they actually are arcs on a spherical surface, the above property is not guaranteed in all possible constellation designs.
We however checked empirically that Assumption~\ref{ass:ingrid-vs-outgrid-simple} holds for topologies considered in~\cite{mark-hotnets18}, which are compatible with the documents filed by SpaceX.

\myitem{Sub-graphs.}
Consider any sub-graph $N$ of the satellite network such that $N$ has a single connected component.

We define the \textbf{border} of $N$ as the set of nodes in $N$ that are connected to at least one node outside $N$. Formally, the border of $N$ is the set $\{x | x \in N \ \land\  \exists (x,y), y \not \in N\}$.
We refer to any node in the border of $N$ as \textbf{border node}, while we use the term \textbf{internal node} to indicate a node in $N$ which does not belong to its border.

We additionally call \textbf{corner} of $N$ any border node $c$ such that:
\begin{inparaenum}[(i)]
	\item $c$ is the extreme of exactly one intra-orbit link in $N$, and
	\item $c$ is the extreme of exactly one cross-orbit link in $N$.
\end{inparaenum}
Two nodes $c_1$ and $c_2$ are \textbf{opposite corners} if they are both corners of $N$, and there exists no path $(c_1 \dots c_2)$ that includes only intra-orbit links or only cross-orbit ones.

\myitem{Source-destination grids.}
Given a source and a destination satellite, we use the term source-destination grid, or simply \textbf{grid}, to denote a sub-graph of the network such that (i) the source and the destination are opposite corners in it, and (ii) the union of links between consecutive nodes in its border form a cycle including exactly two sequences of cross-orbit links and two sequences of intra-orbit links.

Intuitively, a grid is a sub-graph with a rectangular shape, where the source and destination do not share any side.
Unless stated otherwise, for any grid, we always denote the source satellite as $s$, and the destination satellite as $d$.

\begin{figure}[t]
	\centering
	\includegraphics[width=.9\columnwidth]{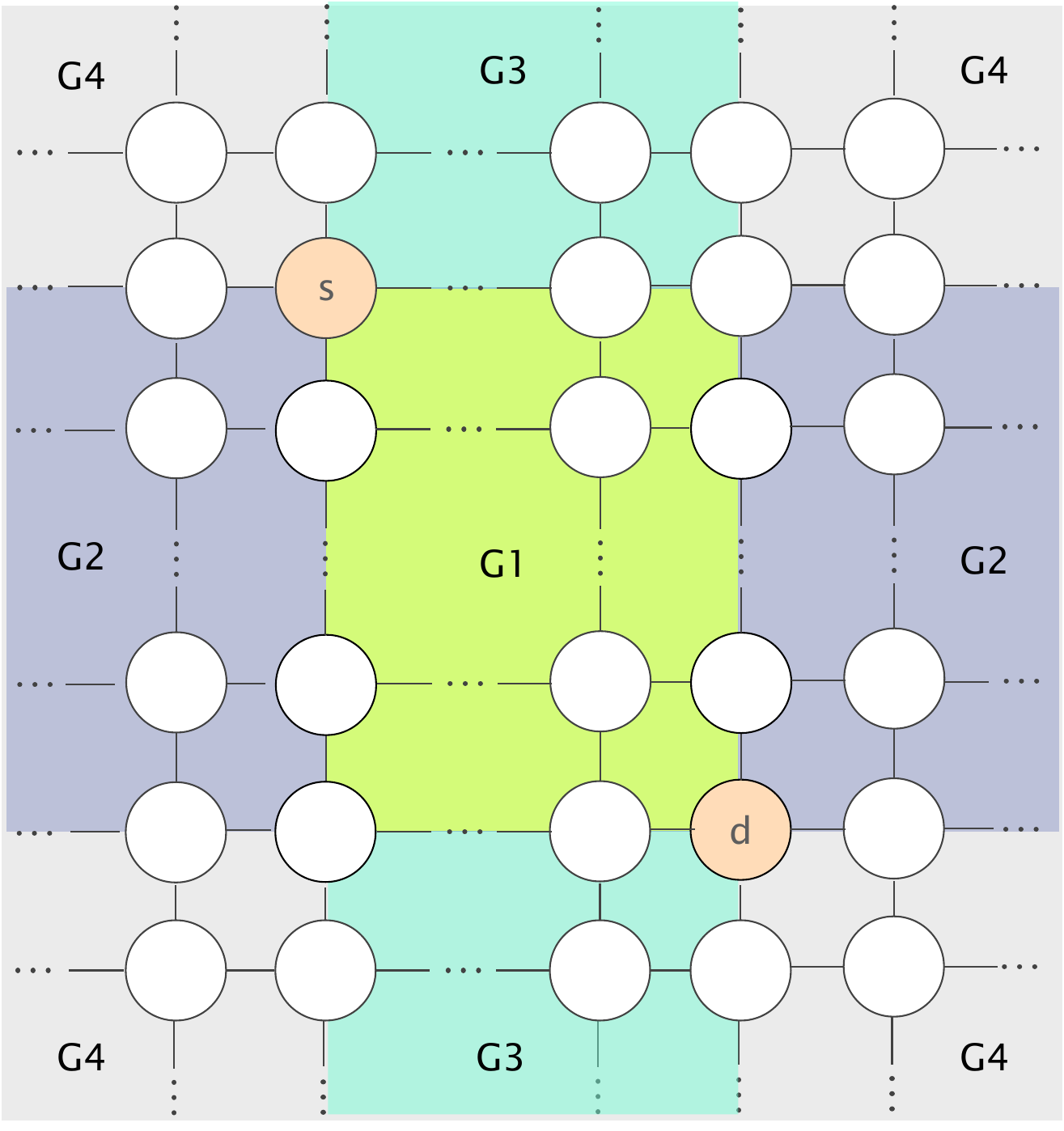}
	\caption{Illustration of the four grids, named G1, G2, G3 and G4, defined by any pair of satellites $s$ and $d$ in constellations where sequences of cross-orbit links start and end with the same node.}
	\label{fig:spacegrid-base-four-grids}
\end{figure}

Since each satellite follows a closed orbit and always maintains network links with the same four neighboring satellites, every $s$-$d$ pair defines exactly four grids, whose nodes and links do not change over time.
An example of source-destination grid is displayed in Figure~\ref{fig:spacegrid-base}. Using the same graphical notation, Figure~\ref{fig:spacegrid-base-four-grids} visualizes the four grids defined by any pair of satellites in constellations where sequences of cross-orbit links start and end with the same node.

For any grid $G$, we define rows and columns as follows.
A \textbf{row} is a set of cross-orbit links that form a path between two nodes $u$ and $v$ both in the border of $G$.
We call \textit{source row} the row including the cross-orbit link having $s$ as one extreme -- remember that there exists only one cross-orbit in $G$ that has $s$ as extreme by definition of corner. Similarly, we use the term \textit{destination row} to indicate the row with the cross-orbit link adjacent to $d$.

A \textbf{column} is a set of cross-orbit links such that the corresponding nodes form exactly two sequences of intra-orbit links, each connecting one node in the source row with one node in the destination row.
We refer to the column that includes the cross-orbit link $(s,u)$ with $u \in G$ as \textit{source column}, and to the column containing the cross-orbit link $(d,v)$ with $v \in G$ as \textit{destination column}.

We assume that the equator is not parallel to any cross-orbit link nor to any inter-orbit link in the constellation: this is in-line with deployment plans for actual satellite constellations~\cite{mark-hotnets18}.
From the perspective of satellite grids, the above assumption can be expressed as follows.
\begin{assumption}\label{assume:no-perpendicular-eq}
	Given any grid $G$, the equator crosses the midpoint of at most one link per column in $G$, and it is not parallel to any link in $G$.
\end{assumption}

For any cross-orbit link $l \in G$, we denote the row to which $l$ belongs as $row(l)$; similarly, we denote $l$'s column as $col(l)$.
Slightly abusing notation, we additionally indicate the source and destination rows as $row(s)$ and $row(d)$ respectively.
Similarly, we denote the source and destination columns as $col(s)$ and $col(d)$ respectively.
We use the term \textbf{middle link} to indicate any cross-orbit link $m$ which does not belong to either the source nor the destination row: that is, $m \not \in row(s) \land m \not \in row(d)$.

We also say that a row $r_1$ is closer to row $r_2$ than row $r_3$ if links in $r_1$ are topologically closer (i.e., in terms of number of hops) to links in $r_2$ than to links in $r_3$.
More precisely, consider any three nodes $n_1, n_2, n_3$ such that $n_1$ is an extreme of links in $r_1$, $n_2$ is an extreme of links in $r_2$,  $n_3$ is an extreme of links in $r_3$, and sequences of intra-orbit links exist from $n_1$ to $n_2$ and from $n_2$ and $n_3$.  If and only if the sequence of intra-orbit links from $n_1$ to $n_2$ has less hops than the sequence of intra-orbit links from $n_1$ to $n_3$, then $r_1$ is closer to row $r_2$ than row $r_3$.
We use a similar terminology for columns.

\myitem{Grid types.}
To capture the features that influence shortest paths from $s$ to $d$ in any grid, we define two types of grids: type-A and type-B grids.

Let $c_1$ be the satellite at the other end of the source row from $s$. As the grid orbits, it will reach a point where some of its links cross the equator and either $s$ or $c_1$ is on the equator. If in this moment it is $s$ that is on the equator, then the grid is a Type-B grid, otherwise it is a Type-A grid.

Figure~\ref{fig:spacegrid-base} shows a type-A grid. In fact, it is $c_1$ to be on the equator, when the equator crosses at least one intra-orbit link in the grid.

A degenerated case is represented by grids consisting of a single sequence of cross-orbit links, or of a single sequence of intra-orbit links: we denote them as \textbf{single-path grids}.

\subsection{General properties of shortest paths}

We now prove basic properties upon which our theory is built.
We start with a lemma that we use in multiple proofs.

\begin{mylemma}~\label{lem:shortcut-loops}
	Given a satellite network, let $P = (u \dots v)$ be a path in it such that $P$ includes at least one loop.
	A path $P_1$ from $u$ to $v$ exists in the network that has no loops, crosses a subset of links in $P$, and is shorter than $P$.
\end{mylemma}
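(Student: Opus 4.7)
The plan is to establish the lemma by iteratively short-cutting loops inside $P$ until none remain, arguing that each surgery strictly shortens the path and only removes links. Since $P$ contains at least one loop, some satellite $w$ appears at least twice along $P$, so I can write $P = (u \dots w_1 \dots w_2 \dots v)$ where $w_1$ and $w_2$ denote two occurrences of the same node $w$. The first surgery would simply define $P'$ as the concatenation of the prefix of $P$ ending at $w_1$ with the suffix of $P$ starting at $w_2$. By construction, $P'$ is a valid walk from $u$ to $v$ whose edges form a strict subset of the edges of $P$.

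Next, I would verify that $P'$ is strictly shorter than $P$. The portion of $P$ that is excised is a closed sub-walk from $w$ to $w$ containing at least one link, and each link in the network has strictly positive length --- either by Property~\ref{prop:vlink-length} for intra-orbit links or by Property~\ref{prop:hlink-length} for cross-orbit links, whose endpoints are distinct satellites. Therefore $len_t(P') < len_t(P)$, and the number of links along the path has strictly decreased.

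Finally, I would iterate this surgery. If $P'$ still contains a loop, the same argument yields $P''$ with strictly fewer links, still using only links drawn from $P$. Since the link count is a non-negative integer that strictly decreases at every step, the process terminates after finitely many iterations at a loop-free path $P_1$ from $u$ to $v$ that crosses only links of $P$ and satisfies $len_t(P_1) < len_t(P)$.

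The main obstacle here is conceptual rather than technical: being precise about what "loop" means (a revisited node turning the walk into a non-simple one) and observing that each cut operation cannot introduce a new loop, since it only deletes links rather than adding any. No geometric property of the constellation beyond positivity of link lengths is needed, which is why the statement and its proof can be kept independent of any grid structure and of the grid-based assumptions introduced later in the theory.
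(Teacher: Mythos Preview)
Your proposal is correct and follows essentially the same argument as the paper: identify a repeated node, excise the closed sub-walk between its two occurrences, appeal to positivity of link lengths to conclude the result is strictly shorter, and iterate until no loop remains. The only cosmetic difference is that the paper writes the loop as $(u \dots x)(x \dots y)(y \dots x)(x \dots v)$ with an explicit intermediate node $y$, whereas you simply mark two occurrences $w_1,w_2$ of the same node; your version is arguably cleaner, and your explicit termination argument via the decreasing link count is a nice touch the paper leaves implicit.
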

\begin{proof}
	Since $P$ has at least one loop by hypothesis, it must be possible to write $P$ as the concatenation of four sub-paths $(u \dots x)$, $(x \dots y)$, $(y \dots x)$ and $(x \dots v)$.

	To prove the statement, we build a new path by simply removing the sub-paths $(x \dots y)$ and $(y \dots x)$ that constitute the loop itself.
	This operation indeed leaves us with a path $P_1 = (u \dots x) (x \dots v)$ which is still a path from $u$ to $v$, crosses a subset of the links in $P$, and is shorter than $P$ because all the links have positive weights (i.e., delays) in satellite networks.

	Note that $P_1$ can still have loops, because $P$ may have other loops in addition to the one between $x$ and $y$.
	The statement follows by iterating the above reasoning until we obtain a path with no loops.
\end{proof}

\myitem{Shortest paths are internal to a grid.}
We now show that the shortest path between any pair of satellites is always internal to one of the grids they form.

If a single-path grid exists between $s$ and $d$, Assumption~\ref{ass:ingrid-vs-outgrid-simple} directly implies the following lemma.

\begin{mylemma}\label{lem:ingrid-vs-outgrid-singlepath}
	If $s$ and $d$ form a single-path grid with all working links, the shortest path is internal to a single-path grid.
\end{mylemma}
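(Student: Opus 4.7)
The plan is to derive Lemma~\ref{lem:ingrid-vs-outgrid-singlepath} as a near-immediate consequence of Assumption~\ref{ass:ingrid-vs-outgrid-simple}, using the structural definition of a single-path grid to match the hypotheses of that assumption. The key observation is that the hypothesis ``$s$ and $d$ form a single-path grid'' is exactly the premise Assumption~\ref{ass:ingrid-vs-outgrid-simple} needs: namely, that there is a sequence consisting entirely of intra-orbit links, or entirely of cross-orbit links, joining $s$ to $d$.

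First, I would unpack the definition of a single-path grid given in Appendix~\ref{app:model}: such a grid is a sub-graph consisting of a single sequence of intra-orbit links, or of a single sequence of cross-orbit links, whose endpoints are $s$ and $d$. So the hypothesis of the lemma gives us directly that $s$ and $d$ are the endpoints of such a sequence.

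Second, I would invoke Assumption~\ref{ass:ingrid-vs-outgrid-simple} to conclude that no path between $s$ and $d$ combining both link types can be the shortest path. Therefore the shortest path consists entirely of intra-orbit links or entirely of cross-orbit links. Applying Lemma~\ref{lem:shortcut-loops}, I may further assume the shortest path is loopless, since any loop could be removed to obtain a strictly shorter path (contradicting shortest-ness). A loopless single-type sequence from $s$ to $d$ is, by definition, a single-path grid between $s$ and $d$, so the shortest path is trivially internal to such a grid.

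I do not anticipate a serious obstacle: the lemma is essentially a repackaging of Assumption~\ref{ass:ingrid-vs-outgrid-simple} in the grid vocabulary. The only subtlety to address cleanly is that there may be two candidate intra-orbit single-path grids (one going each way around the orbital cycle); but the statement only asks that the shortest path lie \emph{in some} single-path grid, so we simply pick the one consistent with the direction chosen by the shortest path. A brief remark closing this case-distinction suffices.
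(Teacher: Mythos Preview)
Your proposal is correct and follows essentially the same approach as the paper: both unpack the definition of a single-path grid to obtain a single-type link sequence between $s$ and $d$, invoke Assumption~\ref{ass:ingrid-vs-outgrid-simple} to rule out mixed paths, and conclude that the shortest path lies in one of the (at most two) single-path grids. Your explicit appeal to Lemma~\ref{lem:shortcut-loops} and the remark on the two possible directions are minor elaborations the paper leaves implicit, but the core argument is identical.
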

\begin{proof}
	By definition of single-path grid, $s$ and $d$ must be the endpoints of either a sequence of cross-orbit links, or a sequence of intra-orbit links.	

	Suppose for now that $s$ and $d$ are connected by a sequence of cross-orbit links.
	This implies that there exists at least one single-path $s$-$d$ grids $G$ containing only cross-orbit links.
	Assumption~\ref{ass:ingrid-vs-outgrid-simple} then guarantees that the shortest path is either the only path in $G$ or a path in any other single-path grid that $s$ and $d$ may form.

	A symmetric argument holds if $s$ and $d$ are extremes of a sequence of intra-orbit links, yielding the statement.
\end{proof}

We now consider pairs of satellites that do not form any single-path grid.

\begin{mylemma}\label{lem:ingrid-vs-outgrid-nonsinglepath}
	If $s$ and $d$ do not form a single-path grid, the shortest path from $s$ to $d$ is entirely contained in one of their source- destination grids whenever all the links between satellites in the border of their grids are working.
\end{mylemma}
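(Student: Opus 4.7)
The plan is a proof by contradiction: assume the shortest $s$-$d$ path $P$ is not entirely contained in any of the four grids defined by $s$ and $d$, and derive a strictly shorter $s$-$d$ path that lies within a single grid.

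First, by Lemma~\ref{lem:shortcut-loops} I may assume $P$ is loop-free. I then decompose $P$ into maximal same-type sub-paths $Q_1, Q_2, \ldots, Q_k$, where each $Q_i$ consists of only intra-orbit or only cross-orbit links and consecutive sub-paths have different types. A loop-free $s$-$d$ path lies inside a single grid precisely when all its intra-orbit sub-paths travel in a common direction (all prograde or all retrograde) and likewise all its cross-orbit sub-paths share a common direction (all east or all west). Thus, if $P$ is in no grid, two same-type sub-paths of $P$ must travel in opposite directions.

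Next, I focus on the earliest directional reversal. Without loss of generality, suppose $Q_i$ is an intra-orbit sub-path going north and some later $Q_j$ is an intra-orbit sub-path going south (the cross-orbit case is symmetric). Let $u$ be the start of $Q_i$ and $w$ the end of $Q_j$. Using loop-freeness of $P$ together with Property~\ref{prop:vlink-length} (uniform intra-orbit length), I argue that the $u$-to-$w$ portion of $P$ uses strictly more intra-orbit links than the net north-south displacement between $u$ and $w$ requires. I then replace this portion with a monotonized alternative that preserves the net cross-orbit displacement but uses only the strictly necessary number of intra-orbit links, invoking Assumption~\ref{ass:ingrid-vs-outgrid-simple} to pick a canonical ordering of link types in the replacement. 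The resulting $s$-$d$ path is strictly shorter than $P$, contradicting minimality; the working-border hypothesis guarantees that the replacement sub-path does not encounter a failed link.

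The main obstacle will be the cross-orbit reversal case. Because cross-orbit link lengths vary with distance from the equator (Property~\ref{prop:hlink-length}), a naive straightening does not immediately reduce length: a short east segment near the pole can be cheaper than a long east segment near the equator, so removing a detour is not obviously a win. I expect to handle this by a subcase analysis based on where the reversal occurs relative to the equator, combined with a more delicate application of Assumption~\ref{ass:ingrid-vs-outgrid-simple} to exhibit a monotone replacement whose cross-orbit segment lies no closer to the equator than the original. Iterating the replacement over all remaining directional reversals then yields a shortest path contained in a single grid, completing the contradiction.
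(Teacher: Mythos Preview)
Your monotonization argument has a genuine gap, and it already bites in the intra-orbit reversal case that you treat as easy. You argue that the $u$-to-$w$ portion uses strictly more intra-orbit links than the net displacement requires, and then replace it by a monotone path with the minimal number of intra-orbit links and the same \emph{net} cross-orbit displacement. But ``same net cross-orbit displacement'' does not mean ``same cross-orbit links'': the replacement's cross-orbit links sit at different latitudes, and by Property~\ref{prop:hlink-length} they can be substantially longer. Concretely, suppose $P$ goes north $5$ steps to get near an extreme parallel, then east $10$ steps over short cross-orbit links, then south $3$ steps. Your monotone replacement (north $2$, east $10$) uses $10$ cross-orbit links that are much closer to the equator; the saving of $6$ fixed-length intra-orbit links need not dominate the increase in cross-orbit length. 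Your invocation of Assumption~\ref{ass:ingrid-vs-outgrid-simple} ``to pick a canonical ordering of link types in the replacement'' does not address this: reordering link types in a monotone path does not change which latitudes the cross-orbit links sit at. The cross-orbit reversal case you flag as ``the main obstacle'' is not harder than this; it is the same obstacle, and your proposed subcase analysis on equator position does not explain how you would bound the length of the replacement.

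The paper's proof avoids this difficulty entirely by exploiting the grid borders rather than straightening the path link by link. Any two adjacent grids share a border that is a single pure sequence (all intra-orbit or all cross-orbit) through $s$ or through $d$. If the shortest path $Q$ leaves one grid $G_1$ and enters another $G_2$, it must touch a border node $b_1$; since $b_1$ lies on a pure sequence to $s$ or to $d$, and $Q$'s sub-path from $s$ to $b_1$ (respectively from $b_1$ to $d$) passes through an internal node and is therefore mixed, Assumption~\ref{ass:ingrid-vs-outgrid-simple} applies directly to that sub-path. One replacement, one contradiction, no latitude bookkeeping. If you want to rescue your approach, the fix is to locate two points on $P$ that lie in the same row (respectively, the same orbit), which any intra-orbit (respectively, cross-orbit) reversal forces, and apply Assumption~\ref{ass:ingrid-vs-outgrid-simple} between those two points; but that is essentially the paper's idea carried out locally rather than at the grid border.

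A smaller issue: your ``precisely when'' characterization is stated as an equivalence, but a loop-free path can zigzag and still remain inside a single grid, so the forward direction fails. You only need the contrapositive of the reverse direction (monotone in both types implies contained in one grid), which is fine.
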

\begin{proof}
	Assume by contradiction that the shortest path from $s$ to $d$ is a path $Q$ includes some nodes internal to a grid and other external to it.

	Since by definition the four grids formed by $s$ and $d$ cover the entire network, $Q$ must cross nodes that are internal to two different grids.
	Let $G_1$ and $G_2$ be the last two grids that $Q$ crosses.

	We can then write $Q$ as the concatenation of  sub-paths $A\ (i_1,b_1)\ B\ (b_2,i_2)\ I$, where: 
	$A$ is any non-empty path from $s$ to $i_1$; 
	$i_1$ is a node internal to $G_1$, and hence $i_1 \neq s$; 
	$b_1$ and $b_2$ are nodes in the common border of $G_1$ and $G_2$, with possibly $b_2 = b_1$; 
	$B$ is a possibly empty path from $b_1$ to $b_2$ including only nodes in the common border of $G_1$ and $G_2$; 
	$i_2$ is a node internal to $G_2$, and hence $i_2 \neq d$; 
	and $I$ is a non-empty path in $G_2$ that terminates in $d$.
	We refer to Figure~\ref{fig:spacegrid-general-sp-within-one-grid} for a visualization of the nodes in $Q$ just defined.

	By definition, the border of any grid which is not a single-path one includes two sequences of cross-orbit links and two sequences of intra-orbit links. We thus have two cases.
	\begin{itemize}
		\item There exists a path from $b_1$ to $d$ composed only of nodes in the border of $G_2$ and including solely cross-orbit links or only intra-orbit ones. Note that $B\ (b_2,i_2)\ I$ is also a path from $b_1$ to $d$ that however includes a mix of cross-orbit and intra-orbit links, given that $i_2$ is an internal node of $G_2$. In this case, $B\ (b_2,i_2)\ I$ cannot be the shortest path from $b_1$ to $d$ by Assumption~\ref{ass:ingrid-vs-outgrid-simple}.
		\item There exists no path from $b_1$ to $d$, composed only of nodes in the border of $G_2$ and including solely cross-orbit links or only intra-orbit ones. Since $s$ and $d$ are opposite corners of every grid they form, it must then exist a path from $b_1$ to $s$ composed only of nodes in the border of $G_2$ and including solely cross-orbit links or only intra-orbit ones. Consider now $A\ (i_1,b_1)$: this path includes both cross-orbit and intra-orbit links because $i_1$ is an internal node of $G_1$. Hence, $A\ (b_1,i_1)$ cannot be the shortest path from $s$ to $b_1$ by Assumption~\ref{ass:ingrid-vs-outgrid-simple}.
	\end{itemize}

	In both cases, we end up with a sub-path of $Q$, starting at a node $x$ and ending at another node $y$, which is not the shortest path from $x$ to $y$.
	We can therefore replace such a sub-path with the actual shortest path from $x$ to $y$, and obtain a path $Q_1$ shorter than $Q$.
	If $Q_1$ contains loops, Lemma~\ref{lem:shortcut-loops} ensures that we can remove possible loops in $Q_1$, obtaining an even shorter path $Q_2$.

	The possibility to build a path shorter than $Q$ contradicts the hypothesis that $Q$ is the shortest path from $s$ to $d$, and therefore yields the statement.
\end{proof}

\begin{figure}[t]
	\centering
	\includegraphics[width=.9\columnwidth]{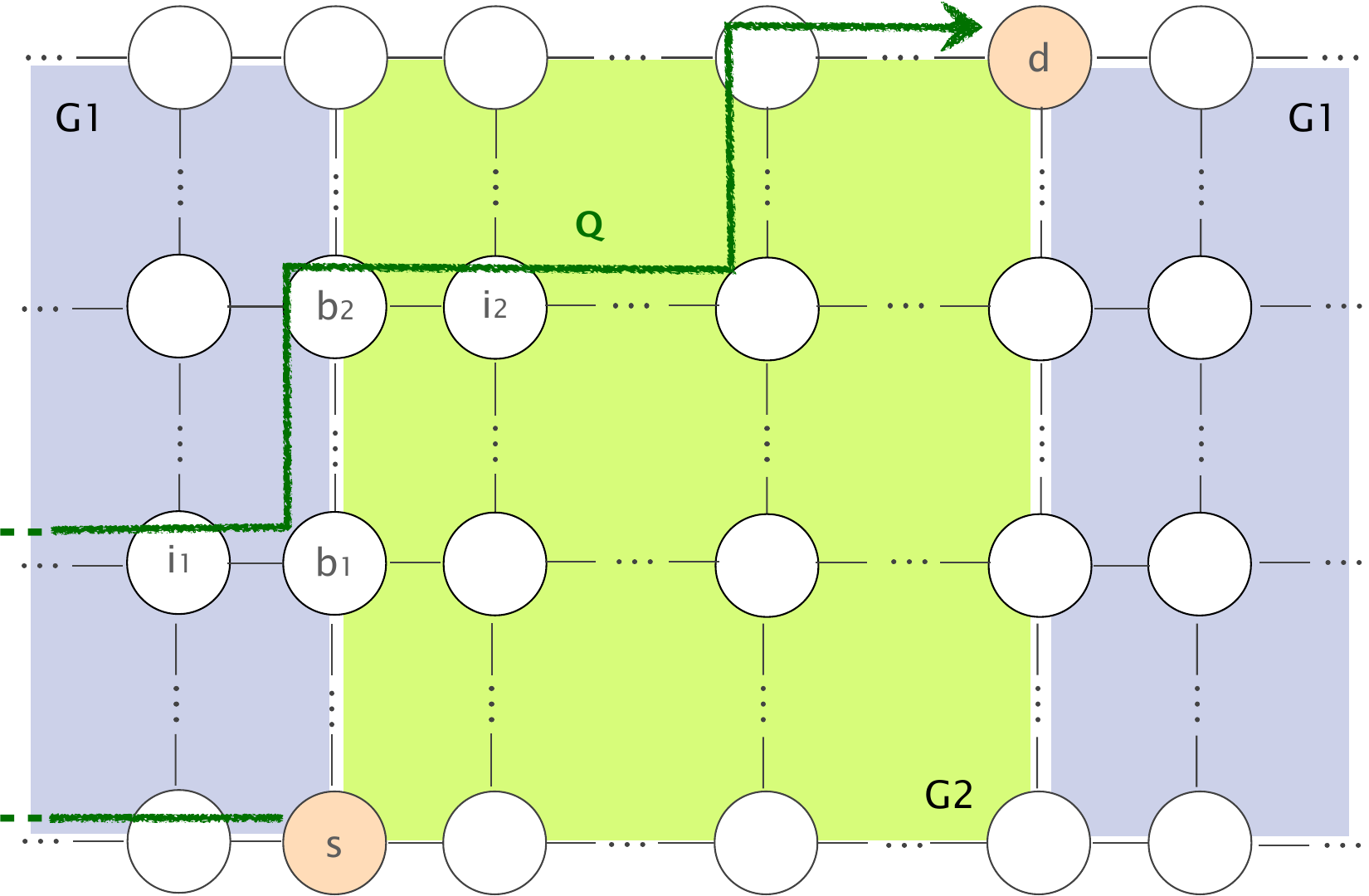}
	\caption{Visual support for the proof of Lemma~\ref{lem:ingrid-vs-outgrid-nonsinglepath}.}
	\label{fig:spacegrid-general-sp-within-one-grid}
\end{figure}

\Paste{theo:ingrid}
\begin{proof}
	Any pair of satellites either form at least one single-path grid, or all their grids are not single-path ones.
	The statement follows by noting that Lemma~\ref{lem:ingrid-vs-outgrid-singlepath} applies in the former case, while Lemma~\ref{lem:ingrid-vs-outgrid-nonsinglepath} holds in the latter one.
\end{proof}

\myitem{Additional shortest paths' features.}
Shortest paths also satisfy the constraints formalized by the following lemmas.

\begin{mylemma}~\label{lem:sp-no-step-down}
	Within any grid, the shortest path $P$ from $s$ to $d$ never includes two sequences $H_1$ to $H_2$ of cross-orbit links such that: (i) $P$ crosses $H_1$ before $H_2$, and;  (ii) $H_1$ is closer to the destination row than $H_2$.
\end{mylemma}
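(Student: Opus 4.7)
\emph{Proof plan.} I argue by contradiction: suppose the shortest path $P$ from $s$ to $d$ contains cross-orbit sequences $H_1$ in row $r_1$ followed by $H_2$ in row $r_2$, with $r_1$ closer to $row(d)$ than $r_2$. I will exhibit a strictly shorter $s$-$d$ path $P'$, contradicting the minimality of $P$. Let $\Delta$ denote the (positive) number of intra-orbit hops separating rows $r_1$ and $r_2$.

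Two reductions simplify the setup. First, by case analysis on the row of any intermediate cross-orbit sequence $H_x$ between $H_1$ and $H_2$ in $P$, at least one of $(H_1, H_x)$ or $(H_x, H_2)$ is itself a bad pair; iterating, I may take $H_1$ and $H_2$ to be consecutive cross-orbit sequences in $P$. Among all such consecutive bad pairs, I select an innermost one, so that no cross-orbit sequence in the prefix of $P$ (before $H_1$) lies strictly closer to $row(d)$ than $r_1$, and no cross-orbit sequence in the suffix of $P$ (after $H_2$) lies strictly farther from $row(d)$ than $r_2$. Finally, Lemma~\ref{lem:shortcut-loops} lets me assume the intra-orbit segment of $P$ from the end of $H_1$ to the start of $H_2$ is the direct column of $\Delta$ links, and that $H_1$ and $H_2$ traverse their columns in a common direction (otherwise $P$ would contain a loop and could be shortened).

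The construction of $P'$ is a \emph{swap}: follow the prefix of $P$ up to (but not including) $H_1$, then redirect its final intra-orbit segment to end at row $r_2$ rather than row $r_1$ at the same column; run the cross-orbit links of $H_2$, still in row $r_2$; use $\Delta$ intra-orbit links to move to row $r_1$; run the cross-orbit links of $H_1$, still in row $r_1$; and finally redirect the suffix of $P$ to start from row $r_1$ rather than row $r_2$. Because each of $H_1$ and $H_2$ remains in its original row, Property~\ref{prop:hlink-length} preserves every cross-orbit link length, so $P$ and $P'$ differ only in their intra-orbit accounting.

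A direct count of intra-orbit links finishes the argument. Case analysis on the position of $row(s)$ relative to $\{r_1, r_2\}$ (including the degenerate sub-cases $r_2 = row(s)$ and $r_1 = row(d)$) uniformly yields: the prefix redirection saves $\Delta$ intra-orbit links because $r_2$ must lie on the $row(s)$-side of $r_1$ in the grid's row ordering (a direct consequence of $r_1$ being closer to $row(d)$ than $r_2$); the suffix redirection saves another $\Delta$ intra-orbit links because $r_1$ is closer to $row(d)$ than $r_2$. The net saving is $2\Delta$ intra-orbit links, each of fixed positive length by Property~\ref{prop:vlink-length}, so $P'$ is strictly shorter than $P$. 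The main obstacle is verifying that the innermost-bad-pair reduction really prevents either redirection from \emph{introducing} extra intra-orbit links in segments between earlier or later cross-orbit sequences in $P$; this is precisely what the second reduction secures.
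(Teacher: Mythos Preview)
Your swap construction does not produce a connected path. After the first reduction, $H_1$ spans columns $c_1$ through $c_1'$ in row $r_1$, the intra-orbit segment stays in column $c_1'$, and $H_2$ spans columns $c_1'$ through $c_2$ in row $r_2$. When you redirect the prefix to end at $(r_2,c_1)$ and then ``run the cross-orbit links of $H_2$,'' you are at column $c_1$ but $H_2$ begins at column $c_1'$; symmetrically, $H_1$ ends at column $c_1'$ while your redirected suffix must begin at column $c_2$. If instead you meant to traverse the \emph{columns} of $H_1$ in row $r_2$ and the columns of $H_2$ in row $r_1$, then you are no longer using the original cross-orbit links, and your appeal to Property~\ref{prop:hlink-length} to claim the cross-orbit cost is unchanged fails: links in different rows over the same column generally have different lengths.

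The ``innermost'' reduction is also not sound as stated. Take a path whose cross-orbit sequences lie in rows $3,1,2,4$ (in traversal order, smaller index closer to $row(d)$). The consecutive bad pairs are $(1,2)$ and $(2,4)$; for $(1,2)$ the suffix contains row $4>r_2=2$, and for $(2,4)$ the prefix contains row $1<r_1=2$. Neither pair satisfies both of your prefix/suffix constraints, and for $(1,2)$ the suffix redirection \emph{adds} $\Delta$ intra-orbit links rather than saving them, so the net change is zero even before the cross-orbit accounting issue.

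The paper's argument avoids all of this with a single observation: since $row(H_1)$ lies strictly between $row(H_2)$ and $row(d)$, the path $P$ must revisit $row(H_1)$ at some node $y$ after $H_2$. Taking $x$ to be the first node of $H_1$, both $x$ and $y$ lie in $row(H_1)$ and are therefore joined by a pure cross-orbit sequence along that row, while the sub-path of $P$ from $x$ to $y$ mixes intra- and cross-orbit links. Assumption~\ref{ass:ingrid-vs-outgrid-simple} then immediately gives a strictly shorter $x$--$y$ sub-path, and Lemma~\ref{lem:shortcut-loops} cleans up any loops. No swap, no counting, no extremal choice of bad pair is needed.
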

\begin{proof}
	Assume by contradiction that the shortest path $P$ from $s$ to $d$ traverses two sequences $H_1$ and $H_2$ of cross-orbit links such that $P$ crosses $H_1$ before $H_2$, and $H_1$ is closer to the destination row than $H_2$.

	We denote the first node in $H_1$ as $x$.
	Slightly abusing notation, we also denote the rows of the links belonging to $H_1$ and $H_2$ as $row(H_1)$ and $row(H_2)$ respectively.

	According to what we have assumed by contradiction, $row(H_2)$ is farther away from $row(d)$ than $row(H_1)$. Since $P$ terminates at $d$, $P$ must include at least one node $y$ in $row(H_1)$ after having crossed $H_2$.

	This implies that we can write $P$ as $(s \dots x \dots y \dots d)$  where both $x$ and $y$ are in $row(H_1)$, and $P$'s sub-path from $x$ to $y$ includes $H_1$ and $H_2$.
	Since $row(H_1) \neq row(H_2)$, $P$'s sub-path from $x$ and $y$ is not a sequence of cross-orbit links.
	Assumption~\ref{ass:ingrid-vs-outgrid-simple} then ensures that $P$'s sub-path cannot be the shortest path from $x$ to $y$. In turn, this implies that we can build a path $P_1$ shorter than $P$. An even shorter path can be built by removing loops in $P_1$, if any,  as proved by Lemma~\ref{lem:shortcut-loops}.

	The existence of a path shorter than $P$ contradicts the hypothesis that $P$ is the shortest path from $s$ to $d$, hence yielding the statement.
\end{proof}

\begin{mylemma}~\label{lem:sp-no-step-aside}
	Within any grid, the shortest path $P$ from $s$ to $d$ never crosses two sequences $V_1$ to $V_2$ of intra-orbit links such that: (i) $P$ crosses $V_1$ before $V_2$, and;  (ii) $V_1$ is closer to the destination column than $V_2$.
\end{mylemma}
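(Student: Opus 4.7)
The plan is to mirror the proof of Lemma~\ref{lem:sp-no-step-down}, swapping the roles of rows and columns (and correspondingly of cross-orbit and intra-orbit links). Suppose by contradiction that the shortest path $P$ from $s$ to $d$ traverses two sequences $V_1$ and $V_2$ of intra-orbit links such that $V_1$ is crossed before $V_2$, and $V_1$ is closer to the destination column than $V_2$. Since rows and columns play symmetric roles in the definition of a grid, and since Assumption~\ref{ass:ingrid-vs-outgrid-simple} is symmetric in intra-orbit vs.\ cross-orbit links, the argument should carry over essentially verbatim.

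First I would denote by $x$ the first node of $V_1$ and observe that $x$ lies in one of the two intra-orbit sequences bounding some column, call it $col(V_1)$; likewise $V_2 \subseteq col(V_2)$ with $col(V_1) \neq col(V_2)$. Since $P$ ends at $d$ and $col(V_1)$ is closer than $col(V_2)$ to $col(d)$, $P$ must, after leaving $V_2$, eventually revisit the same intra-orbit sequence that contains $V_1$ in order to reach $d$ within the grid. Let $y$ be any such revisited node, so that we can write $P = (s \dots x \dots y \dots d)$, with $x$ and $y$ both on a common sequence of intra-orbit links and the sub-path $(x \dots y)$ containing both $V_1$ and $V_2$.

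Next I would observe that, because $V_1$ and $V_2$ belong to distinct columns, the sub-path $(x \dots y)$ is \emph{not} a pure sequence of intra-orbit links: traversing $V_2$ requires cross-orbit hops to change orbits and come back. Therefore $(x \dots y)$ mixes intra- and cross-orbit links while its endpoints are the extremes of a pure sequence of intra-orbit links along their common orbit. By Assumption~\ref{ass:ingrid-vs-outgrid-simple}, such a mixed sub-path cannot be the shortest path from $x$ to $y$. Substituting the true shortest $x$-to-$y$ path yields a path shorter than $P$; an appeal to Lemma~\ref{lem:shortcut-loops} removes any loops this substitution might introduce, still producing a path strictly shorter than $P$ and contradicting its minimality.

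The main obstacle I anticipate is making the phrase ``$V_1$ is closer to the destination column than $V_2$'' precise enough to guarantee the existence of the return node $y$ on $V_1$'s orbit. This amounts to showing that, inside the grid, any path from a node in $V_2$'s orbit to $d$ must cross $V_1$'s orbit, which follows from the grid's rectangular (torus-patch) structure: the orbit carrying $V_1$ separates the orbit carrying $V_2$ from $col(d)$. Once this topological separation is stated cleanly---exactly dual to the way ``$H_1$ is closer to the destination row than $H_2$'' was used to force $P$ to revisit $row(H_1)$ in the proof of Lemma~\ref{lem:sp-no-step-down}---the remainder of the proof is a direct transcription of that earlier argument.
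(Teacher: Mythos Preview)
Your proposal is correct and follows essentially the same approach as the paper's proof, which simply states that the argument is symmetric to that of Lemma~\ref{lem:sp-no-step-down} and identifies two nodes $x,y$ connected by a pure intra-orbit sequence while the sub-path of $P$ between them mixes link types, invoking Assumption~\ref{ass:ingrid-vs-outgrid-simple}. Your write-up is more explicit about the topological separation needed to guarantee the return node $y$, but the underlying idea and structure are the same.
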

\begin{proof}
	The statement is proved by an argument symmetrical to the proof of Lemma~\ref{lem:sp-no-step-down}. Indeed, if a path $P$ crosses $V_1$ before $V_2$, and $V_1$ is closer to $col(d)$ than $V_2$, there must exist two nodes $x$ and $y$ such that (i) $x$ and $y$ are connected by a sequence of intra-orbit links, and (ii) the sub-path of $P$ from $x$ to $y$ includes a mix of cross-orbit and intra-orbit links. Hence, Assumption~\ref{ass:ingrid-vs-outgrid-simple} implies that $P$ cannot be the shortest path from $s$ to $d$. 
\end{proof}

\subsection{Grids moving in the same direction}
\label{app:samedir}

We say that the grid moves in the same direction if all the satellites in the grid move in the same direction -- e.g., either all north-east or all south-east in the SpaceX constellation's plan.
We start by proving properties that hold for all grids moving in the same direction, irrespectively of their type.

First of all, we note that for any grid moving in the same direction, links in any column can be sorted from the southmost to the northmost.
This is because the equator is not parallel to any link in any grid by Assumption~\ref{assume:no-perpendicular-eq}.
Additionally, since $s$ and $d$ are opposite corners in the grid, either $row(s)$ is the northmost row and $row(d)$ is the southmost one, or vice versa.
The following lemma then holds.

\begin{mylemma}\label{lem:optimal-in-column}
	For any grid moving in the same direction and every column $j$ in it, the shortest cross-orbit link in $j$ at any time cannot be a middle link.
\end{mylemma}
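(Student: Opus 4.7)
The plan is to leverage Property~\ref{prop:hlink-length} and the monotonic behaviour of latitude along intra-orbit links in a grid moving in the same direction. Fix a column $j$ and enumerate its cross-orbit links in the order they appear going from the source row to the destination row, say $h_1, h_2, \dots, h_k$. By definition of a column, the two endpoints of $h_i$ and $h_{i+1}$ on each of the two orbital planes are connected by a single intra-orbit link. So it is enough to control how the latitudes of these endpoints evolve as we move along each of the two orbital planes within the grid, and then argue that the latitude of the midpoint of $h_i$ lies strictly between the latitudes of its endpoints.

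First I would establish a geometric fact: for any orbit, latitude is a monotonic function of position on each half-orbit (between one extreme parallel and the other). Since the grid moves in the same direction, every satellite in the grid lies in the same half-orbit of its respective plane, so the sequence of latitudes of the endpoints of $h_1,\dots,h_k$ on each plane is monotonic in $i$. Because the midpoint of $h_i$ is the midpoint of an arc joining two nearly-equi-latitude endpoints in neighbouring planes, the midpoint latitude is a monotonic function of $i$ as well. Call this common monotone sequence of signed latitudes $\lambda_1,\dots,\lambda_k$.

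Next I would translate monotonicity of the signed latitudes into a statement about $\delta_t(h_i)=|\lambda_i|$, splitting into two cases. If the column does not cross the equator, all $\lambda_i$ have the same sign, so $\delta_t$ is itself monotonic in $i$ and achieves its maximum at $i=1$ or $i=k$. If the column does cross the equator, then by Assumption~\ref{assume:no-perpendicular-eq} the equator crosses the midpoint of at most one link per column, so $|\lambda_i|$ strictly decreases up to the crossing and strictly increases after it, and again the maximum is attained at $i=1$ or $i=k$. Either way, the midpoint farthest from the equator belongs to $row(s)$ or $row(d)$, and by Property~\ref{prop:hlink-length} this endpoint link is strictly shorter than every middle link in the column, which is exactly the claim.

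The main obstacle I anticipate is the geometric step asserting that the midpoint of a cross-orbit link has latitude lying between (and monotonically tracking) the latitudes of its endpoints. It is intuitively clear from the symmetry of an arc between two satellites in adjacent planes at comparable latitudes, but a careful argument needs to rule out pathological configurations near an extreme parallel. However, the hypothesis that the grid moves in the same direction precisely prevents any satellite in the grid from having passed an extreme parallel, so all endpoints and their midpoints remain in one half-orbit, making the monotonicity argument go through cleanly.
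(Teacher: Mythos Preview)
Your proposal is correct and rests on the same key observation as the paper: within a column of a grid moving in the same direction, the cross-orbit link whose midpoint is farthest from the equator (hence the shortest, by Property~\ref{prop:hlink-length}) must lie in either the source or the destination row. The paper's proof is, however, more direct. Rather than first establishing that the full sequence of midpoint latitudes $\lambda_1,\dots,\lambda_k$ is monotone and then analysing $|\lambda_i|$, the paper simply fixes an arbitrary middle link $l_m$, uses the observation recorded just before the lemma that the source and destination rows are the northmost and southmost rows of the grid, and performs a three-way case split on whether the midpoint of $l_m$ lies between the equator and the northmost row, between the equator and the southmost row, or on the equator itself. In each case the corresponding extreme-row link is immediately farther from the equator than $l_m$, and the conclusion follows. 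Your route buys a more explicit account of \emph{why} midpoints are ordered along the column, at the price of the extra geometric step about midpoints of arcs that you correctly flag as the main obstacle; the paper sidesteps that step by taking the north--south ordering of rows (and implicitly of their link midpoints) as already given once Assumption~\ref{assume:no-perpendicular-eq} guarantees that links in a column can be sorted from southmost to northmost.
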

\begin{proof}

	Consider any middle link $l_m$ in column $j$.
	Since the equator is not parallel to any link in any grid by Assumption~\ref{assume:no-perpendicular-eq}, $l_m$ cannot be the northmost link nor the southmost one in its column.
	Let then $l_a \neq l_m$ be the northmost-row link in $j$, and $l_b$ be the southmost-row link in $j$.

	At any time $t$, one of the following cases must hold.
	\begin{itemize}
		\item If the midpoint of $l_m$ is between the equator and the northmost row, then $l_m$ is closer to the equator than $l_a$: formally, $\delta_t(l_a) > \delta_t(l_m)$. By Property~\ref{prop:hlink-length}, $l_a$ is then shorter than $l_m$.

		\item Similarly, if the midpoint of $l_m$ is between the equator and the southmost row, then $\delta_t(l_b) > \delta_t(l_m)$, and hence $l_b$ is then shorter than $l_m$ by Property~\ref{prop:hlink-length}.

		\item Otherwise, the equator must cross the midpoint of $l_m$. In this case, both $l_a$ and $l_b$ are farther away from the equator than $l_m$, and hence are both shorter than $l_m$.
	\end{itemize}

	In all the cases, at least one link among $l_a$ and $l_b$ is shorter than any middle link $l_m$ at any time $t$, which directly proves the statement. 
\end{proof}

Lemma~\ref{lem:optimal-in-column} implies that for each column, the source-row link or the destination-row one or both are always the shortest cross-orbit links in the column.

\begin{figure}[t]
	\centering
	\includegraphics[width=.7\columnwidth]{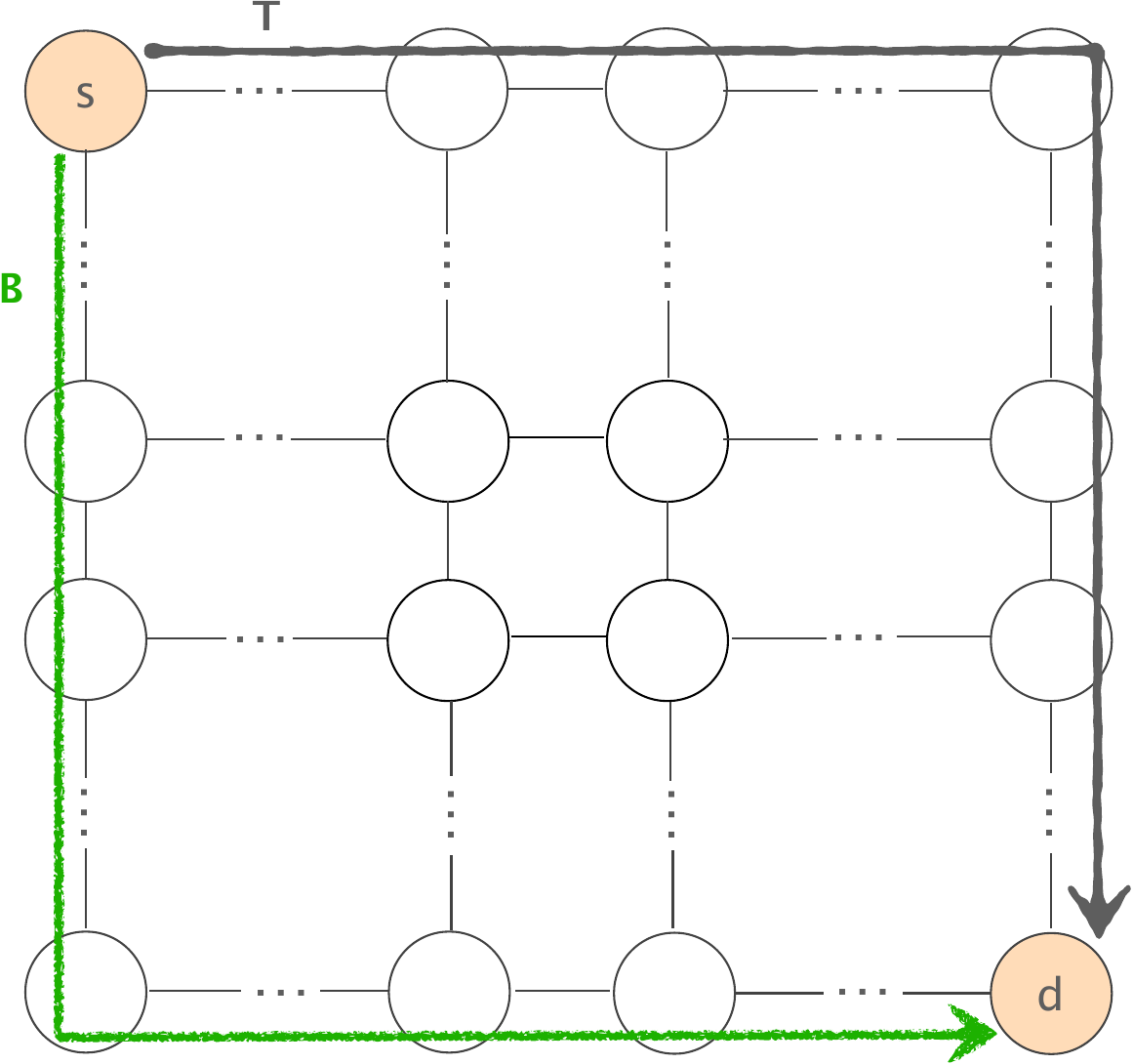}
	\caption{Illustration of the two notable paths $T$ and $B$ on a type-A grid, such as the one in Figure~\ref{fig:spacegrid-base}.}
	\label{fig:t-and-b}
\end{figure}

Building upon the reasoning of Lemma~\ref{lem:optimal-in-column}, we can derive the shape of the shortest paths in grids moving in the same direction when the equator does not cross any link in the grid.
To this end, we define two notable paths, $T$ and $B$, which are visualized in Figure~\ref{fig:t-and-b}.
\begin{itemize}
	\item Path $T$ is the $s$-$d$ path including all the cross-orbit links in the northmost row and a concatenation of intra-orbit links to reach the southmost row.
	\item Path $B$ is the $s$-$d$ path including all the cross-orbit links in the southmost row and a concatenation of intra-orbit links to reach the northmost row.
\end{itemize}

The following theorem proves that $T$ or $B$ is the shortest path when the equator does not cross any link in the grid.

\begin{corollary}\label{theo:sp-samedir-eq-outside}
	For any grid moving in the same direction, $T$ is the shortest path when all the nodes in the grid are north of the equator, and $B$ is the shortest path when all the nodes in the grid are south of the equator.
\end{corollary}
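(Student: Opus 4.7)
The plan is to reduce the length comparison to a per-column minimization of cross-orbit link lengths, and then to invoke Lemma~\ref{lem:optimal-in-column} together with Property~\ref{prop:hlink-length} to show that, in every column, the northmost-row link attains this minimum whenever the grid lies entirely north of the equator (and symmetrically with the southmost row in the south case).

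First I would apply Theorem~\ref{theo:ingrid-vs-outgrid} to confine any shortest $s$-$d$ path $P$ inside the grid. Then, invoking Lemmas~\ref{lem:sp-no-step-down} and~\ref{lem:sp-no-step-aside} (together with loop-removal via Lemma~\ref{lem:shortcut-loops} applied inside a single intra-orbit run), I would argue that the cross-orbit sub-sequences of $P$ progress monotonically toward $row(d)$ and its intra-orbit sub-sequences progress monotonically toward $col(d)$. Because each cross-orbit link preserves the row index and each intra-orbit link preserves the column index, this monotonicity pins the totals: $P$ must use exactly $n_c$ cross-orbit links, one per column of the grid, plus exactly $n_v$ intra-orbit links in total, where $n_c$ is the number of cross-orbit links in any row and $n_v$ is the number of intra-orbit links separating $row(s)$ from $row(d)$ within any column.

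By Property~\ref{prop:vlink-length}, the intra-orbit contribution to $len_t(P)$ is a constant $n_v \cdot l_v$, so $P$ is a shortest path exactly when its cross-orbit contribution is minimized column-by-column. Lemma~\ref{lem:optimal-in-column} restricts the per-column minimizer to $row(s) \cup row(d)$. Since $s$ and $d$ are opposite corners of a grid moving in the same direction, these two rows are the northmost and the southmost rows of the grid. When all nodes of the grid lie north of the equator, the northmost row is strictly farther from the equator than the southmost, so by Property~\ref{prop:hlink-length} its cross-orbit link is strictly shorter in every column. Path $T$ picks exactly these per-column minimizers and bridges them with a single bundle of $n_v$ intra-orbit links, so $len_t(T)$ meets the derived lower bound, making $T$ a shortest path. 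The south case is symmetric, with $B$ in place of $T$.

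I expect the main subtlety to lie in tightly justifying the counting claim that every monotonic $s$-$d$ path uses exactly one cross-orbit link per column and exactly $n_v$ intra-orbit links. This combines the column-preservation of intra-orbit links, the row-preservation of cross-orbit links, and the opposite-corner placement of $s$ and $d$; it requires no deep new idea but some careful grid bookkeeping to make the lower-bound argument airtight.
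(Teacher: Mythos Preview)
Your argument is correct and shares the paper's core: use Lemma~\ref{lem:optimal-in-column} together with Property~\ref{prop:hlink-length} to identify the northmost-row link as the per-column minimizer when the grid lies entirely north of the equator, then observe that $T$ realizes the resulting lower bound with a minimal number of intra-orbit links. The paper simply states the link-count minimality as self-evident rather than deriving it through Theorem~\ref{theo:ingrid-vs-outgrid} and Lemmas~\ref{lem:sp-no-step-down}--\ref{lem:sp-no-step-aside}; your monotonicity detour is sound but heavier than needed, since any in-grid $s$-$d$ path must cross every column (and every inter-row gap) at least once by a direct counting argument on column and row indices.
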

\begin{proof}
	When all the nodes in the grid are north of the equator, any cross-orbit link $l$ in the northmost row is farther away from the equator than the link in the same column of $l$ and in the southmost row. 
	By Lemma~\ref{lem:optimal-in-column}, $l$ is therefore the shortest link in its column. This implies that path $T$ includes a minimal number of cross-orbit links, each of which is the shortest in its column, and a minimal number of fixed-length intra-orbit links. Hence, no other path can be shorter than $T$.

	The statement follows by applying a symmetric reasoning to grids whose nodes are all south of the equator.
\end{proof}

Given Corollary~\ref{theo:sp-samedir-eq-outside}, in the following we focus on scenarios where the equator crosses at least one link in the grid.
We however note that the Theorems~\ref{theo:sp-descending-crossing} and~\ref{theo:sp-ascending-crossing} proved in the following sections are fully compatible with the above statement.

\smallskip

We additionally derive properties from the definition of grids' types.

For grids moving in the same direction, the following property characterizes \textbf{type-A} grids:
when the equator crosses a corner in the source row and at least one link in the grid, the source is farther away from the equator than any other node in the source row -- i.e., if we walk the source row starting from $s$, the equator becomes closer and closer.
Since $s$ and $d$ are opposite corners in the grid, a similar property holds for the destination row: when the equator crosses a corner in the destination row and at least one link in the grid, the destination satellite is the farthest away from the equator among all the nodes in the destination row.

Taking the perspective of cross-orbit links, we can reformulate this property of type-A grids as follows.
When the equator crosses a corner in the source row and at least one link in the grid, the cross-orbit link $(s,u)$ is the farthest away from the equator among all the links in $row(s)$.
Since the equator is not parallel to any link by Assumption~\ref{assume:no-perpendicular-eq}, the cross-orbit link in $row(s)$ and $col(d)$ 
is the closest to the equator, and hence (by Property~\ref{prop:hlink-length}) the longest, among all the source-row links.

Since intra-orbit links have all the same length, we can generalize this last observation for any cross-orbit link that is not located between the equator and the destination row.
\begin{property}\label{prop:descending-grids-hlink-sourcecol}
	In a type-A grid, when a cross-orbit link $l_1$ is not between the equator and $row(d)$, $l_1$ is closer to the equator than any other link $l_s$ such that $row(l_1) = row(l_s)$ and $col(l_s)$ is closer to the source column than $col(l_1)$.
\end{property}

Applying the definition of type-A grid to cross-orbit links in the destination row, we have a symmetric property for cross-orbit links that are not between the equator and the source row.
\begin{property}\label{prop:descending-grids-hlink-destcol}
	In a type-A grid, when a cross-orbit link $l_1$ is not between the equator and $row(s)$, $l_1$ is closer to the equator than any other link $l_d$ such that $row(l_1) = row(l_d)$ and $col(l_d)$ is closer to the destination column than $col(l_1)$.
\end{property}

In \textbf{type-B} grids, the source satellite is the source-row node closest to the equator when the equator crosses a corner in the source row and at least one link in the grid.
Since $s$ and $d$ are opposite corners, when the equator crosses a corner in the destination row and a link in the grid, the destination satellite is the closest node to the equator in its row.
As an example, Figure~\ref{fig:spacegrid-base} would represent a type-B grid if $c_1$ was the source and $c_2$ the destination.

Type-B grids have the following properties that are symmetric to Property~\ref{prop:descending-grids-hlink-sourcecol} and~\ref{prop:descending-grids-hlink-destcol}.
\begin{property}\label{prop:ascending-grids-hlink-destcol}
	In a type-B grid, when a cross-orbit link $l_1$ is not between the equator and $row(d)$,  $l_1$ is closer to the equator than any other link $l_d$ such that $row(l_1) = row(l_d)$ and $col(l_d)$ is closer to the destination column than $col(l_1)$.
\end{property}
\begin{property}\label{prop:ascending-grids-hlink-sourcecol}
	In a type-B grid, when a cross-orbit link $l_1$ is not between the equator and $row(s)$,  $l_1$ is closer to the equator than any other link $l_s$ such that $row(l_1) = row(l_s)$ and $col(l_s)$ is closer to the source column than $col(l_1)$.
\end{property}

\subsection*{Type-A grids}

Consider a generic type-A grid, as the one visualized in Figure~\ref{fig:spacegrid-base}.
We start by showing a couple of properties, formalized in the following lemmas.

\begin{mylemma}\label{lem:optimal-preds-succs-descending}
	Consider any type-A grid and any column $j$ in it.
	At any time, if the cross-orbit link $l$ in $row(s)$ and column $j$ is one of the shortest links in $j$, every link in the sub-path of the source row from $s$ and $l$
	is the shortest one in its column.
\end{mylemma}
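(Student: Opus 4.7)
The plan is to prove the lemma by chaining the two monotonicity properties of type-A grids (Properties~\ref{prop:descending-grids-hlink-sourcecol} and~\ref{prop:descending-grids-hlink-destcol}) with the per-column optimality result already established in Lemma~\ref{lem:optimal-in-column}. At a high level, I want to show that if the source-row link in column $j$ beats (or ties) the destination-row link in column $j$, then the same must be true in every column between $s$ and $j$, because moving back toward $s$ shortens source-row links while lengthening destination-row links.

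Concretely, I fix a time $t$, let $l$ be the source-row link in column $j$, and let $l_d$ be the destination-row link in column $j$. By Lemma~\ref{lem:optimal-in-column}, the shortest link in column $j$ is either $l$ or $l_d$, so the hypothesis that $l$ is one of the shortest yields $len_t(l) \le len_t(l_d)$. Now pick any column $i$ with $col(i)$ strictly closer to the source column than $col(j)$, and write $l^{(i)}$ and $l_d^{(i)}$ for the source-row and destination-row links in column $i$. I would then invoke Property~\ref{prop:descending-grids-hlink-sourcecol} on the pair $l, l^{(i)}$ (same row, $col(l^{(i)})$ closer to the source column than $col(l)$) to get $\delta_t(l) < \delta_t(l^{(i)})$, which via Property~\ref{prop:hlink-length} gives $len_t(l^{(i)}) < len_t(l)$. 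Symmetrically, Property~\ref{prop:descending-grids-hlink-destcol} applied to $l_d^{(i)}, l_d$ yields $\delta_t(l_d^{(i)}) < \delta_t(l_d)$ and hence $len_t(l_d) < len_t(l_d^{(i)})$. Concatenating the three inequalities,
\[
len_t(l^{(i)}) < len_t(l) \le len_t(l_d) < len_t(l_d^{(i)}),
\]
so $l^{(i)}$ is strictly shorter than $l_d^{(i)}$, and Lemma~\ref{lem:optimal-in-column} then forces $l^{(i)}$ to be the shortest link in column $i$.

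The main obstacle I anticipate is the guard clauses on the two properties: they only fire when the relevant link is not between the equator and the opposite row. I would dispatch the cases where these guards fail separately. When the equator does not cross the grid at all, a direct application of the monotonicity used in the proof of Corollary~\ref{theo:sp-samedir-eq-outside} handles everything (both rows simultaneously have their shortest link in the column furthest from the equator, and the source-row link at column $j$ being optimal constrains which row that is globally). When the equator does cross the grid, the type-A geometry guarantees that the source-row link farther from the equator sits in the column closer to $s$, so the link $l$ assumed optimal in the hypothesis must lie on the ``$s$-side'' of the equator, which is exactly the side on which the Property~\ref{prop:descending-grids-hlink-sourcecol} guard holds for every link between $s$ and $l$; a symmetric check on the destination row validates the guard for Property~\ref{prop:descending-grids-hlink-destcol}. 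Once those bookkeeping observations are in place, the chain of inequalities above goes through uniformly, giving the lemma.
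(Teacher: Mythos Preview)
Your chain $len_t(l^{(i)}) < len_t(l) \le len_t(l_d) < len_t(l_d^{(i)})$ is exactly the paper's argument, just phrased in lengths rather than equator-distances. The one place your outline slips is the guard bookkeeping.

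For Property~\ref{prop:descending-grids-hlink-sourcecol} the guard sits on $l$ itself (not on the intermediate links), and it comes for free from the hypothesis: $l$ being among the shortest in column $j$ gives $\delta_t(l)\ge\delta_t(l_d)$, which is precisely the statement that $l$ is not between the equator and $row(d)$. No case split on whether the equator meets the grid is needed here. For Property~\ref{prop:descending-grids-hlink-destcol} the guard sits on $l_d^{(i)}$, and your ``symmetric check on the destination row'' does not go through: nothing prevents $l_d^{(i)}$ from lying between the equator and $row(s)$, so the property can genuinely fail to fire. The paper handles this by case-splitting on exactly that guard. If $l_d^{(i)}$ \emph{is} between the equator and $row(s)$, then since $l^{(i)}\in row(s)$ one gets $\delta_t(l_d^{(i)})<\delta_t(l^{(i)})$ directly and Lemma~\ref{lem:optimal-in-column} finishes without ever touching column $j$; if the guard holds, your four-term chain goes through as written. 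Replace your equator-in/out-of-grid split with this guard-based split and the proof is complete and identical to the paper's.
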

\begin{proof}
	Let $l_s$ be a source-row link such that $l_s$ is the shortest link in its column $j$ at a time $t$. Additionally, let $l_d$ be the destination-row link in column $j$.

	By Property~\ref{prop:hlink-length}, $l_s$ must be more distant to the equator than any other cross-orbit link in column $j$, including $l_d$.
	Formally, $\delta_t(l_{s}) \geq \delta_t(l_{d})$.
	This is possible only if $l_s$ is not between the equator and the destination row.

	Consider any link $h_s$ in the sub-path of $row(s)$ from $s$ and $l$.
	Let $k$ be the column of $h_s$, and let $h_d$ be the destination-row link in column $k$.
	Notation is displayed in Figure~\ref{fig:spacegrid-equator-descending-columns}.

	We have the following cases.
	\begin{itemize}
		\item $h_d$ is between the equator and the source row. In this case, we have $\delta_t(h_d) < \delta_t(h_s)$.
		\item $h_d$ is not between the equator and the source row. By definition, $row(h_d) = row(l_d)$, and $col(l_d) = j$ is closer to the destination column than $col(h_d) = k$. We then have $\delta_t(h_d) < \delta_t(l_d)$ by Property~\ref{prop:descending-grids-hlink-destcol}.
		Additionally, $l_s$ is not between the equator and the destination row by hypothesis, as mentioned above. 
		Since $row(h_s) = row(l_s)$ and $col(h_s) = k$ is closer to the source column than $col(l_s) = j$ by definition, we additionally have $\delta_t(l_s) < \delta_t(h_s)$ by Property~\ref{prop:descending-grids-hlink-sourcecol}.
		Considering all the above observations together, we conclude that $\delta_t(h_d) < \delta_t(l_d) \leq \delta_t(l_s) < \delta_t(h_s)$.
	\end{itemize}

	In both cases, $h_d$ is closer to the equator than $h_s$.
	By Property~\ref{prop:hlink-length}, $h_{d}$ is therefore longer than $h_{s}$.
	Lemma~\ref{lem:optimal-in-column} then ensures that $h_{s}$ is the shortest cross-orbit link in column $k$, which directly proves the statement.
\end{proof}

\begin{figure}[t]
	\centering
	\includegraphics[width=\columnwidth]{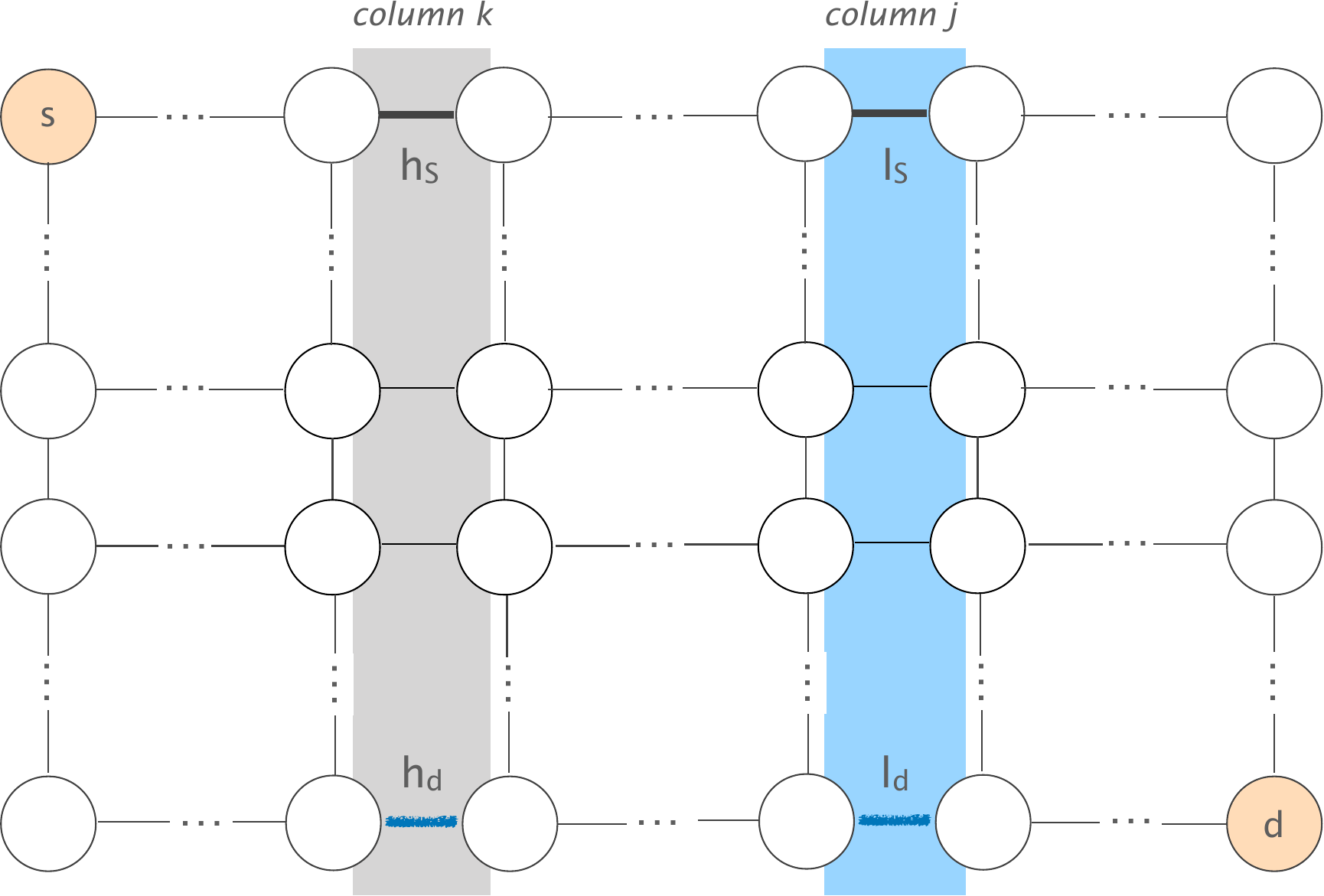}
	\caption{Visual support for Lemma~\ref{lem:optimal-preds-succs-descending}.}
	\label{fig:spacegrid-equator-descending-columns}
\end{figure}

\begin{mylemma}\label{lem:optimal-succs-descending}
	Consider any type-A grid and any column $m$ in it.
	At any time, if the cross-orbit link $l$ in $row(d)$ and column $m$ is one of the shortest links in $m$, every link in the sub-path of the destination row from $l$ to $d$ is the shortest one in its column.
\end{mylemma}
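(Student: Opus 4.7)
The plan is to mirror the proof of Lemma~\ref{lem:optimal-preds-succs-descending}, exploiting the symmetry between source-row and destination-row roles in a type-A grid. Let $l_d$ denote the destination-row link in column $m$ that is (one of) the shortest in $m$, and let $l_s$ be the source-row link in column $m$. The first step is to observe that, since $l_d$ is at least as short as $l_s$, Property~\ref{prop:hlink-length} forces $\delta_t(l_d) \geq \delta_t(l_s)$, which in turn means that $l_d$ is \emph{not} located between the equator and $row(s)$. This is the type-A analogue of the first step in Lemma~\ref{lem:optimal-preds-succs-descending}, and it sets up the applicability of the grid-type properties below.

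Next, I would pick an arbitrary link $h_d$ on the sub-path of $row(d)$ from $l_d$ to $d$, let $k = col(h_d)$, and let $h_s$ be the source-row link in column $k$. Note that $col(h_d) = k$ is closer to $col(d)$ than $col(l_d) = m$. The goal is to show $\delta_t(h_s) < \delta_t(h_d)$, so that by Property~\ref{prop:hlink-length} $h_d$ is the shorter of the two cross-orbit links in column $k$, and then Lemma~\ref{lem:optimal-in-column} forces $h_d$ to be the shortest link in $k$.

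I would split into the two natural cases. If $h_s$ lies between the equator and $row(d)$, then directly $\delta_t(h_s) < \delta_t(h_d)$ and we are done. Otherwise, applying Property~\ref{prop:descending-grids-hlink-sourcecol} to $h_s$ and $l_s$ (columns $k$ and $m$, with $m$ closer to $col(s)$) yields $\delta_t(h_s) < \delta_t(l_s)$, and applying Property~\ref{prop:descending-grids-hlink-destcol} to $l_d$ and $h_d$ (columns $m$ and $k$, with $k$ closer to $col(d)$, and using that $l_d$ is not between the equator and $row(s)$ from the first step) yields $\delta_t(l_d) < \delta_t(h_d)$. Chaining gives $\delta_t(h_s) < \delta_t(l_s) \leq \delta_t(l_d) < \delta_t(h_d)$, which closes the case.

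The main obstacle is bookkeeping rather than anything deep: one must apply Properties~\ref{prop:descending-grids-hlink-sourcecol} and~\ref{prop:descending-grids-hlink-destcol} with the correct direction, verifying each time which row is ``not between the equator and the opposite row'' so that the property is applicable. Once the alignment with Lemma~\ref{lem:optimal-preds-succs-descending} is made explicit, the rest follows by the same invocation of Lemma~\ref{lem:optimal-in-column} to conclude that $h_d$ is shortest in its column, and the lemma follows by quantifying over all such $h_d$.
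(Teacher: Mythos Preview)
Your proposal is correct and follows exactly the approach the paper itself takes: the paper's proof consists of the single sentence ``An argument symmetric to the one in the proof of Lemma~\ref{lem:optimal-preds-succs-descending} yields the statement,'' and you have faithfully spelled out that symmetric argument, swapping the roles of $row(s)$ and $row(d)$ and using Properties~\ref{prop:descending-grids-hlink-sourcecol} and~\ref{prop:descending-grids-hlink-destcol} in the appropriately swapped directions.
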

\begin{proof}
	An argument symmetric to the one in the proof of Lemma~\ref{lem:optimal-preds-succs-descending} yields the statement.	
\end{proof}

We now define the shape of the shortest path in type-A grids when they move in the same direction.

Let $\mathcal{F}_t$ be the set of cross-orbit links such that at time $t$, every link  in $\mathcal{F}_t$ is the farthest away from the equator among the cross-orbit links in its column
\footnote{Note that at specific times, the source-row and destination-row links of one column in the grid may have the same length, and hence both be the shortest in the column.When this happens, $s$ and $d$ have two shortest paths of the same length.	Since we are interested in any shortest path, we define $\mathcal{F}_t$ to include only one of the two shortest links in the column.}.
Lemma~\ref{lem:optimal-in-column} guarantees that $\mathcal{F}_t$ includes only links in the source row or in the destination row, when grids move in the same direction.

The shortest path $P$ from $s$ to $d$ is the path that includes all the links in $\mathcal{F}_t$ and the sequence of intra-orbit links connecting links in $\mathcal{F}_t$ that belong to adjacent columns and different rows.
Figure~\ref{fig:spacegrid-equator-descending-sp} visualizes such a path $P$ along with sub-paths formally defined below.

The following lemma shows that $P$ is a non-loopy path from $s$ to $d$.
We use this lemma to prove that $P$ is indeed the shortest path from $s$ to $d$ in Theorem~\ref{theo:sp-descending-crossing}.

\begin{mylemma}\label{lem:sp-exists-descending-crossing}
	Links in $\mathcal{F}_t$ form a sequence of source-row cross-orbit links $H_s = (s \dots x)$ and a sequence of destination-row cross-orbit links $H_d = (z \dots d)$ such that there exists a sequence of intra-orbit links $V_p = (x \dots z)$.
\end{mylemma}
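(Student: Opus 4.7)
My plan is to show that $\mathcal{F}_t$ splits cleanly into a contiguous source-row prefix starting at $s$ and a contiguous destination-row suffix ending at $d$, by combining the structural Lemmas~\ref{lem:optimal-preds-succs-descending} and~\ref{lem:optimal-succs-descending} with the column-by-column selection guaranteed by Lemma~\ref{lem:optimal-in-column}. Once contiguity is established, the intra-orbit sequence $V_p$ is simply the intra-orbit sequence of the grid that separates the prefix columns from the suffix columns.

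Concretely, I would label the columns of the grid $1, \dots, n$ from the source column to the destination column, denote the source-row nodes as $s = v_0, v_1, \dots, v_n = c_1$ (so the source-row cross-orbit link in column $j$ is $(v_{j-1}, v_j)$), and analogously write the destination-row nodes as $w_0, \dots, w_n = d$. Since adjacent columns $j$ and $j+1$ share an intra-orbit sequence that connects $v_j$ to $w_j$, every pair $(v_j, w_j)$ bounds such a column-boundary sequence. Next, I would define $j^*$ as the largest index whose source-row cross-orbit link is shortest-in-column (with $j^* = 0$ if none is). Lemma~\ref{lem:optimal-preds-succs-descending} then forces every source-row link in columns $1, \dots, j^*$ to be shortest-in-column; by maximality of $j^*$ together with Lemma~\ref{lem:optimal-in-column}, the destination-row link must be shortest-in-column in every column $j^*+1, \dots, n$; and Lemma~\ref{lem:optimal-succs-descending} separately confirms that these latter columns are upward-closed. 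Choosing the otherwise arbitrary tie-breaking in $\mathcal{F}_t$ so that it picks the source-row link in columns $\leq j^*$ and the destination-row link in columns $> j^*$, I immediately obtain $H_s = (v_0 \dots v_{j^*})$ with $x = v_{j^*}$, $H_d = (w_{j^*} \dots w_n)$ with $z = w_{j^*}$, and $V_p$ as the intra-orbit sequence from $v_{j^*}$ to $w_{j^*}$ shared by columns $j^*$ and $j^*+1$. The degenerate cases $j^* = 0$ (where $H_s$ is empty, $x = s$, and $V_p$ runs down the left side of the source column) and $j^* = n$ (where $H_d$ is empty, $z = d$, and $V_p$ runs down the right side of the destination column) slot in without change.

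The main obstacle I anticipate is handling ties: if some column has its source-row and destination-row links of equal length, both are simultaneously shortest-in-column, so both are candidates for $\mathcal{F}_t$, and an unlucky tie-breaking could place a source-row selection in a column above $j^*$ or a destination-row selection in a column below $j^*$, destroying contiguity. The footnote in the definition of $\mathcal{F}_t$ explicitly grants the freedom to pick either tied link, so I would exploit exactly this freedom to align the per-column selections with the prefix-suffix structure above. Once this bookkeeping is settled, the proof reduces to a direct assembly of the three cited lemmas.
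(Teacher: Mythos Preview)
Your proposal is correct and follows essentially the same route as the paper's proof: both arguments pick out the rightmost column whose source-row link is shortest-in-column, invoke Lemma~\ref{lem:optimal-preds-succs-descending} to fill in the source-row prefix, use Lemma~\ref{lem:optimal-in-column} (and, redundantly, Lemma~\ref{lem:optimal-succs-descending}) to obtain the destination-row suffix, and read off $V_p$ at the boundary between the two adjacent columns. Your treatment is slightly more explicit than the paper's about the tie-breaking freedom granted by the footnote, but the underlying structure is identical.
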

\begin{proof}
	By definition, $\mathcal{F}_t$ includes one cross-orbit link per column in the grid, and each links in $\mathcal{F}_t$ is either in the source row or in the destination row by Lemma~\ref{lem:optimal-in-column}.

	If all links in $\mathcal{F}_t$ are destination-row links, then $x=s$, $H_s = (s)$, $z$ is the grid's corner in $row(d)$ other than $d$, $H_d$ spans the entire destination row, and there exists a sequence of intra-orbit links from $s$ to $z$ by definition of grid. This observation immediately yields the statement.

	Similarly, if $\mathcal{F}_t$ includes only source-row links, the statement follows by noting that $z = d$, $H_d = (d)$, $x$ is the grid's corner in $row(s)$ other than $s$, $H_s$ spans the entire source row, and there indeed exists a sequence of intra-orbit links from $d$ to $x$, by definition of grid.

	We thus focus now on the case where $\mathcal{F}(t)$ includes both source-row and destination-row links.

	Let $h_s \in \mathcal{F}(t)$ be the source-row link in a column $j$ such that there exists no source-row link in $\mathcal{F}_t$ belonging to a column closer than $j$ to $col(d)$.
	By Lemma~\ref{lem:optimal-preds-succs-descending}, 
	$\mathcal{F}_t$ must include all source-row links in columns closer than $j$ to $col(s)$. The union of those links forms a path $H_s = (s \dots x)$, where $x$ is an extreme of $h_s$.

	Consider now the column $k$ which is next to $j$ and closer to $d$ than $j$, with possibly $k=col(d)$.
	By definition of $h_s$ and by Lemma~\ref{lem:optimal-in-column}, a destination-row link $h_d \in {\mathcal{F}_t}$ must be the shortest in column $k$. Hence, $h_d \in {\mathcal{F}_t}$ by definition of $\mathcal{F}_t$.

	Lemma~\ref{lem:optimal-succs-descending} then ensures that all the destination-row links belonging to a column closer than $k$ to $col(d)$ must be the shortest ones in their respective columns, and hence they all are in $\mathcal{F}_t$.
	That is, the destination-row links in $\mathcal{F}_t$ form a path $H_d = (z \dots d)$ where $z$ is an extreme of $h_d$. 

	Finally, there must exist a sequence of intra-orbit links from $x$ to $z$ because the last link of $H_s$ belongs to a column (i.e., $j$) that is next to the column (i.e., $k$) of the first link in $H_d$, by definition of $j$ and $k$.
\end{proof}

\begin{figure}[t]
	\centering
	\includegraphics[width=.75\columnwidth]{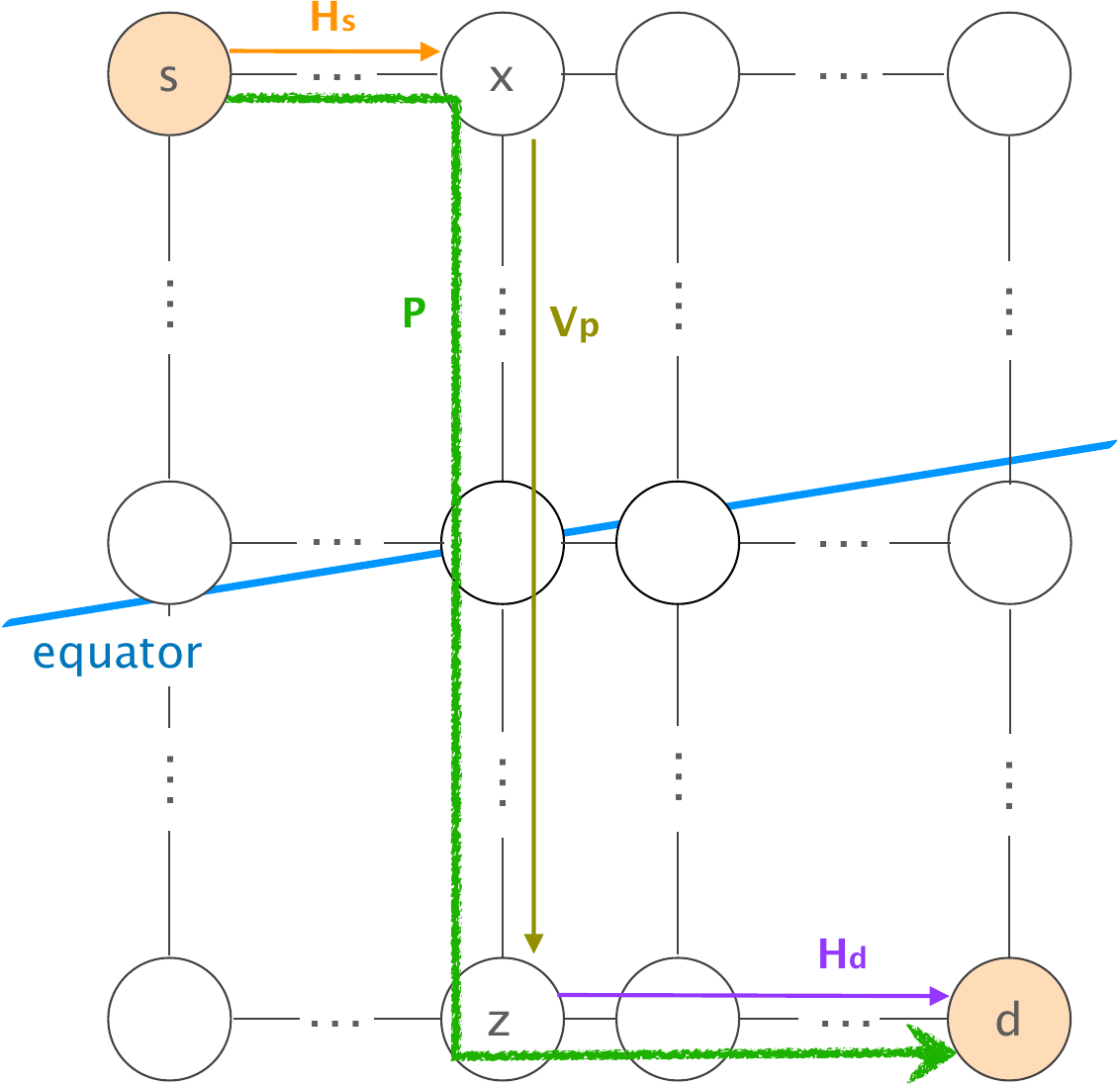}
	\caption{General form of the shortest path in type-A grids moving in the same direction.}
	\label{fig:spacegrid-equator-descending-sp}
\end{figure}

\Paste{theo:samedir-descending-sp}
\begin{proof}
	Consider the path $P$ as defined in this section.
	Lemma~\ref{lem:sp-exists-descending-crossing} ensures that $P$ exists and is a non-loopy path from $s$ to $d$.
	By construction, $P$ includes the shortest cross-orbit link in each column,	and a minimal number of intra-orbit links.
	Hence, no path in the grid can be shorter than $P$.
\end{proof}

\subsection*{Type-B grids}
We now switch to type-B grids.
As a reference, Figure~\ref{fig:spacegrid-ascending-crossing-equator-paths} displays one such grid, highlighting the paths $T$ and $B$ defined earlier in this section.

\begin{figure}[t]
	\centering
	\includegraphics[width=.7\columnwidth]{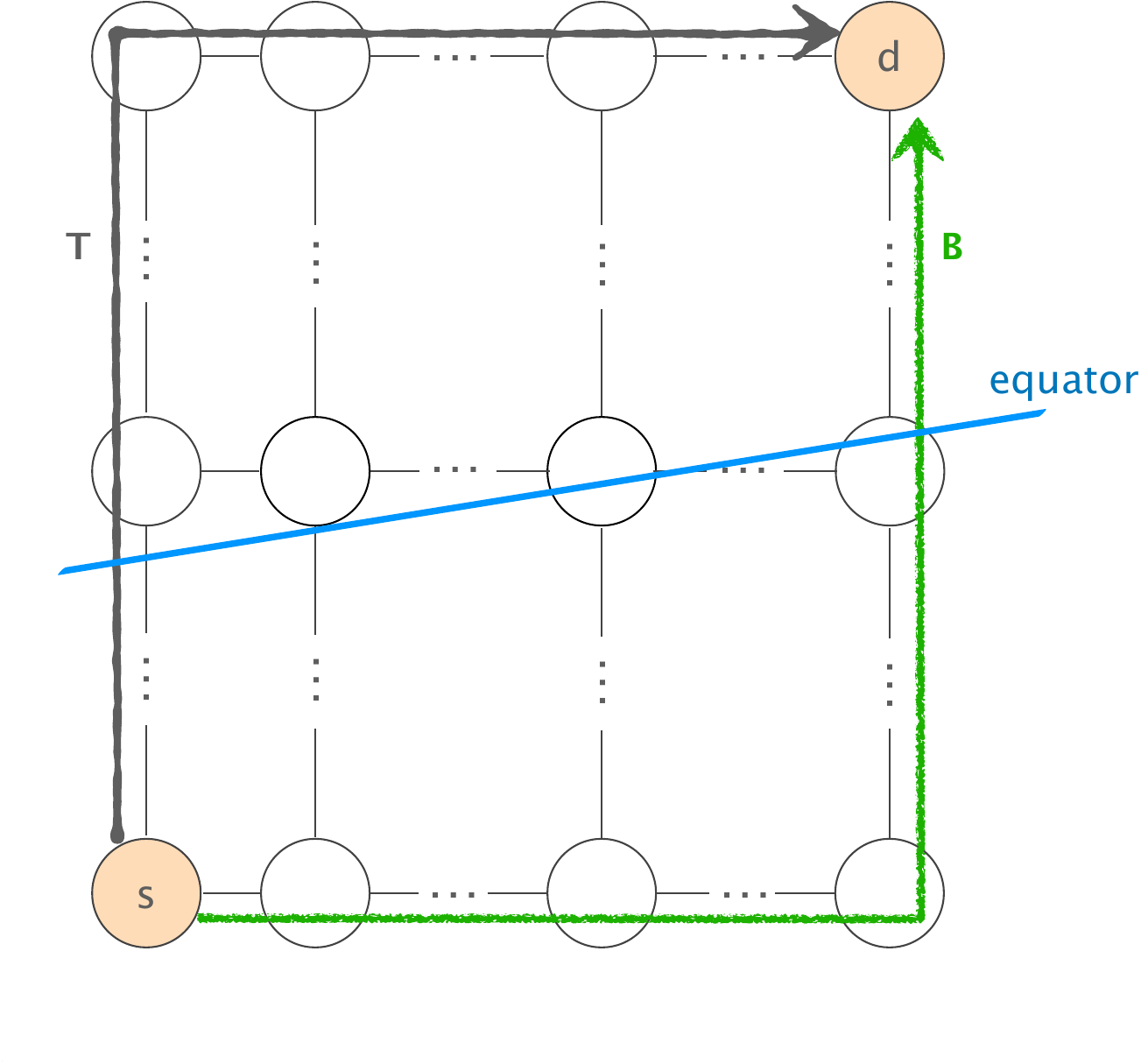}
	\caption{Example type-B grid, along with the two notable paths $T$ and $B$.}
	\label{fig:spacegrid-ascending-crossing-equator-paths}
\end{figure}

We start by showing that in any type-B grid moving in the same direction, the shortest path does not cross intra-orbit links between internal nodes -- or, equivalently, it never crosses more than one sequence of cross-orbit links, as stated by Lemma~\ref{lem:ascending-grid-crossing-no-middle-vlink}.
To this end, we first prove the following lemma.

\begin{mylemma}~\label{lem:ascending-grid-crossing-no-step-up}
	Consider any type-B grid at any time $t$. The shortest path from $s$ to $d$ cannot be the concatenation of a non-empty sequence of cross-orbit links in the source row, a non-empty sequence of intra-orbit links and a non-empty sequence of cross-orbit links in the destination row.
\end{mylemma}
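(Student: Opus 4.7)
I will prove the lemma by contradiction via a single-column swap. Assume $P$ is the shortest path from $s$ to $d$ and has the stated structure: source-row cross-orbit from $s$ to some $x$, intra-orbit from $x$ to some $y$ (necessarily within the single column shared by $x$ and $y$), then destination-row cross-orbit from $y$ to $d$. Let $j$ be the number of source-row cross-orbit links used; the non-emptiness hypotheses force $1 \le j \le C-1$, where $C$ is the number of columns in the grid. Denote by $\mathrm{srcr}_i$ and $\mathrm{destr}_i$ the lengths of the source-row and destination-row cross-orbit links in column $i$ respectively.

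The key step is to compare $P$ against two neighboring ``split'' paths $P^{+}$ and $P^{-}$, obtained from $P$ by shifting the switching column by one unit: $P^{+}$ uses source-row for columns $1,\dots,j+1$, one intra-orbit descent, and destination-row for columns $j+2,\dots,C$, while $P^{-}$ shifts the split the other way. Both use exactly $R-1$ intra-orbit links, so a direct length count yields
\[
   \mathrm{len}(P^{+})-\mathrm{len}(P)=\mathrm{srcr}_{j+1}-\mathrm{destr}_{j+1},\quad
   \mathrm{len}(P^{-})-\mathrm{len}(P)=\mathrm{destr}_{j}-\mathrm{srcr}_{j}.
\]
If $P$ is a shortest path, both differences must be nonnegative, so $\mathrm{srcr}_{j+1}-\mathrm{destr}_{j+1} \ge 0 \ge \mathrm{srcr}_{j}-\mathrm{destr}_{j}$. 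Equivalently, the sequence $i \mapsto \mathrm{srcr}_i - \mathrm{destr}_i$ must fail to strictly decrease from $i=j$ to $i=j+1$.

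The final step contradicts this by showing that $\mathrm{srcr}_i - \mathrm{destr}_i$ is strictly decreasing in $i$ in any type-B grid at any time. For the source row, Property~\ref{prop:ascending-grids-hlink-destcol} applies because in type-B the vertex $s$ is the source-row node closest to the equator, so source-row links lie outside the strip between the equator and $row(d)$; this yields strict decrease of $\mathrm{srcr}_i$ in $i$. For the destination row, a dual use of Property~\ref{prop:ascending-grids-hlink-sourcecol} (or the analogous role played by $d$ in type-B) handles the case where source and destination rows lie on opposite sides of the equator; one combines these with the observation that in a given column the signed midpoint latitudes of the source-row and destination-row links differ by exactly a constant row-offset, so moving $i$ shifts both midpoints by the same amount along the orbit.

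The chief obstacle is the sub-case in which $row(d)$ straddles the equator: then $\mathrm{destr}_i$ itself is unimodal in $i$ rather than monotonic, so strict decrease of the \emph{difference} does not follow from a naive monotonicity of $\mathrm{destr}_i$. Here one must invoke Property~\ref{prop:hlink-length} more carefully, using the concavity of the link length as a function of signed midpoint latitude: because the source-row midpoint in each column is displaced by the same constant vertical offset from the destination-row midpoint, and the link-length function flattens near the equator and steepens away from it, the source-row link shortens faster than its destination-row counterpart as $i$ grows, so $\mathrm{srcr}_i - \mathrm{destr}_i$ still strictly decreases. This contradicts the necessary condition from the swap step and yields the lemma.
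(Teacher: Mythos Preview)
Your swap step is sound and in fact coincides with the paper's opening move: comparing $P$ with the two neighbouring ``splits'' yields exactly the two inequalities $\mathrm{srcr}_j \le \mathrm{destr}_j$ and $\mathrm{destr}_{j+1} \le \mathrm{srcr}_{j+1}$ that the paper also derives (there written as $len_t(u,v)\le len_t(p,z)$ and $len_t(z,w)\le len_t(v,x)$). The divergence, and the gap, is in how you close the argument.

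You try to show that $i\mapsto \mathrm{srcr}_i-\mathrm{destr}_i$ is \emph{globally} strictly decreasing. Two problems. First, your appeal to Property~\ref{prop:ascending-grids-hlink-destcol} to get monotonicity of $\mathrm{srcr}_i$ assumes every source-row link lies outside the strip between the equator and $row(d)$; your justification (``$s$ is the source-row node closest to the equator'') is a statement about the type-B corner configuration at one particular orbital position, not about where the equator sits relative to the grid at the given time $t$, so it does not establish the hypothesis of the property. Second, and more seriously, your fallback argument for the straddling case invokes \emph{concavity} of the link-length function in latitude. The paper's model only gives monotonicity (Property~\ref{prop:hlink-length}); no curvature assumption is available, and without it the difference $\mathrm{srcr}_i-\mathrm{destr}_i$ need not be monotone.

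The paper avoids this entirely by working locally. From $\mathrm{destr}_{j+1}\le \mathrm{srcr}_{j+1}$ it first deduces that the destination-row link in column $j{+}1$ is not between the equator and $row(s)$, which licenses Property~\ref{prop:ascending-grids-hlink-sourcecol} at that single link to get $\mathrm{destr}_j<\mathrm{destr}_{j+1}$. Chaining gives $\mathrm{srcr}_j<\mathrm{srcr}_{j+1}$, and the contrapositive of Property~\ref{prop:ascending-grids-hlink-destcol} then forces the column-$j$ source-row link to lie between the equator and $row(d)$, which immediately contradicts $\mathrm{srcr}_j\le \mathrm{destr}_j$. The point is that the optimality inequalities themselves pin down the equator's position relative to the two critical links, so the hypotheses of Properties~\ref{prop:ascending-grids-hlink-destcol}--\ref{prop:ascending-grids-hlink-sourcecol} are \emph{derived} rather than assumed, and no quantitative shape information about the length function is needed.
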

\begin{proof}
	Let $Q$ be any path formed by exactly three sub-paths: a sequence of cross-orbit links in the source row, a sequence of cross-orbit links in the destination row, and a sequence of intra-orbit links connecting them.
	We refer the reader to Figure~\ref{fig:spacegrid-ascending-crossing-equator-no-zigzag} for the notation used in this proof.

	Consistently with Figure~\ref{fig:spacegrid-ascending-crossing-equator-no-zigzag}, we assume that the source row is the southmost row, and the destination row is the northmost one. This assumption is without loss of generality, because a symmetrical argument holds whenever $row(d)$ is the southmost row and $row(s)$ is the northmost one.

	By definition of the notable paths $B$ and $T$, the sequences of cross-orbit links in $Q$ must be sub-paths of $B$ and $T$.
	It must therefore be possible to write $Q$ as the concatenation of $B_1$, $V_z$ and $T_2$, where $B_1 = (s \dots u\ v)$ is a sub-path of $B$ starting from the source and including only cross-orbit links; $T_2 = (z\ w \dots d)$ is a sub-path of $T$ including only of cross-orbit links and terminating at the destination; and $V_z = (v \dots z)$ is a sequence of intra-orbit links connecting $B_1$ and $T_2$.

	According to this notation, $B$ must then be the concatenation of $B_1$ with another sequence $B_2$ of cross-orbit links in the source row, plus a minimal set of intra-orbit links.
	Similarly, $T$ must consist of a minimal sequence of intra-orbit links plus two sequences $T_1$ and $T_2$ of cross-orbit links.

	We now assume by contradiction that $Q = B_1\ V_z\ T_2$ is the shortest path from $s$ to $d$ at a given time $t$.
	The following relations must be true.

	\begin{itemize}
	\item For $Q$ to be the shortest path, we must have $len_t(u,v) \leq len_t(p,z)$, where $p$ is the destination-row node connected to $u$ with a sequence of intra-orbit links.
	Indeed, if $(u,v)$ is longer than $(p,z)$, we can build a path shorter than $Q$ by replacing its sub-path $(u,v) V_z$ with a sequence of intra-orbit links $(u \dots p)$ followed by $(p,z)$.

	\item For a similar reason as the one just above, $len_t(z,w) \leq len_t(v,x)$, where $x$ is the source-row node connected to $w$ with a sequence of intra-orbit links.

	\item 
	For $len_t(z,w) \leq len_t(v,x)$ to hold, $(z,w)$ cannot be between the equator and $row(s)$, otherwise $(z,w)$ would be closer to the equator than $(v,x)$.
	Property~\ref{prop:ascending-grids-hlink-sourcecol} then implies that $\delta_t(z,w) < \delta_t(p,z)$. By Property~\ref{prop:hlink-length}, we then have $len_t(p,z) < len_t(z,w)$.
	\end{itemize}

	Combining the above constraints on the considered links, we have: $len_t(u,v) \leq len_t(p,z) < len_t(z,w) \leq len_t(v,x)$.
	That is, $len_t(u,v) < len_t(v,x)$.

	If $len_t(u,v) < len_t(v,x)$, $(u,v)$ must be between the equator and $row(d)$ at time $t$.
	Indeed, by Property~\ref{prop:ascending-grids-hlink-destcol}, if $(u,v)$ is not between the equator and the destination row, then $(u,v)$ must be closer to the equator than $(v,x)$, given that the two links are in the same row and $(v,x)$ is closer to $col(d)$ than $(u,v)$.

	In turn, if $(u,v)$ is between the equator and $row(s)$, $(u,v)$ must be closer to the equator than the source-row link $(p,z)$ in the same column, and hence $len_t(u,v) > len_t(p,z)$. 

	This however leads to a contradiction (see first point in the above list), and hence yields the statement.
\end{proof}

\begin{figure}[t]
	\centering
	\includegraphics[width=.7\columnwidth]{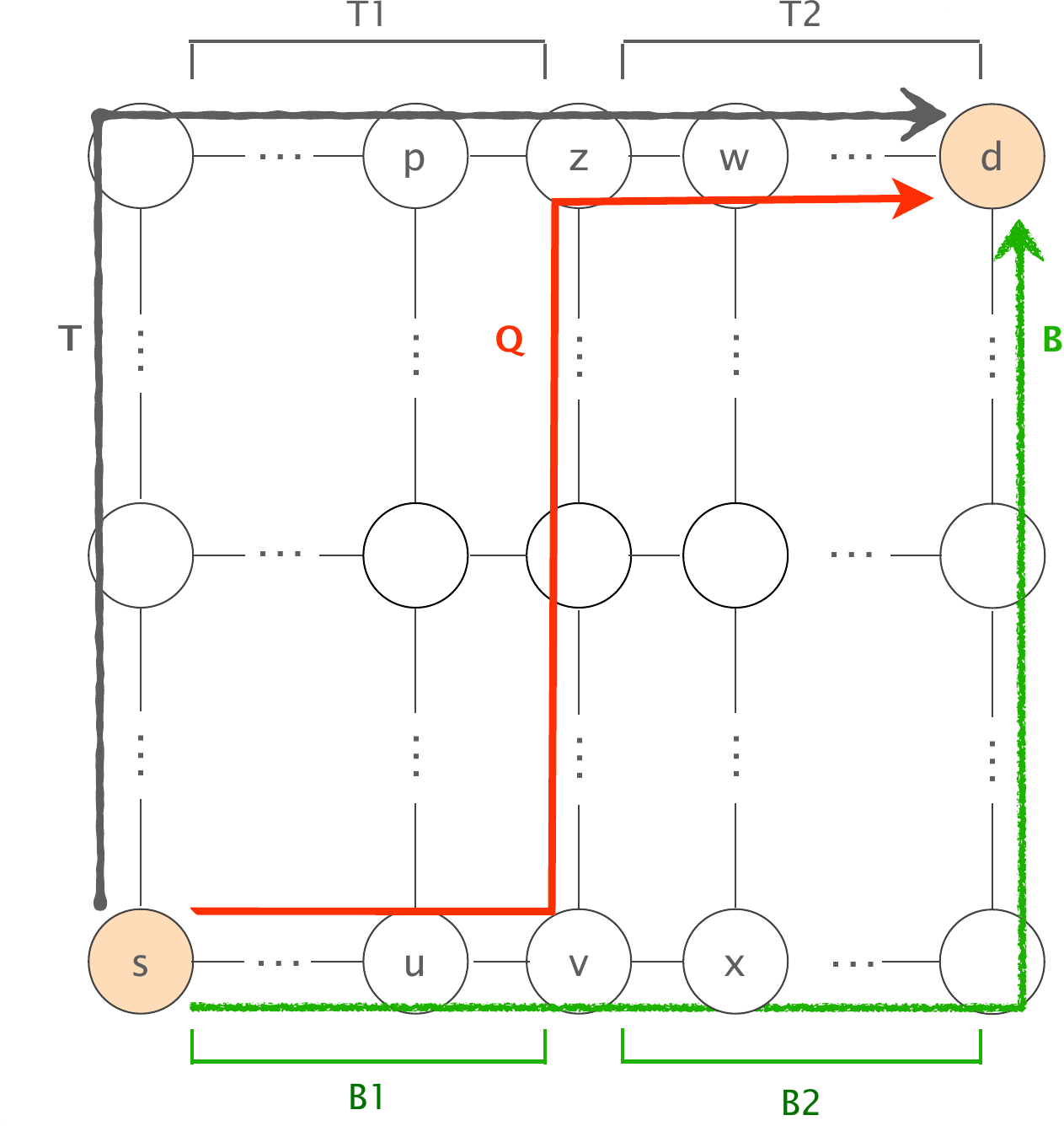}
	\caption{Visual support for the proof of Lemma~\ref{lem:ascending-grid-crossing-no-step-up}.}
	\label{fig:spacegrid-ascending-crossing-equator-no-zigzag}
\end{figure}

\begin{mylemma}\label{lem:ascending-grid-crossing-no-middle-vlink}
	In any type-B grid moving in the same direction, the shortest path from $s$ to $d$ always crosses exactly one sequence of cross-orbit links.
\end{mylemma}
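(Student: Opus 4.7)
I would split the argument into two cases depending on whether the equator crosses any link of the grid $G$. The easy case is when no link of $G$ crosses the equator: then Corollary~\ref{theo:sp-samedir-eq-outside} pins the shortest path to either $T$ or $B$, both of which, by their very definitions, comprise a single sequence of cross-orbit links sandwiched between (possibly empty) sequences of intra-orbit links.

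For the remaining case, where the equator does cross at least one link of $G$, I would argue by contradiction. Assume that the shortest $s$-$d$ path $P$ contains two or more cross-orbit sequences, and let $H_1, H_2$ denote the first two of them in order along $P$. Because cross-orbit sequences are maximal, there must be a non-empty intra-orbit sub-path $V$ between $H_1$ and $H_2$. Lemma~\ref{lem:sp-no-step-down} places $H_2$ in a row strictly closer to $row(d)$ than $H_1$, and Lemma~\ref{lem:sp-no-step-aside} forces $V$ to progress monotonically toward $col(d)$. Let $u$ be the first node of $H_1$ and $w$ the last node of $H_2$. By sub-path optimality of shortest paths, $Q = H_1 \cdot V \cdot H_2$ must itself be the shortest path from $u$ to $w$ in $G$.

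The plan is then to replay the inequality-chaining argument from the proof of Lemma~\ref{lem:ascending-grid-crossing-no-step-up}, substituting the arbitrary rows $row(H_1), row(H_2)$ for the source and destination rows used there. With $(u',v')$ the last link of $H_1$, $(z,w')$ the first link of $H_2$, and $(p,z), (v',x)$ the two ``corner-completing'' cross-orbit links in the same columns but in the other rows, local swap arguments within $Q$ force $len_t(u',v') \leq len_t(p,z)$ and $len_t(z,w') \leq len_t(v',x)$. Applying Properties~\ref{prop:ascending-grids-hlink-destcol} and~\ref{prop:ascending-grids-hlink-sourcecol} of $G$ to these four links then reproduces the same chain of inequalities as in Lemma~\ref{lem:ascending-grid-crossing-no-step-up}, and hence the same self-contradictory conclusion that $(u',v')$ must be simultaneously between and not between the equator and $row(d)$.

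The main obstacle will be justifying that the Lemma~\ref{lem:ascending-grid-crossing-no-step-up}-style argument really goes through at arbitrary rows rather than only at the grid's source and destination rows. The two points I will need to check carefully are: (i) Properties~\ref{prop:ascending-grids-hlink-destcol} and~\ref{prop:ascending-grids-hlink-sourcecol} are stated for any cross-orbit link of $G$ and compare same-row links only by their column proximity to $col(s)$ or $col(d)$, so they apply cleanly to every link appearing in $Q$; and (ii) the directionality ``closer to $col(s)$ / $col(d)$'' in those properties matches the orientation of $V$ in $Q$, which is exactly the monotonicity guaranteed by Lemma~\ref{lem:sp-no-step-aside}. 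Once these two points are established, the original contradiction transplants verbatim, completing the proof.
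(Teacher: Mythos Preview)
Your outline is essentially the paper's proof. The paper also reduces to Lemma~\ref{lem:ascending-grid-crossing-no-step-up} after using Lemmas~\ref{lem:sp-no-step-down} and~\ref{lem:sp-no-step-aside} to orient $H_1,H_2$, and it also relies on sub-path optimality of the segment $H_1\cdot V\cdot H_2$.

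The one place where the paper is cleaner than your sketch is precisely the obstacle you flag. Rather than ``replaying'' the inequality chain of Lemma~\ref{lem:ascending-grid-crossing-no-step-up} at the rows $row(H_1),row(H_2)$ while invoking Properties~\ref{prop:ascending-grids-hlink-destcol} and~\ref{prop:ascending-grids-hlink-sourcecol} of the ambient grid $G$, the paper observes that the $x$--$y$ sub-grid (with $x$ the first node of $H_1$ and $y$ the last node of $H_2$) is itself a type-B grid moving in the same direction, and then applies Lemma~\ref{lem:ascending-grid-crossing-no-step-up} to that sub-grid as a black box. This sidesteps the issue you would otherwise face: the hypotheses of Properties~\ref{prop:ascending-grids-hlink-destcol} and~\ref{prop:ascending-grids-hlink-sourcecol} are phrased as ``not between the equator and $row(s)$'' or ``not between the equator and $row(d)$'' of $G$, whereas your swap arguments only yield ``not between the equator and $row(H_1)$'' (resp.\ $row(H_2)$). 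Your point~(i) does not bridge that gap, because the column comparison in those properties is not the problem---the row hypothesis is. Passing to the sub-grid makes $row(H_1),row(H_2)$ the source and destination rows, so the properties apply verbatim there; this is the missing sentence in your plan.
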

\begin{proof}
	Assume by contradiction that the shortest path from $s$ to $d$ is a path $Q$ including at least two sequences of cross-orbit links.
	Let $H_1$ and $H_2$ respectively be the first and second sequence of cross-orbit links crossed by $Q$. 
	Also, let $l_1=(x,u)$ be the first link in $H_1$, and $l_2=(v,y)$ be the last link in $H_2$: $Q$ can then be written as $(s \dots x\ u \dots v\ y \dots d)$, where possibly $x=s$ and $y=d$.

	Lemma~\ref{lem:sp-no-step-down} guarantees that $Q$ cannot be the shortest path if $H_1$ is closer to the destination row than $H_2$.
	Hence, $H_1$ must be closer to the source row than $H_2$.

	Additionally, $col(l_1)$ must be closer to $col(s)$ than $col(l_2)$.
	Indeed, if $col(l_1)$ is not closer to $col(s)$ than $col(l_2)$, the sequence of intra-orbit links following $H_1$ in $Q$ would be closer to $col(d)$ than the sequence of intra-orbit links just before $H_2$: in this case, $Q$ would not be the shortest path by Lemma~\ref{lem:sp-no-step-aside}.

	Combining the observations above, we conclude that the sub-path $Q_{xy}$ of $Q$ from $x$ to $y$ must be the concatenation of $H_1$, a sequence of intra-orbit links, and $H_2$.

	Consider now the $x$-$y$ grid.
	Note that $G_{xy}$ is contained in the $s$-$d$ grid, $row(l_1)$ is closer than $row(l_2)$ to $row(s)$ and $col(l_1)$ is closer than $col(l_2)$ to $col(s)$.
	For this reasons, the $x$-$y$ grid is also a type-B grid moving in the same direction.

	In this case, however, Lemma~\ref{lem:ascending-grid-crossing-no-step-up} states that $Q_{xy}$ cannot be the shortest path from $x$ to $y$.
	We can therefore build an $s$-$d$ path $Q'$ shorter than $Q$, by replacing $Q_{xy}$ with the shortest path from $x$ to $y$.
	If $Q'$ contains loops, Lemma~\ref{lem:shortcut-loops} ensures that we can build an even shorter non-loopy path.
	The existence of such a path contradicts the hypothesis that $Q$ is the shortest path from $s$ to $d$, yielding the statement.
\end{proof}

The following lemma provides additional constraints on the possible shortest paths in type-B grids moving in the same direction.

\begin{mylemma}\label{lem:ascending-grid-crossing-restrict-paths}
	In any type-B grid $G$ moving in the same direction, the shortest path $P$ from $s$ to $d$ cannot have the first cross-orbit link closer to the equator than $(s,a)$ with $a \in G$.
	Also, $P$'s last cross-orbit link cannot be closer to the equator than $(b,d)$ with $b \in G$.
\end{mylemma}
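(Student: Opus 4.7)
The plan is a contradiction-by-exchange argument. First I invoke Lemma~\ref{lem:ascending-grid-crossing-no-middle-vlink} to conclude that the cross-orbit edges of $P$ form a single contiguous sequence $H$ entirely contained in one row $R$ of $G$. Combining this with Lemmas~\ref{lem:sp-no-step-down} and~\ref{lem:sp-no-step-aside}, $P$ must have the shape: intra-orbit descent from $s$ along $col(s)$ to a node $x$ in row $R$, then $H$ traversing row $R$ from $x$ to a node $y$ in $col(d)$, then intra-orbit descent along $col(d)$ to $d$. In particular, the first cross-orbit link $l_1$ of $P$ is in $col(s)$ and its last cross-orbit link $l_L$ is in $col(d)$.

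For the first claim, I would assume for contradiction that $\delta_t(l_1) < \delta_t((s,a))$, so by Property~\ref{prop:hlink-length} we have $len_t(l_1) > len_t((s,a))$. I would then construct an alternative $s$-$d$ path $P'$ as follows: start with the cross-orbit link $(s,a)$; then descend intra-orbit links along the orbit that forms the right side of $col(s)$ until reaching row $R$; then traverse the remainder of row $R$ (i.e., $H$ with $l_1$ removed) to $y$; then descend $col(d)$ to $d$ exactly as in $P$. Because intra-orbit links all share the same length (Property~\ref{prop:vlink-length}) and, by the regularity of the grid, the intra-orbit hop count between row $row(s)$ and row $R$ does not depend on which column-side orbit is used, $P$ and $P'$ contain identical numbers of intra-orbit links; the only difference is that $P'$ uses $(s,a)$ instead of $l_1$. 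Hence $|P'| - |P| = len_t((s,a)) - len_t(l_1) < 0$, contradicting the minimality of $P$.

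A mirror-image exchange proves the second claim. Under the assumption $\delta_t(l_L) < \delta_t((b,d))$, I would construct a path $P''$ that agrees with $P$ up to the node $y'$ immediately preceding $l_L$ (so $y'$ lies in row $R$ on the left-side orbit of $col(d)$), then descends intra-orbit along that orbit down to row $row(d)$ arriving at $b$, and finally uses $(b,d)$ to reach $d$. The same column-invariant counting gives $|P''| - |P| = len_t((b,d)) - len_t(l_L) < 0$, again contradicting minimality. The boundary cases $R = row(s)$ and $R = row(d)$ collapse $l_1 = (s,a)$ or $l_L = (b,d)$, making the corresponding inequalities trivially satisfied with equality; the single-column degenerate grid ($n_c = 1$) follows directly from Lemma~\ref{lem:optimal-in-column}.

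The main obstacle I foresee is cleanly justifying the column-invariance of intra-orbit hop counts between two rows, i.e., that descending $d_1$ intra-orbit links in one side-orbit of $col(s)$ from row $row(s)$ reaches the same row as descending $d_1$ links in the adjacent orbit. This is a structural property of the regular mesh constellation and the definition of a column as a set of cross-orbit links whose endpoints form two intra-orbit sequences each connecting the source row to the destination row; I would formalise it by observing that each column contains exactly one cross-orbit link per row, so rows and columns of the grid form a lattice-like structure on which "descend-then-cross" and "cross-then-descend" commute. Once this is established, the exchange arithmetic is immediate.
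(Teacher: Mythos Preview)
Your exchange argument is correct and is exactly what the paper does: swap the first cross-orbit link $l_1\in col(s)$ for $(s,a)$, relocate the intra-orbit descent to the adjacent orbit, and observe that the intra-orbit hop count is unchanged while the single cross-orbit link got strictly shorter. The symmetric exchange at the destination end is likewise identical.

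The one notable difference is that you begin by invoking Lemma~\ref{lem:ascending-grid-crossing-no-middle-vlink} to pin down the full shape of $P$ (a single row traversal between two intra-orbit descents), whereas the paper's proof does not use that lemma at all. The paper simply writes $P=V_i\cdot h_i\cdot R$, where $V_i$ is the initial intra-orbit run, $h_i$ is the first cross-orbit link, and $R$ is \emph{whatever remains}; it then replaces $V_i\cdot h_i$ by $(s,a)\cdot V'$ with $V'$ an intra-orbit run from $a$ to the far endpoint $q_2$ of $h_i$, leaving $R$ untouched. This keeps Lemma~\ref{lem:ascending-grid-crossing-restrict-paths} logically independent of Lemma~\ref{lem:ascending-grid-crossing-no-middle-vlink} and also dissolves the ``main obstacle'' you flag: once you keep $R$ opaque, the only intra-orbit count that matters is $|V_i|$ versus $|V'|$, and both equal the row distance from $row(s)$ to $row(h_i)$ by the very definition of a grid column (its endpoints form two intra-orbit sequences connecting the source row to the destination row). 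So your worry about column-invariance is real but is a one-line consequence of the column definition, not a separate structural lemma.
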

\begin{proof}
	Let $Q$ be the shortest path from $s$ to $d$.

	Assume by contradiction that the first cross-orbit link $h_i$ that $Q$ crosses is closer to the equator than link $(s,a)$, with $a$ being the neighbor of $s$ in the source row.
	Let $q_1$ and $q_2$ be the two extremes of $h_i$, meaning that $h_i = (q_1,q_2)$.
	By definition of $h_i$, it must be possible to write $Q$ as the concatenation of a sequence of intra-orbit links $V_i = (s \dots q_1)$, the link  $h_i = (q_1,q_2)$, and a remaining sub-path $R$ from $q_2$ to $d$.

	Consider now the path $Q'$ from $s$ to $d$, which starts with the link $(s,a)$, continues with a sequence $V'$ of intra-orbit links from $a$ to $q_2$, and ends with $R$.
	Note that $V'$ must exist because $h_i$ and $(s,a)$ are both in $col(s)$, by definition of $h_i$.

	Let's now compare the length of $Q$ and $Q'$.
	By hypothesis, $h_i$ is closer to the equator than $(s,a)$; hence, $h_i$ is longer than $(s,a)$ by Property~\ref{prop:hlink-length}.
	By construction, $V_i$ and $V'$ have the same number of intra-orbit links, and hence they have the same length by Property~\ref{prop:vlink-length}.

	Since both $Q$ and $Q'$ terminate with the same sub-path $R$ from $q_2$ to $d$, $Q'$ must be shorter than $Q$.
	If $Q'$ includes loops, Lemma~\ref{lem:shortcut-loops} ensures that we can build a non-loopy path even shorter than $Q'$.
	The existence of a path shorter than $Q$ contradicts the hypothesis that $Q$ is the shortest path.

	A similar argument holds if we assume by contradiction that $Q$'s last cross-orbit link $h_f=(q_3,q_4)$ is closer to the equator than link $(b,d)$, where $b$ is the neighbor of $d$ in the destination row.
	In this case, we can indeed build a path $Q''$ shorter than $Q$ by replacing $Q$'s ending sub-path $(q_3\ q_4 \dots d)$ with a sequence of intra-orbit links from $q_3$ followed by $(b,d)$.
	As for the previous case, $Q''$ is shorter than $Q$ because $(b,d)$ is farther away from the equator than $h_f$, and the number of fixed-length intra-orbit links crossed by $Q''$ is the same as in $Q$.
	If $Q''$ includes loops, we can build an even shorter path by Lemma~\ref{lem:shortcut-loops}.
	The existence of such a path then contradicts again the hypothesis that $Q$ is the shortest path from $s$ to $d$.

	The above two arguments yield the statement.
\end{proof}

\begin{figure}[t]
	\centering
	\includegraphics[width=.9\columnwidth]{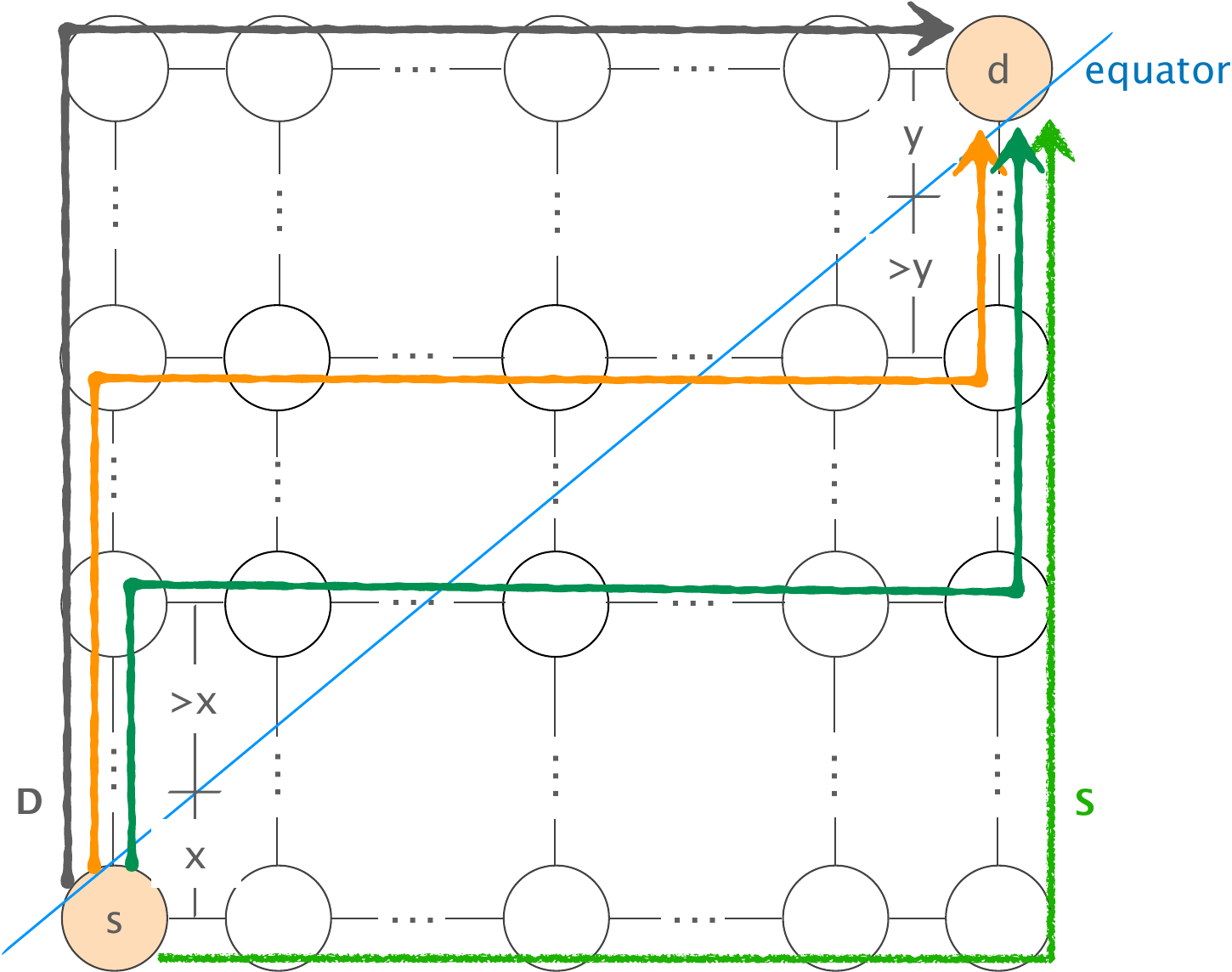}
	\caption{Candidate shortest paths in type-B grids moving in the same direction.}
	\label{fig:spacegrid-sp-ascending-crossing}
\end{figure}

The last two lemmas directly lead to Theorem~\ref{theo:sp-ascending-crossing}.
Figure~\ref{fig:spacegrid-sp-ascending-crossing} displays the final set of candidate shortest paths.

\Paste{theo:samedir-ascending-sp}
\begin{proof}
	The first condition is proved by Lemma~\ref{lem:ascending-grid-crossing-no-middle-vlink}, while Lemma~\ref{lem:ascending-grid-crossing-restrict-paths} ensures the final two conditions.
\end{proof}

\subsection{Grids moving in different directions}
\label{sapp:diffdir}

We refer to grids including nodes moving in different directions as \textit{grids moving in different directions}.
We also indicate the two parallels that are the closest to respectively the north and south poles among those reached by any constellation's satellite as \textit{extreme parallels}.
By definition, grids moving in different directions cross at least one extreme parallel.

In the following subsections, we first consider grids close to only one extreme parallel, and then focus on grids with some satellites close to one extreme parallel and some close to the other extreme parallel.

\subsection*{\ref{sapp:diffdir}.1 Grids close to a single extreme parallel}

We say that a grid moving in different directions is \textit{close to a single extreme parallel} if
\begin{inparaenum}[(i)]
	\item  it includes satellites closer to one extreme parallel than to the equator, and
	\item both its source and destination rows are closer to the equator than to the other extreme parallel.
\end{inparaenum}
Intuitively, this means that some links in the grid are crossed by one extreme parallel and the entire grid is distant from the other extreme parallel.

In this section, we use the term extreme parallel to denote the one crossing the considered grid.

In grids close to a single extreme parallel, the closest cross-orbit links to the extreme parallel are the shortest links in their respective columns, as they are the farthest away from the equator.
We therefore reformulate Property~\ref{prop:hlink-length} as follows.

\begin{property}\label{prop:hlink-length-pole}
	If the midpoint of a cross-orbit link h1 is closer to the extreme parallel than the midpoint of another cross-orbit link h2, then h1 is shorter than h2.
\end{property}

Since the equator is parallel to the extreme parallel, we also reformulate Properties~\ref{prop:descending-grids-hlink-sourcecol},~\ref{prop:descending-grids-hlink-destcol},~\ref{prop:ascending-grids-hlink-destcol} and~\ref{prop:ascending-grids-hlink-sourcecol} as follows.

\begin{property}\label{prop:descending-grids-pole-hlink-sourcecol}
	In a type-A grid, when a cross-orbit link $l_1$ is not between the extreme parallel and $row(d)$, then $l_1$ is closer to the extreme parallel than any other link $l_s$ such that $row(l_1) = row(l_s)$ and $col(l_s)$ is closer to $col(s)$ than $col(l_1)$.
\end{property}

\begin{property}\label{prop:descending-grids-pole-hlink-destcol}
	In a type-A grid, when a cross-orbit link $l_1$ is not between the extreme parallel and $row(s)$, then $l_1$ is closer to the extreme parallel than any other link $l_d$ such that $row(l_1) = row(l_d)$ and $col(l_d)$ is closer to $col(d)$ than $col(l_1)$.
\end{property}

\begin{property}\label{prop:ascending-grids-pole-hlink-destcol}
	In a type-B grid, when a cross-orbit link $l_1$ is not between the extreme parallel and $row(d)$, $l_1$ is closer to the extreme parallel than any other link $l_d$ such that $row(l_1) = row(l_d)$ and $col(l_d)$ is closer to $col(d)$ than $col(l_1)$.
\end{property}

\begin{property}\label{prop:ascending-grids-pole-hlink-sourcecol}
	In a type-B grid, when a cross-orbit link $l_1$ is not between the extreme parallel and $row(s)$, $l_1$ is closer to the extreme parallel than any other link $l_s$ such that $row(l_1) = row(l_s)$ and $col(l_s)$ is closer to $col(s)$ than $col(l_1)$.
\end{property}

\subsection*{Type-A grids}

\begin{figure}[t]
	\centering
	\includegraphics[width=.8\columnwidth]{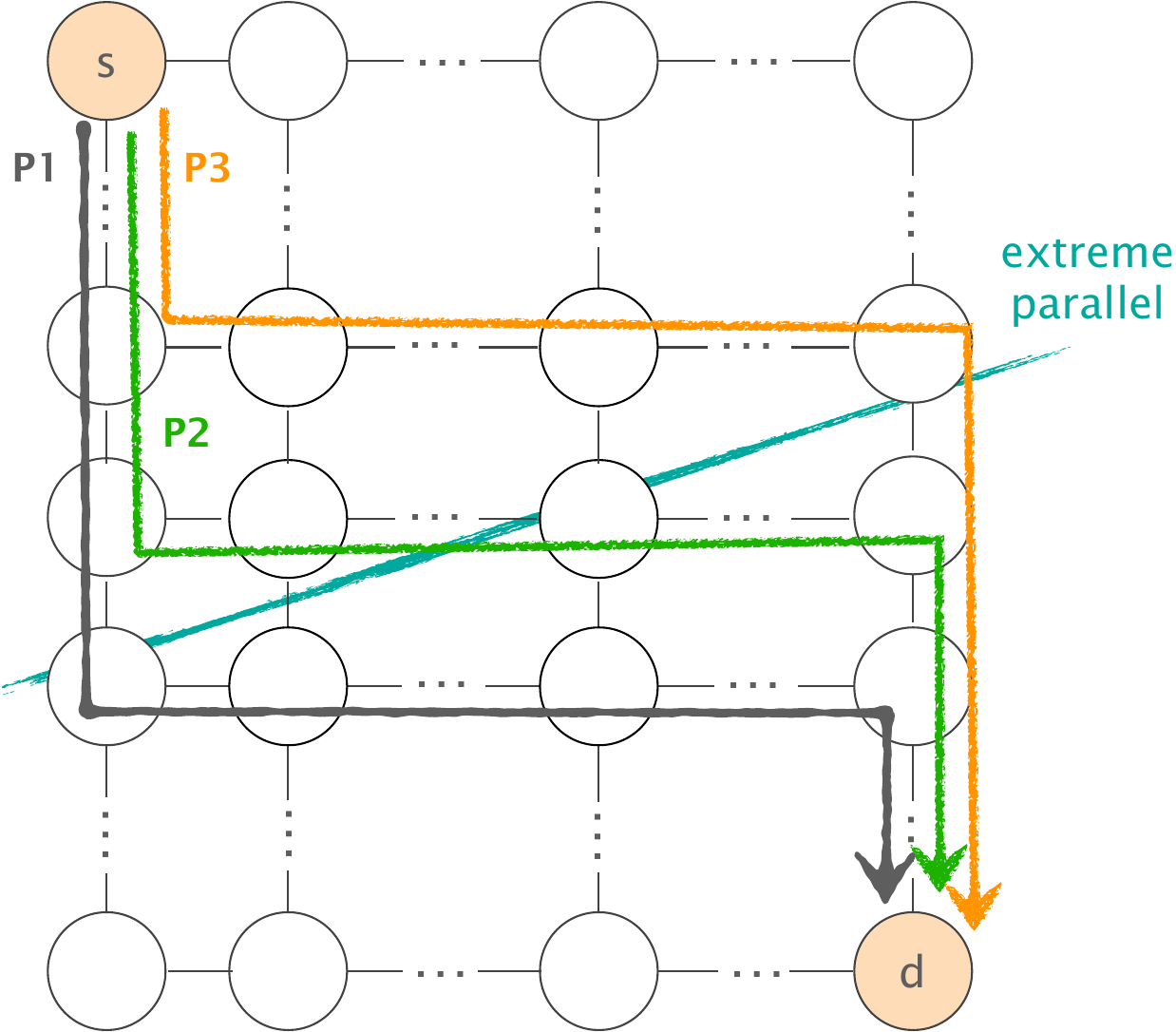}
	\caption{Candidate shortest paths when a type-A grid close to a single extreme parallel.}
	\label{fig:spacegrid-descending-crossing-pole}
\end{figure}

Throughout this section, we denote the set of candidate shortest paths in a type-A grid close to a single extreme parallel at a time $t$ as $\mathcal{S}_t$.
Figure~\ref{fig:spacegrid-descending-crossing-pole} visualizes $\mathcal{S}_t$ in an example grid.

Paths in $\mathcal{S}_t$ have two properties.
First, every path in $\mathcal{S}_t$ can be written as the concatenation of a possibly empty sequence of intra-orbit links, an entire row, and another possibly empty sequence of intra-orbit links.
Second, if we define $a$ and $b$ as the rows including the cross-orbit links closest to the extreme parallel among those in $col(s)$ and $col(d)$, then every row between $a$ and $b$, including both $a$ and $b$, is crossed by one path in $\mathcal{S}_t$.

An immediate consequence of the above definition is that every cross-orbit link which is the closest to the extreme parallel in its column is included in one path in $\mathcal{S}_t$.

We now prove that the shortest path from $s$ to $d$ in the grid at time $t$ is an element of the set $\mathcal{S}_t$.

\begin{mylemma}\label{lem:no-zigzag-descending-grid-wrapping}
	For any type-A grid moving in different directions and close to a single extreme parallel, the shortest path from $s$ to $d$ includes only cross-orbit links in the same row.
\end{mylemma}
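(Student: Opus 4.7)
The plan is to argue by contradiction. Suppose the shortest $s$-$d$ path $Q$ contains cross-orbit links from at least two distinct rows; my goal is to construct a strictly shorter $s$-$d$ path, contradicting the optimality of $Q$.

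First, I would isolate a minimal zig-zag. Lemmas~\ref{lem:sp-no-step-down} and~\ref{lem:sp-no-step-aside} guarantee that $Q$ is both row- and column-monotone, so I can extract a sub-path $Q' = H_1\, V\, H_2$, where $H_1$ is a non-empty sequence of cross-orbit links in some row $R_1$, $H_2$ is a non-empty sequence of cross-orbit links in a distinct row $R_2$ that lies topologically at least as close to $row(d)$ as $R_1$, and $V$ is the (non-empty) sequence of intra-orbit links connecting them. Let $x$ and $y$ be the endpoints of $Q'$.

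Second, I would consider two canonical alternatives from $x$ to $y$, each using cross-orbit links from only one row: $Q_A$ stays on row $R_1$ for all cross-orbit links, continuing along $R_1$ up to the column of $y$ and then using intra-orbit links to reach $y$; $Q_B$ drops via intra-orbit links from $x$ down to $R_2$ first and then uses only $R_2$'s cross-orbit links to reach $y$. By construction, both $Q_A$ and $Q_B$ use the same number of (fixed-length) intra-orbit links as $Q'$ and the same number of cross-orbit links. Hence the comparison reduces to comparing the combined length of the cross-orbit links in $H_1 \cup H_2$ against the length of the corresponding cross-orbit sequence on $R_1$ alone or on $R_2$ alone.

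Third, I would argue that at least one of $Q_A$ or $Q_B$ is strictly shorter than $Q'$. Here I would invoke Property~\ref{prop:hlink-length-pole} (cross-orbit links closer to the extreme parallel are shorter) together with the reformulated type-A properties~\ref{prop:descending-grids-pole-hlink-sourcecol} and~\ref{prop:descending-grids-pole-hlink-destcol}, which constrain how cross-orbit-link lengths vary with column inside a row of a type-A grid. Intuitively, for each column in the span of $Q'$, one of $R_1$ or $R_2$ is at least as close to the extreme parallel as the other; leveraging the type-A ``tilt'' of the rows, I expect to show that this ``winning'' row can be chosen uniformly across the relevant columns, so that replacing $Q'$ with the uniform alternative strictly reduces the total length. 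Substituting this replacement into $Q$, and removing any induced loops via Lemma~\ref{lem:shortcut-loops}, then produces a contradictory shorter $s$-$d$ path.

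The main obstacle will be Step 3: cleanly certifying that one of $R_1$ or $R_2$ uniformly dominates the other in terms of proximity to the extreme parallel across the column span of $Q'$. This is where the type-A grid hypothesis and the single-extreme-parallel assumption really bite, since they fix the geometric ``tilt'' of the rows and rule out the awkward case in which the grid straddles both extreme parallels. I expect the delicate part to be a small case analysis on the position of $R_1, R_2$ relative to the extreme parallel and on how $H_1$, $V$, and $H_2$ sit with respect to $col(s)$ and $col(d)$, mirroring in spirit the two-sided comparison used in Lemma~\ref{lem:ascending-grid-crossing-no-step-up} but with ``equator'' replaced everywhere by ``extreme parallel''.
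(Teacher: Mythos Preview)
Your overall plan---assume a zig-zag, use the monotonicity Lemmas~\ref{lem:sp-no-step-down} and~\ref{lem:sp-no-step-aside} to orient it, and then exploit the type-A ``pole'' Properties~\ref{prop:descending-grids-pole-hlink-sourcecol}--\ref{prop:descending-grids-pole-hlink-destcol} to produce a strictly shorter alternative---is exactly the paper's strategy, and your closing remark that the argument should mirror Lemma~\ref{lem:ascending-grid-crossing-no-step-up} with ``equator'' replaced by ``extreme parallel'' is precisely right. Two points of divergence are worth flagging.

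First, the paper takes a finer minimal zig-zag than you do: it keeps only a \emph{single} cross-orbit link $h_1$ (the last one before the intra-orbit segment $V$) and a \emph{single} cross-orbit link $h_2$ (the first one after $V$), so that $col(h_1)$ and $col(h_2)$ are adjacent. Your alternatives $Q_A$ and $Q_B$ then reduce to the two local swaps $h_2\mapsto a_1$ and $h_1\mapsto a_2$, where $a_1\in row(h_1)\cap col(h_2)$ and $a_2\in row(h_2)\cap col(h_1)$. Working with full sequences $H_1,H_2$ only adds bookkeeping without helping.

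Second---and this is the real content---the paper does \emph{not} try to establish uniform row domination, and I would not expect that claim to hold: when the extreme parallel passes between $R_1$ and $R_2$ within the column span of $Q'$, neither row need be uniformly closer to it. Instead, the paper runs the four-link inequality chain directly (exactly as in Lemma~\ref{lem:ascending-grid-crossing-no-step-up}). Optimality of $Q$ forces $len_t(h_1)\le len_t(a_2)$ and $len_t(h_2)\le len_t(a_1)$; Property~\ref{prop:descending-grids-pole-hlink-destcol} then yields $len_t(a_2)<len_t(h_2)$, so $len_t(h_1)<len_t(a_1)$; Property~\ref{prop:descending-grids-pole-hlink-sourcecol} now forces $a_1$ to lie between the extreme parallel and $row(d)$, hence between the extreme parallel and $h_2$, giving $len_t(a_1)<len_t(h_2)$ and the contradiction. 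So drop the uniform-domination heuristic and carry out the chain argument you already anticipated; that is the missing Step~3.
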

\begin{proof}
	Assume by contradiction that at a given time $t$, the shortest path from $s$ to $d$ is a path $Z$ which includes at least two cross-orbit links belonging to two different rows. 

	By hypothesis, $Z$ must then include at least one sub-path $Z_j$ concatenating:
	a cross-orbit link $h_1 = (x_1,y_1)$;
	a cross-orbit link $h_2 = (x_2,y_2)$, such that $row(h1) \neq row(h_2)$; and
	a non-empty sequence $V_z$ of intra-orbit links connecting $y_1$ with $x_2$.
	Figure~\ref{fig:spacegrid-descending-pole-nozigzag} visualizes $Z_j$.

	For $Z$ to be the shortest path, $row(h_1)$ must be closer to the source row than $row(h_2)$ by Lemma~\ref{lem:sp-no-step-down}; also, $col(h_1)$ must be closer to the source column than $col(h_2)$ by Lemma~\ref{lem:sp-no-step-aside}.

	Let $a_1$ be the cross-orbit link in $row(h_1)$ and $col(h_2)$; also, let $a_2$ be the cross-orbit link in $row(h_2)$ and $col(h_1)$.
	The following properties hold.

	\begin{itemize}
		\item 
		We must have that $len_t(h_1) \leq len_t(a_2)$.
		Otherwise, if $h_1$ is longer than $a_2$, $Z_j$ is not the shortest path from $x_1$ to $y_2$.
		Consider indeed the path $Q_j$ where the sequence of intra-orbit links from $x_1$ to an extreme of $a_2$ is concatenated with $a_2$ and $h_2$: $Q_j$ is shorter than $Z_j$ because it includes the same number of fixed-length intra-orbit links as $Z_j$, the same link $h_2$, and $a_2$ instead of $h_1$. This would imply that $Z$ is not the shortest path because we can obtain a shorter path by replacing $Z_j$ with $Q_j$, yielding a contradiction.

		\item For a reason similar to the above one, we must have $len_t(h_2) \leq len_t(a_1)$.

		\item 
		For $len_t(h_1) \leq len_t(a_2)$ to hold, $a_2$ cannot be between the extreme parallel  and $h_1$; otherwise, $a_2$ would be closer to the extreme parallel than $h_1$, and hence shorter than $h_1$, by Property~\ref{prop:hlink-length-pole}.
		If $a_2$ is not between the extreme parallel  and $h_1$, it is also not between the extreme parallel and the source row. In this case, Property~\ref{prop:descending-grids-pole-hlink-destcol} then states that $len_t(a_2) < len_t(h_2)$.
	\end{itemize}

	Combining those observations on the considered link lengths, we have:
	$len_t(h_1) \leq len_t(a_2) < len_t(h_2) \leq len_t(a_1)$.
	That is, $len_t(h_1) < len_t(a_1)$.

	If $len_t(h_1) < len_t(a_1)$, then $a_1$ must be between the extreme parallel and the destination row.
	Indeed, Property~\ref{prop:descending-grids-pole-hlink-sourcecol} implies that if $a_1$ is not between the extreme parallel and the destination row, then $a_1$ is shorter than $h_1$.

	Since $row(h_2)$ is closer to $row(d)$ than $row(h_1) = row(a_1)$, then $a_1$ must also be between the extreme parallel and $h_2$. 
	This however implies that $len_t(a_1) < len_t(h_2)$ by Property~\ref{prop:hlink-length-pole}, which contradicts one of the observations above, and hence yields the statement.
\end{proof}

\begin{figure}[t]
	\centering
	\includegraphics[width=.45\columnwidth]{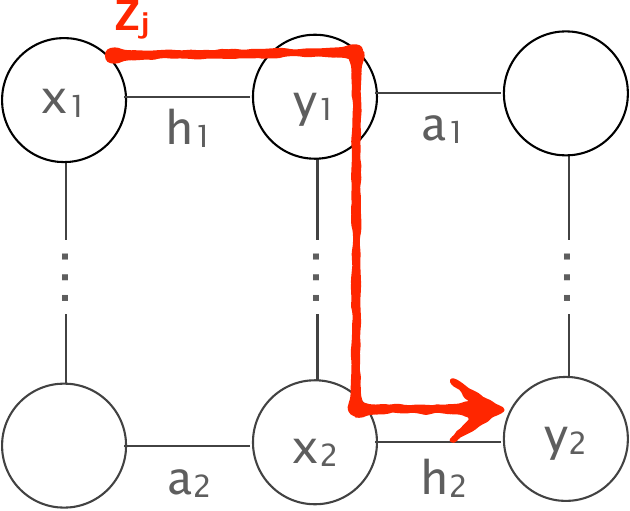}
	\caption{Visual support for the proof of Lemma~\ref{lem:no-zigzag-descending-grid-wrapping}.}
	\label{fig:spacegrid-descending-pole-nozigzag}
\end{figure}

\begin{mylemma}\label{lem:no-outside-rows-descending-grid-wrapping}
	For any type-A grid moving in different directions and close to a single extreme parallel, the shortest path from $s$ to $d$	never includes cross-orbit links in a row other than the ones traversed by paths in $\mathcal{S}_t$.
\end{mylemma}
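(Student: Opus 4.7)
The plan is to proceed by contradiction, reducing to the single-row case via Lemma~\ref{lem:no-zigzag-descending-grid-wrapping} and then performing a row-swap that produces a strictly shorter path. Suppose the shortest path $P$ from $s$ to $d$ uses cross-orbit links in a row $r$ not traversed by any path in $\mathcal{S}_t$. By Lemma~\ref{lem:no-zigzag-descending-grid-wrapping}, all of $P$'s cross-orbit links lie in the single row $r$, so $P$ consists of intra-orbit links from $s$ to an endpoint of row $r$, a full traversal of row $r$, and intra-orbit links from its other endpoint to $d$. Any candidate path $P' \in \mathcal{S}_t$ using a row $r' \in [a,b]$ has the same structure, so the comparison between $P$ and $P'$ reduces to comparing the number of intra-orbit hops and the total cross-orbit link length.

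I would then split into two sub-cases. If $r$ lies outside the row interval bounded by $row(s)$ and $row(d)$, then $P$ uses strictly more intra-orbit links than any path in $\mathcal{S}_t$; combined with a uniform bound on cross-orbit lengths (each intra-orbit link being at least as long as the differences between any two adjacent cross-orbit links in the grid, by Property~\ref{prop:vlink-length}), this yields a path in $\mathcal{S}_t$ strictly shorter than $P$, with loops removed via Lemma~\ref{lem:shortcut-loops} if needed. If instead $r$ lies between $row(s)$ and $row(d)$ but outside $[a,b]$, the intra-orbit counts match, and the key claim is that one of the boundary rows $a$ or $b$ \emph{dominates} $r$ column-by-column: its cross-orbit link is at least as close to the extreme parallel as $r$'s link in every column. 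By Property~\ref{prop:hlink-length-pole}, such dominance implies a strictly shorter total cross-orbit length for $P'$, contradicting the minimality of $P$.

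The main obstacle is establishing the column-by-column dominance. For the boundary columns $col(s)$ and $col(d)$, the definitions of $a$ and $b$ give the desired comparison immediately. For intermediate columns, I would propagate the comparison by chaining Properties~\ref{prop:descending-grids-pole-hlink-sourcecol} and~\ref{prop:descending-grids-pole-hlink-destcol} — the type-A structural properties stating that, within a fixed row on the appropriate side of the extreme parallel, cross-orbit links vary monotonically in distance-to-extreme-parallel as one moves from $col(s)$ toward $col(d)$ (or vice versa). The delicate point is that $a$ and $b$ can differ, so the choice of dominating row in $\{a,b\}$ must depend on which side of $[a,b]$ the row $r$ sits on; a careful case analysis, symmetric in $s$ and $d$ and structurally mirroring the proof of Lemma~\ref{lem:no-zigzag-descending-grid-wrapping}, should cover both cases. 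Once the dominance holds globally, Property~\ref{prop:hlink-length-pole} closes the argument.
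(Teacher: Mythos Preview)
Your second sub-case is essentially the paper's argument: reduce to a single row via Lemma~\ref{lem:no-zigzag-descending-grid-wrapping}, pick a comparison row from the boundary of $[a,b]$ depending on which side the offending row $r$ sits on, and show column-by-column dominance using Properties~\ref{prop:descending-grids-pole-hlink-sourcecol} and~\ref{prop:descending-grids-pole-hlink-destcol} together with Property~\ref{prop:hlink-length-pole}. The paper phrases the case split as ``all links of $o$ between $row(s)$ and the extreme parallel'' versus ``between $row(d)$ and the extreme parallel,'' but this coincides with your ``which side of $[a,b]$'' split once you observe that no link of $o$ can be crossed by the extreme parallel (otherwise $o$ would already be in $[a,b]$).

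Your first sub-case, however, is vacuous and mis-argued. In any $s$--$d$ grid, $s$ and $d$ are opposite corners, so $row(s)$ and $row(d)$ are the two extremal rows of the grid; every row of the grid lies between them. There is no row $r$ ``outside the interval bounded by $row(s)$ and $row(d)$,'' so this case never arises. Moreover, the justification you offer for it is incorrect: Property~\ref{prop:vlink-length} only asserts that all intra-orbit links have equal length; it says nothing about intra-orbit links bounding differences of adjacent cross-orbit links. Since the case is empty this does not break the proof, but you should drop it and, if anything, note explicitly why all the work lives in your second sub-case.
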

\begin{proof}
	Assume by contradiction that at a given time $t$, the shortest path from $s$ to $d$ is a path $O$ which includes a sequence of cross-orbit links belonging to a row $o$ not crossed by any path in $\mathcal{S}_t$. By Lemma~\ref{lem:no-zigzag-descending-grid-wrapping}, $O$ must include all the cross-orbit links in row $o$.

	Remember that the paths in $\mathcal{S}_t$ cover all the rows between $a$ and $b$ (included): rows $a$ and $b$ cross the closest cross-orbit link to the extreme parallel in $col(s)$ and $col(d)$ respectively. 
	In other words, row $o$ cannot include any link crossed by the extreme parallel.

	This implies that only two possibilities exist for row $o$.
	The first possibility is that all the links in $o$ are between $row(s)$ and the extreme parallel: for example, any row between the source row and the one crossed by P3 in Figure~\ref{fig:spacegrid-descending-crossing-pole} would belong to this case.
	The second possibility is that all the links in $o$ are between $row(d)$ and the extreme parallel: for example, any row between the destination row and the one crossed by P1 in Figure~\ref{fig:spacegrid-descending-crossing-pole} would fall in this second category.

	\begin{figure}[t]
		\centering
		\includegraphics[width=\columnwidth]{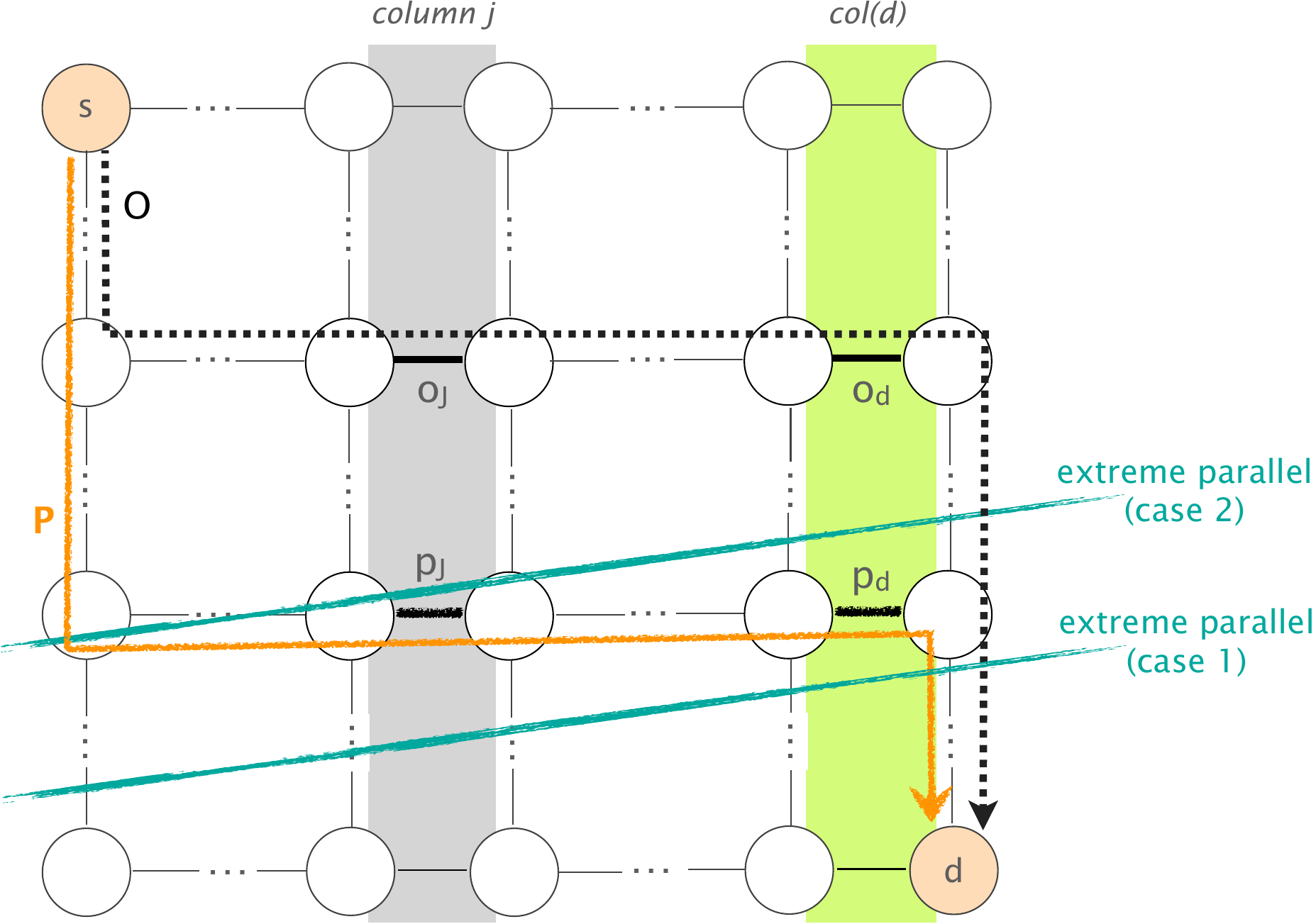}
		\caption{Visual support for the proof of Lemma~\ref{lem:no-outside-rows-descending-grid-wrapping}.}
		\label{fig:spacegrid-descending-pole-nooutrow}
	\end{figure}

	Assume for now that all the links in $o$ are between $row(s)$ and the extreme parallel, with possibly $o = row(s)$.
	Refer to Figure~\ref{fig:spacegrid-descending-pole-nooutrow} for a visualization of the following notation.

	Let $o_d$ be the cross-orbit link in $col(d)$ traversed by $O$.
	By definition, $\mathcal{S}_t$ contains a path $P$ including a cross-orbit link $p_d$ in $col(d)$ which is closer to the extreme parallel than $o_d$.
	By Property~\ref{prop:hlink-length-pole}, we then have $len_t(p_d) < len_t(o_d)$.

	Consider now any other column $j \neq col(d)$.
	Let $o_j$ be the link in column $j$ included in $O$, and $p_j$ be the link in column $j$ included in $P$.

	Note already that since $len_t(p_d) < len_t(o_d)$, $o_d$ cannot be between the extreme parallel and $p_d$. 
	Also, $row(p_d)$ is closer to $row(d)$ than $row(o_d)$ by assumption on $o$.

	Those two conditions imply that $o_d$ is not between the extreme parallel and the destination row.
	Since $row(o_d) = row(o_j)$ and $col(o_j)$ is closer to the source column than $col(o_d)$, Property~\ref{prop:descending-grids-pole-hlink-sourcecol} then asserts that $o_d$ is closer to the extreme parallel than $o_j$, and hence $len_t(o_d) < len_t(o_j)$.

	We then have the following two cases.
	\begin{itemize}
		\item if $p_j$ is not between the extreme parallel and the destination row (i.e., see case 1 in Figure~\ref{fig:spacegrid-descending-pole-nooutrow}), then the $p_j$ must be between the extreme parallel and the source row. Since $row(o_j)$ is closer to $row(s)$ than $row(p_j)$ by assumption on $o$, $p_j$ must be closer to the extreme parallel than $o_j$, hence $len_t(p_j) < len_t(o_j)$.

		\item if $p_j$ is between the extreme parallel and the destination row (i.e., see case 2 in Figure~\ref{fig:spacegrid-descending-pole-nooutrow}), then $p_j$ cannot be between the extreme parallel and the source row. Property~\ref{prop:descending-grids-pole-hlink-destcol} then implies that $p_j$ is closer to the extreme parallel than $p_d$, and hence $len_t(p_j) < len_t(p_d)$.
		Since we have already shown that $len_t(p_d) < len_t(o_d)$ and $len_t(o_d) < len_t(o_j)$, 
		we have that $len_t(p_j) < len_t(o_j)$.
	\end{itemize}

	In both cases, $p_j$ is shorter than $o_j$.
	This means that for every column, $P$ traverses a cross-orbit link shorter than the corresponding one included in $O$.

	Since all paths in $\mathcal{S}_t$ include a minimal set of fixed-size intra-orbit links, we conclude that $P$ is shorter than $O$, contradicting the hypothesis that $O$ is the shortest path.

	The statement follows by noting that a symmetric argument holds for paths not in $\mathcal{S}_t$ that cross rows between the extreme parallel and the destination row.
\end{proof}

\Paste{theo:diffdir-descending}
\begin{proof}
	The statement claims that the shortest path is an element of $\mathcal{S}_t$.
	We prove it by considering any path $Q$ which is not part of $\mathcal{S}_t$.
	For $Q$, one of the following cases must hold.
	\begin{itemize}
		\item $Q$ crosses links in two or more rows. Lemma~\ref{lem:no-zigzag-descending-grid-wrapping} then ensures that $Q$ cannot be the shortest path.
		\item $Q$ crosses a row that is not traversed by any path in $\mathcal{S}_t$. Lemma~\ref{lem:no-outside-rows-descending-grid-wrapping} proves that $Q$ cannot be the shortest path.
	\end{itemize}

	In both cases, $Q$ cannot be the shortest path from $s$ to $d$, yielding the statement.
\end{proof}

\subsection*{Type-B grids}

Consider again the set $\mathcal{F}_t$ defined in \S\ref{app:samedir}.
As a reminder, $\mathcal{F}_t$ is the set of cross-orbit links that are the farthest one from the equator in their respective columns.
Note that for grids moving in different directions and close to a single extreme parallel, $\mathcal{F}_t$ can also be defined as the set of cross-orbit links closest to the extreme parallel in their respective columns.
Contrary to grids moving in the same direction, $\mathcal{F}_t$ is not guaranteed to include only source-row and destination-row links -- it may actually contain only middle links.

We now prove that at any time $t$, any type-B grid moving in different direction and close to a single extreme parallel admits a path $P$ from $s$ to $d$ crossing all and only the links in $\mathcal{F}_t$ plus a minimum number of intra-orbit links.
We then prove that $P$ is the shortest path in the grid at $t$.
An illustration of such a path $P$ is displayed in Figure~\ref{fig:spacegrid-ascending-crossing-pole-spshape}.

\begin{figure}[t]
	\centering
	\includegraphics[width=.8\columnwidth]{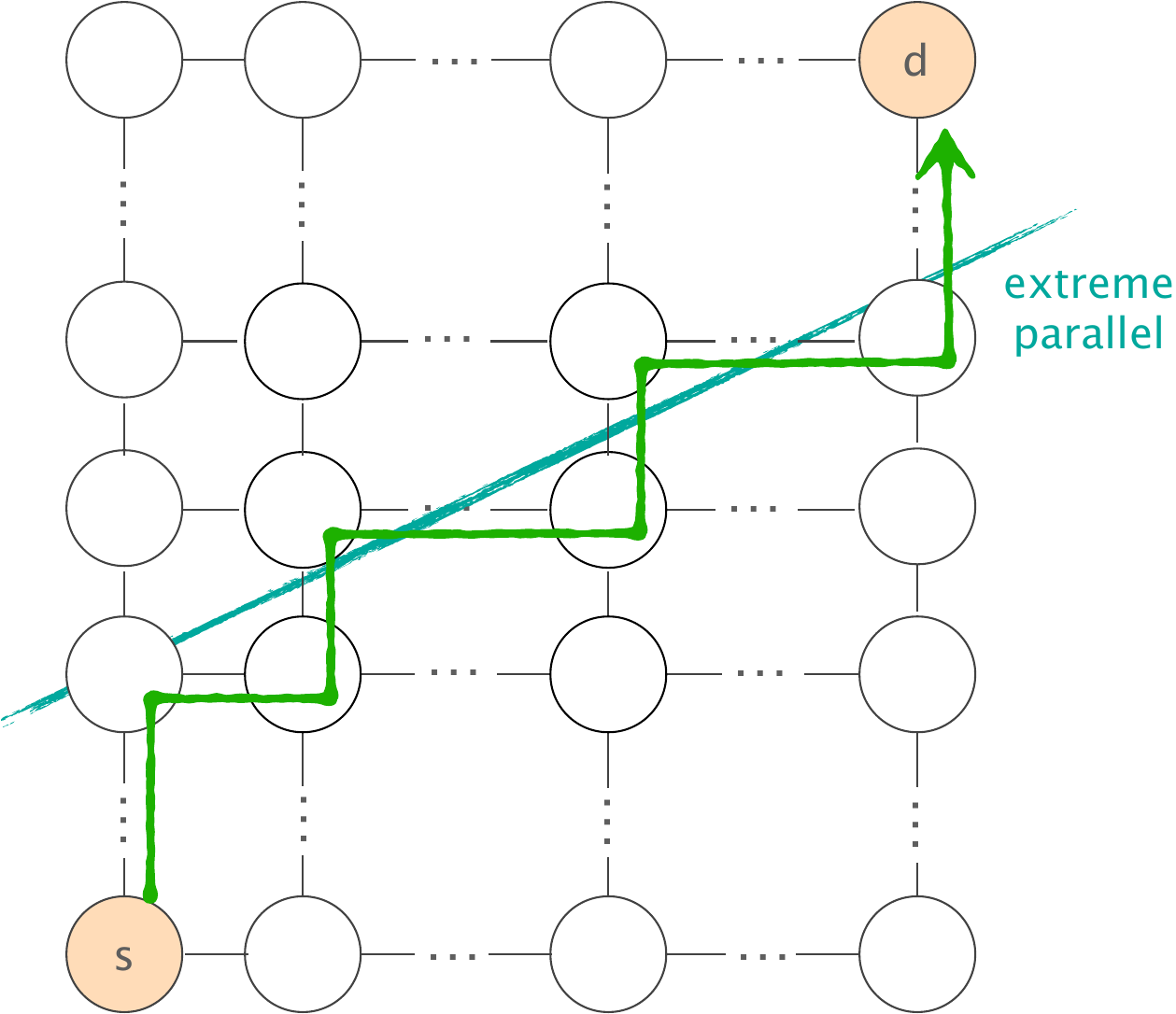}
	\caption{Visualization of the shortest path in a type-B grid close to a single extreme parallel.}
	\label{fig:spacegrid-ascending-crossing-pole-spshape}
\end{figure}

\begin{mylemma}\label{lem:ascending-pole-no-step-down}
	Consider any time $t$ when an ascending grid moves in different directions and is close to a single extreme parallel. For any two links $h_1$ and $h_2$ in $\mathcal{F}_t$, if $col(h_1)$ is closer to the source column than $col(h_2)$, then $row(h_1)$ is closer to the source row than $row(h_2)$.
\end{mylemma}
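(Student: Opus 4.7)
The plan is to argue by contradiction. Suppose $h_1, h_2 \in \mathcal{F}_t$ witness a failure of the claim, so that $col(h_1)$ is strictly closer to $col(s)$ than $col(h_2)$ but $row(h_1)$ is not strictly closer to $row(s)$ than $row(h_2)$. First I would introduce two ``swap'' links: $a$ at grid cell $(row(h_1), col(h_2))$ and $b$ at $(row(h_2), col(h_1))$; in the degenerate sub-case $row(h_1) = row(h_2)$ these collapse to $a = h_2$ and $b = h_1$. Writing $\rho(\cdot)$ for the distance of a cross-orbit link's midpoint from the extreme parallel at time $t$, the definition of $\mathcal{F}_t$ immediately yields $\rho(h_1) \leq \rho(b)$ (they share $col(h_1)$) and $\rho(h_2) \leq \rho(a)$ (they share $col(h_2)$).

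Next I would extract strict inequalities in the opposite direction from the type-B same-row monotonicity Properties~\ref{prop:ascending-grids-pole-hlink-destcol} and \ref{prop:ascending-grids-pole-hlink-sourcecol}. Comparing $h_1$ with $a$ in $row(h_1)$: since $col(h_1)$ is closer to $col(s)$ than $col(a) = col(h_2)$, Property~\ref{prop:ascending-grids-pole-hlink-destcol} applied with $l_1 = h_1$, $l_d = a$ gives $\rho(h_1) < \rho(a)$ whenever $h_1$ is not between the extreme parallel and $row(d)$; otherwise $row(h_1)$ sits on the $row(d)$-side of the extreme parallel, so $a$ cannot be between the extreme parallel and $row(s)$, and Property~\ref{prop:ascending-grids-pole-hlink-sourcecol} applied with $l_1 = a$, $l_s = h_1$ instead yields $\rho(a) < \rho(h_1)$. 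A symmetric dichotomy produces a strict inequality between $\rho(h_2)$ and $\rho(b)$ in $row(h_2)$. The argument then finishes by a short case analysis on the side of the extreme parallel occupied by $row(h_1)$ and $row(h_2)$: in each configuration I combine the within-row strict inequality with the two $\mathcal{F}_t$ bounds to derive a direct violation of one of them (e.g., deducing $\rho(b) < \rho(h_1)$ and thereby contradicting $h_1 \in \mathcal{F}_t$). The degenerate sub-case $row(h_1) = row(h_2)$ is immediate, since the within-row strict inequality applied to $h_1$ and $h_2$ directly violates one of the two $\mathcal{F}_t$ memberships.

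The hard part, I anticipate, is the mixed configuration in which the extreme parallel lies between $row(h_1)$ and $row(h_2)$. There the two within-row monotonicities point in opposite directions, so I must carefully pick which of $\{h_1, h_2, a, b\}$ plays the role of $l_1$ in each application of Properties~\ref{prop:ascending-grids-pole-hlink-destcol}--\ref{prop:ascending-grids-pole-hlink-sourcecol} so that the resulting inequalities interlock with, rather than reinforce, the $\mathcal{F}_t$ bounds. I also expect Assumption~\ref{assume:no-perpendicular-eq} to be invoked to discard borderline configurations where a link's midpoint lies exactly on the extreme parallel, which would degenerate several of the strict comparisons on which the contradiction hinges.
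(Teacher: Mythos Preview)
Your proposal is essentially the paper's proof: argue by contradiction, introduce the two ``swap'' cross-orbit links at $(row(h_1),col(h_2))$ and $(row(h_2),col(h_1))$, combine the $\mathcal{F}_t$-minimality bounds in each column with the type-B same-row monotonicity Properties~\ref{prop:ascending-grids-pole-hlink-destcol}--\ref{prop:ascending-grids-pole-hlink-sourcecol}, and close a contradictory cycle of inequalities. The paper organises the case split a bit differently---it first dispatches the two easy cases in which a swap link already lies between the corresponding $\mathcal{F}_t$ link and the extreme parallel (an immediate contradiction of $\mathcal{F}_t$-membership), and then in the remaining case chains four strict/weak inequalities $len(a_1)<len(h_1)\le len(a_2)<len(h_2)\le len(a_1)$---which sidesteps the ``mixed configuration'' bookkeeping you flag as the hard part; you may find that split cleaner than partitioning by which side of the extreme parallel each row lies on. The paper also reads ``closer'' non-strictly (so the negation is simply ``$row(h_2)$ strictly closer to $row(s)$''), making your degenerate sub-case and the appeal to Assumption~\ref{assume:no-perpendicular-eq} unnecessary.
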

\begin{proof}
	Assume by contradiction that $row(h_2)$ is closer to the source row than $row(h_1)$.
	Let $a_1$ be the cross-orbit link in $row(h_1)$ and $col(h_2)$, and let $a_2$ be the link in $row(h_2)$ and $col(h_1)$.
	We refer to Figure~\ref{fig:spacegrid-ascending-crossing-pole-lemma-support} for a visualization of the notation used in this proof.

	One of the following cases must hold.
	\begin{itemize}
		\item $a_1$ is between $h_2$ and the extreme parallel.
		In this case, then $a_1$ is closer to the extreme parallel than $h_2$, which contradicts the hypothesis that $h_2 \in \mathcal{F}_t$.
		\item $a_2$ is between $h_1$ and the extreme parallel.
		In this case, $a_2$ is closer to the extreme parallel than $h_1$, which contradicts the hypothesis that $h_1 \in \mathcal{F}_t$.
		\item $a_1$ is not between $h_2$ and the extreme parallel, and $a_2$ is not between $h_1$ and the extreme parallel.\\
		Since we assumed that $row(h_2)$ is closer to $row(s)$ than $row(h_1)$ and $row(a_1) = row(h_1)$, then $a_1$ cannot be between the extreme parallel and the source row. Property~\ref{prop:ascending-grids-pole-hlink-sourcecol} implies that $a_1$ is closer to the extreme parallel than $h_1$, and hence $len_t(a_1) < len_t(h_1)$.
		Similarly, $a_2$ cannot be between the extreme parallel and the destination row, which implies $len_t(a_2) < len_t(h_2)$, by Property~\ref{prop:ascending-grids-pole-hlink-destcol}.\\
		Also, since $h_1,h_2 \in \mathcal{F}_t$ by hypothesis, $h_1$ must be closer to the extreme parallel than $a_2$, and $h_2$ must be closer to the extreme parallel than $a_1$. We must therefore have $len_t(h_1) \leq len_t(a_2)$ and $len_t(h_2) \leq len_t(a_1)$.\\
		This set of inequalities generates a contradiction, as we have $len(a_1) < len(h_1) \leq len(a_2) < len(h_2) \leq len(a_1)$, that is, $len(a_1) < len(a_1)$.
	\end{itemize}

	In all the cases, we end up with a contradiction, which yields the statement.
\end{proof}

\begin{figure}[t]
	\centering
	\includegraphics[width=.95\columnwidth]{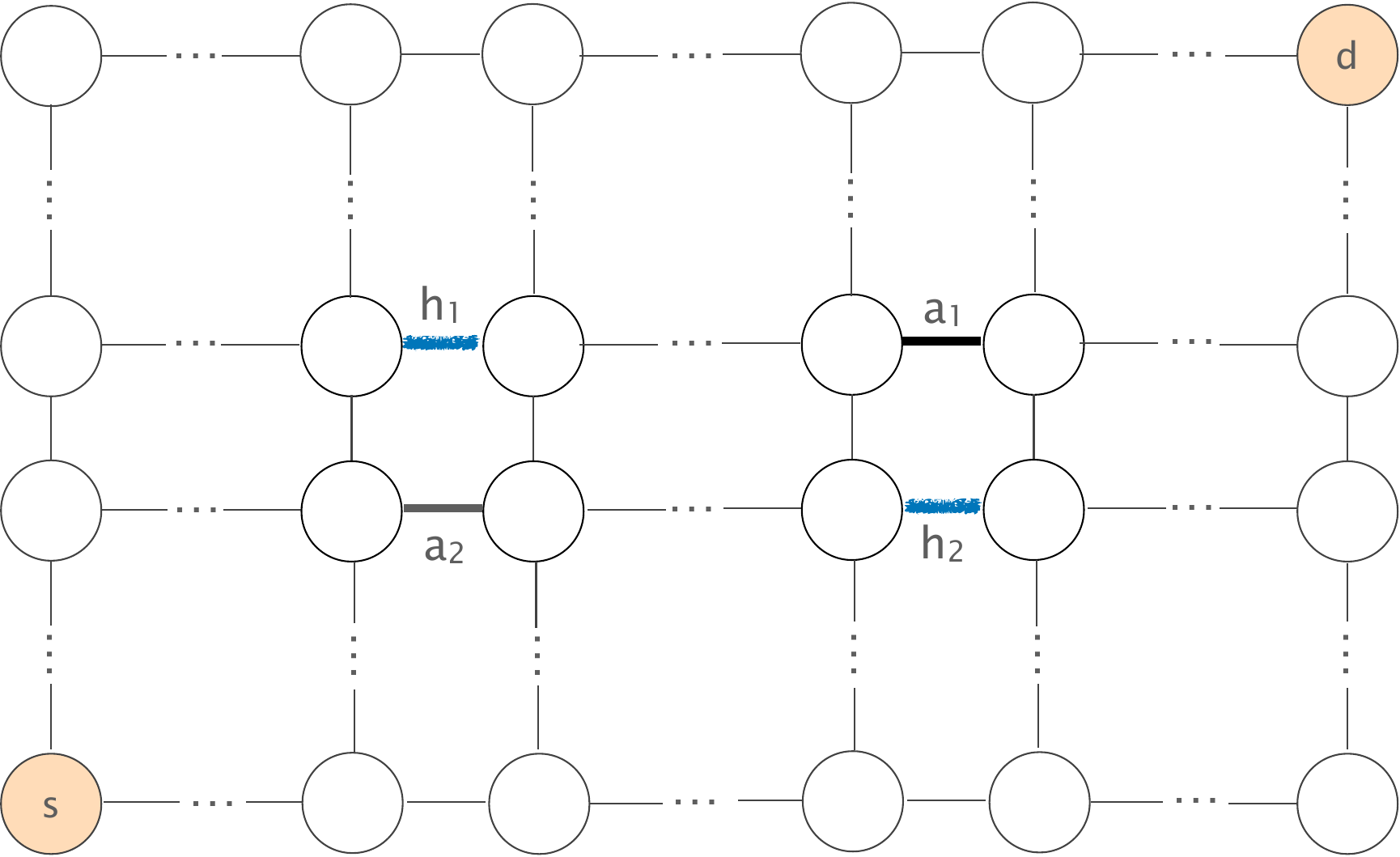}
	\caption{Visual support for the proof of Lemma~\ref{lem:ascending-pole-no-step-down}.}
	\label{fig:spacegrid-ascending-crossing-pole-lemma-support}
\end{figure}

\Paste{theo:diffdir-ascending}
\begin{proof}
	Let $P$ be the $s$-$d$ path crossing all and only the links in $\mathcal{F}_t$ plus minimal sequences of intra-orbit links connecting pairs of links in $\mathcal{F}_t$ belonging to adjacent columns. 

	Such a path $P$ is guaranteed to exist. Indeed, $\mathcal{F}_t$ contains exactly one cross-orbit link per column.
	Also, for any two links $l_1, l_2 \in \mathcal{F}_t$ such that $col(l_1)$ and $col(l_2)$ are adjacent, we can always connect $l_1$ and $l_2$ with a single non-loopy sequence of intra-orbit links belonging to the grid.

	We now show that $P$ has a minimal number of intra-orbit links.
	Let $[h_1,h_2,\dots,h_n]$ be the sequence of links in $\mathcal{F}_t$ ordered from the one in $col(s)$ to the one in $col(d)$.
	Lemma~\ref{lem:ascending-pole-no-step-down} states that for each $i = 1,\dots,n-1$ either $row(h_{i+1}) = row(h_{i})$ or $h_{i+1}$ is closer to the destination row than $h_i$. This means that $P$ always traverses intra-orbit links from a node closer to the source row to a node closer to the destination row. Hence, $P$ contains a number of intra-orbit links equal to the number of rows in the grid minus one, which is indeed minimal for any path from $s$ to $d$.

	The statement then follows by noting that $P$ crosses a set of cross-orbit links, which are minimal in number (as they are one per column) and are the shortest in their respective columns, by definition of $\mathcal{F}_t$.
\end{proof}

\subsection*{\ref{sapp:diffdir}.2 Grids close to both extreme parallels}

Not all the grids moving in different directions are close to a single extreme parallel.
For example, a grid can be crossed by both extreme parallels, at different links.
Even if the grid is crossed by a single extreme parallel, its source may be close to the north pole and its destination to the south pole.
In both cases, such a grid does not match the definition of grid close to a single extreme parallel.
We denote grids moving in different directions but not close to a single extreme parallel as grids close to both extreme parallels: they are the focus of this section.

Clearly, Properties~\ref{prop:descending-grids-pole-hlink-sourcecol},~\ref{prop:descending-grids-pole-hlink-destcol},~\ref{prop:ascending-grids-pole-hlink-destcol} and~\ref{prop:ascending-grids-pole-hlink-sourcecol} as well as the above lemmas and theorems built on top of those properties do not hold for entire grids close to both extreme parallels.

Although not directly applicable, our theory provides us with a possibility to compute paths in any grid close to both extreme parallels, by dividing it in two.
We can indeed partition any such grid into two components, such that together they cover the entire original grid and each component is close to one extreme parallel.

The following theorem shows that the shortest path in the original grid can then be computed starting from the candidate shortest paths in the two sub-grids.

\Paste{theo:diffdir-both-poles}
\begin{proof}
	Suppose that the $s$-$d$ grid $G$ close to both extreme parallels is partitioned in two sub-grids $G_1$ and $G_2$, each moving in different directions and close to a single extreme parallel. 
	Since they cover all the links in $G$, $G_1$ and $G_2$ must have a common border.

	Consider the shortest path $P$ within $G$ at any time $t$.
	Since $s$ and $d$ are opposite corners in any $s$-$d$ grid, $P$ must cross at least one node $x$ in the common border of $G_1$ and $G_2$.

	By a property of shortest paths, $P$ must be the concatenation of $P_1$ and $P_2$, where $P_1$ is the shortest path from $s$ to $x$, and $P_2$ is the shortest path from $x$ to $d$. 
	Since $P$ is contained in $G$, $P_1$ is a path internal to $G_1$, and $P_2$ is internal to $G_2$. 

	Assume by contradiction that $P_1$ crosses at least one cross-orbit link which is not in the set $\mathcal{C}_t$ of cross-orbit links considered for the shortest path computation in $G_1$ at time $t$.
	We now prove that this assumption leads to a contradiction.

	Let $(u,v)$ be the first cross-orbit link in $P_1$ which is not in $\mathcal{C}_t$.
	We can then write $P_1$ as the concatenation of three sub-paths: $C_1 = (s \dots u)$, $(u,v)$, and $R_1 = (v \dots d)$, where all the cross-orbit links in $C_1$ (if any) are elements of $\mathcal{C}_t$.

	We now show that there exists at least one link $(w,z)$ in the column of $(u,v)$, such that $(w,z)$ can be used to build a path $Q_1$ shorter than $P_1$.
	We have the following two cases.

	\begin{itemize}
		\item the row of $(u,v)$ is closer to the source row than any other link in $\mathcal{C}_t$ belonging to its column. In this case, we define $(w,z)$ as the link in $\mathcal{C}_t$ whose column is $col(u,v)$ and whose row is the closest to $row(s)$ among all the links in $\mathcal{C}_t$ and in $col(u,v)$. We also define $a$ as the first node in $row(w,z)$ crossed by $P_1$: $a$ must exist because $(w,z)$ is closer to the destination row than $(u,v)$.

		We then build $Q_1$ starting from $P_1$, and replacing the sub-path $P_a = (u\ v \dots a)$ with the path $Q_a$ that concatenates a minimal sequence of intra-orbit links from $u$ to $w$ with a sequence of cross-orbit links $(w\ z \dots a)$.

		Let's now compare the length of $P_a$ and $Q_a$ at $t$.
		By definition of candidate shortest paths in any grids moving in different directions and close to a single extreme parallel, $(w,z)$ is closer to the extreme parallel, and hence shorter, than $(u,v)$ at $t$. Indeed, by construction of the candidate shortest paths, $(w,z)$ is between $(u,v)$ and the extreme parallel if $G_1$ is a type-A grid, and $(w,z)$ is the closest link to the extreme parallel in its column if $G_1$ is a type-B grid. The same property holds for every pair of cross-orbit links $h_p$ and $h_q$ such that $h_p \in P_a$, $h_q \in Q_a$, and $h_p$ and $h_q$ belong to the same column. Since the number of intra-orbit links in $Q_a$ is minimal, we conclude that $Q_a$ is shorter than $P_a$.

		Consequently, since $P_1$ and $Q_1$ differ only in the sub-path from $u$ to $a$, $Q_1$ is shorter than $P_1$.

		\item the row of $(u,v)$ is closer to the destination row than any other link in $\mathcal{C}_t$. We then define $(w,z)$ as the link in $\mathcal{C}_t$ whose column is $col(u,v)$ and whose row is the closest to $row(d)$ among all the links in $\mathcal{C}_t$. 

		Since $(u,v)$ is defined as the first link in $P_1$ not belonging to $\mathcal{C}_t$ and shortest paths are guaranteed to always cross columns and rows closer and closer to the destination by Lemma~\ref{lem:sp-no-step-down} and~\ref{lem:sp-no-step-aside}, $P_1$ must cross $w$, and must include a sub-path $P_w = (w \dots u\ v)$ formed by a sequence of intra-orbit links followed by $(u,v)$.

		We build $Q_1$ starting from $P_1$, and replacing the sub-path $P_w$ with the path $Q_w$ that concatenates $(w,z)$ with a minimal sequence of intra-orbit links from $z$ to $v$.

		Let's now compare the length of $P_w$ and $Q_w$ at $t$.
		By definition of candidate shortest paths in any grid moving in different directions and close to a single extreme parallel, $(w,z)$ is closer to the extreme parallel, and hence shorter, than $(u,v)$ at $t$.
		Indeed, by construction of the candidate shortest paths, $(w,z)$ is between $(u,v)$ and the extreme parallel if the grid is a type-A grid, and $(w,z)$ is the closest link to the extreme parallel in its column if the grid is a type-B grid. Since $P_w$ and $Q_w$ include the same number of fixed-size intra-orbit links, then $Q_w$ is shorter than $P_w$.

		Consequently, since $P_1$ and $Q_1$ differ only in the sub-path from $u$ to $w$, $Q_1$ must be shorter than $P_1$.
	\end{itemize}

	Note that there is no other case because the candidate shortest paths include cross-orbit links in consecutive rows for type-A grids (see Theorem~\ref{theo:shortest-path-descending-grid-wrapping}) and only the cross-orbit links closest to the extreme parallel in their column for type-B grids (see Theorem~\ref{theo:shortest-path-ascending-grid-wrapping}).

	In both the above cases, we have built a path $Q_1$ from $s$ to $x$, shorter than $P_1$.	
	If $Q_1$ includes loops, Lemma~\ref{lem:shortcut-loops} ensures that we can build an even shorter path $Q_2$.
	The existence of a path shorter than $P_1$ contradicts the hypothesis that $P_1$ is the shortest path from $s$ to $x$.

	The statement then follows by noting that a similar argument holds if we assume by contradiction that $P_2$ includes at least one cross-orbit link not crossed by candidate shortest paths within $G_2$.
\end{proof}

\section{Fast Rerouting}
\label{app:frr}

We now prove effectiveness and efficiency of our fast rerouting scheme in the case of single-link and single-node failures.
We refer to Figure~\ref{alg:sat-logic} for a precise definition of the forwarding operations performed by every satellite in our scheme.

\Paste{theo:frr}
\begin{proof}
	Consider any packet $p$ sent by the source GS to a satellite $s$ with a tag list ${\rm T}_p$ 
	which encodes a path $P = (s \dots d)$.
	By hypothesis, $P$ crosses a failed link $(x,y)$.
	We therefore can write $P$ as $(s \dots x\ y \dots d)$, where possibly $x=s$ or $y=d$.

	Assume for now that $(x,y)$ is a cross-orbit link.
	By definition of satellites' forwarding operations in our scheme (see Figure~\ref{alg:sat-logic}), we have the following cases.

	\begin{itemize}
		\item the tag list ${\rm T}_p$ contains at least two unconsumed tags when $x$ decides the next-hop for $p$.

		Since $(x,y)$ is a cross-orbit link, the current tag $\tau_x$ in ${\rm T}_p$ must map to a subpath $P_x$ of $P$ including only cross-orbit links, and the next tag $\tau_{x+1}$ must map to a sequence $P_t$ of intra-orbit links.

		According to lines 10 and 11 in Figure~\ref{alg:sat-logic}, $x$ then forwards $p$ over an intra-orbit link $(x,u)$, without modifying the list of tags but decrementing the next unconsumed tag by one.
		Since $u$ is selected according the direction of $\tau_{x+1}$, $u$ must be topologically closer than $x$ to $d$ by Lemma~\ref{lem:sp-no-step-down}: the same lemma guarantees that $u$ is not the previous hop of $x$ in $P$.

		Since the tag list is unmodified, $u$ then forwards $p$ to its neighbor $v$ in the same orbit as $y$.
		Note that $(p,v)$ is working, given that $(x,y)$ is the only failed link by hypothesis. 
		Since $x$ does not modify the list of tags, $p$ is forwarded over cross-orbit links until it reaches a satellite $t$ in $P_t$, with $t=v$ if $y$ in $P_t$.
		From that point on, the packet is forwarded according to the original path, given that the list of tags is unmodified and the next unconsumed tag is decremented by one.

		Overall, $p$ is rerouted over a path $Q$ which is equal to $P$ except that $Q$'s sub-path from $x$ to $t$ consists of an intra-orbit link $(x,u)$ followed by a sequence of cross-orbit links $(u \dots t)$ instead of a sequence of cross-orbit links followed by one intra-orbit link. 

		This implies that the packet is delivered to $d$, the hop stretch is zero in this case, and the delay increase is equal to the difference between sequences of cross-orbit links in adjacent chains.

		\item the tag list ${\rm T}_p$ contains only one unconsumed tag when $x$ decides the next-hop for $p$. 

		Since $(x,y)$ is a cross-orbit link, the current tag $\tau_x$ in ${\rm T}_p$ must map to a subpath $P_x$ of $P$ including only cross-orbit links. Therefore, the last consumed tag must have encoded a sequence of intra-orbit links.

		According to line 13 in Figure~\ref{alg:sat-logic}, $x$ sends $p$ over an intra-orbit link $(x,u)$.
		Once again, Lemma~\ref{lem:sp-no-step-down} guarantees that $u$ is not the previous hop of $x$ in $P$.

		Additionally, line 14 in Figure~\ref{alg:sat-logic} ensures that a tag is appended to ${\rm T}_p$ so that $p$ is eventually forwarded to a node in the row of $x$, which is also the row of $d$.

		Overall, $p$ is rerouted over a path $Q$ which is equal to $P$ from $s$ to $x$, and replaces the last sequence $(x \dots d)$ of cross-orbit links with an intra-orbit link $(x,u)$ followed by a sequence of cross-orbit links terminating at a node $t$ in the same orbit of $d$ plus an intra-orbit link $(t,d)$. 

		The packet is therefore delivered to $d$, with a stretch of two hops, and a delay increase equal to the length of two intra-orbit links.
	\end{itemize}

	If $(x,y)$ is an intra-orbit link, a symmetric argument guarantees that the rerouted packet is delivered to $d$ either with
	\begin{inparaenum}[(i)]
		\item no hop stretch and no delay increase, given that all intra-orbit links always have the same length, or
		\item a stretch of two hops and a delay increase equal to two cross-orbit links.
	\end{inparaenum}

	The statement follows by noting that the path $Q$ followed by the rerouted packet does not include $y$ in any of the above cases: this implies that the rerouting path $Q$ followed by $p$ would not include any failed link or node even if the entire node $y$ is failed.
\end{proof}

\end{document}